\documentclass[a4paper,UKenglish,cleveref, autoref, thm-restate]{lipics-v2021}

\hideLIPIcs

\usepackage{breqn}  
\usepackage{mathtools}
\usepackage{csquotes}
\usepackage{enumerate}
\usepackage{amsthm}
\usepackage{amsmath}
\usepackage{amsfonts}
\usepackage{amssymb}
\usepackage{faktor}
\usepackage{bm}
\usepackage{booktabs}
\usepackage{graphicx}
\usepackage[clock]{ifsym}

\title{On Variable-Bounded Non-Linear Expansions of Presburger Arithmetic}

\author{Piotr {Bacik}}{University of Oxford, UK \and Max Planck Institute for Software Systems, Saarland Informatics Campus, Germany}{piotr.bacik@stcatz.ox.ac.uk}{https://orcid.org/0009-0006-0248-3204}{}

\author{Joris {Nieuwveld}}{University of Oxford, UK }{jnieuwve@mpi-sws.org}{https://orcid.org/0009-0002-0339-1230}{}

\author{Jo\"el {Ouaknine}}{Max Planck Institute for Software Systems, Saarland Informatics Campus, Germany}{joel@mpi-sws.org}{https://orcid.org/0000-0003-0031-9356}{}

\author{Mihir {Vahanwala}}{Max Planck Institute for Software Systems, Saarland Informatics Campus, Germany}{mvahanwa@mpi-sws.org }{https://orcid.org/0009-0008-5709-899X}{}

\author{Madhavan {Venkatesh}}{Max Planck Institute for Software Systems, Saarland Informatics Campus, Germany}{madhavan@mpi-sws.org}{https://orcid.org/0000-0002-2698-367X}{}

\author{Emil Rugaard {Wieser}}{Max Planck Institute for Software Systems, Saarland Informatics Campus, Germany}{ewieser@mpi-sws.org}{https://orcid.org/0009-0000-4884-3953}{}

\authorrunning{P. Bacik et al.}

\Copyright{Piotr Bacik, Joris Nieuwveld, Joël Ouaknine, Mihir Vahanwala, Madhavan Venkatesh, Emil Rugaard Wieser}

\date{September 2025}

\usepackage{accents}

\newlength{\dhatheight}

\usepackage{bbm}

\nolinenumbers

\ccsdesc[500]{Theory of computation~Logic and verification}

\acknowledgements{P. Bacik is supported by EPSRC grant EP/X033813/1. P. Bacik, J. Ouaknine, M. Vahanwala, M. Venkatesh, and E. R. Wieser are supported by ERC grant DynAMiCs (101167561) and DFG grant 389792660 as part of \href{https://perspicuous-computing.science}{TRR 248}. J. Nieuwveld is supported by the Glasstone Benefaction, University of Oxford [Violette and Samuel Glasstone Research Fellowships in Science 2025]. J. Ouaknine is also affiliated with Keble College, Oxford as
\href{https://www.emmy.network/}{emmy.network} Fellow.}

\category{}

\relatedversion{}

\newcommand{\Q}{\mathbb{Q}}

\counterwithin{equation}{section}

\EventEditors{Claudia Faggian and Joost-Pieter Katoen}
\EventNoEds{2}
\EventLongTitle{41st Annual Symposium on Logic in Computer Science (LICS 2026)}
\EventShortTitle{LICS 2026}
\EventAcronym{LICS}
\EventYear{2026}
\EventDate{July 20--23, 2026}
\EventLocation{Lisbon, Portugal}
\EventLogo{}
\SeriesVolume{380}
\ArticleNo{73}

\begin{document}
\keywords{Presburger arithmetic, Diophantine equations, decidability, B\"uchi's conjecture}

%

\maketitle
\begin{abstract}
We consider expansions of Presburger arithmetic with families of
monadic polynomial predicates. (Examples of such predicates are the
set of perfect squares, or the set of integers of the form $2n^3-5n+3$,
etc.) Although the full attendant first-order theories are well known
to be undecidable, very
little is known when one restricts the number of variables. In the case of
single-variable theories, we obtain positive results for the following
two families of predicates: (i) for perfect fixed powers, decidability of
the corresponding theory follows from the solvability of hyperelliptic
Diophantine equations; and (ii) for polynomials of degree at most
three, we establish decidability by relying on the low genus of the
resulting algebraic curves. Finally, we discuss limitations and
hardness results (via encodings of longstanding open Diophantine
problems) as soon as any of the above restrictions are lifted.
\end{abstract}

\section{Introduction}
Presburger arithmetic was introduced and proven decidable in 1929 as a preliminary step towards Hilbert's goal of mechanising all of number theory, and in particular algorithmically determining the satisfiability of arbitrary Diophantine equations. Unfortunately, the famous works of G\"odel, Church, and Turing in the 1930s brought the Hilbert program to a screeching halt, and Matiyasevich dealt the final blow in 1970 by proving, building on a large body of work by himself and others, that solving polynomial equations over the integers was in general algorithmically infeasible; in other words, that Hilbert's tenth problem was undecidable.

Somewhat paradoxically, the demise of Hilbert's program did not dampen the scientific community's appetite for investigating the decidability of various logical theories of arithmetic and beyond: research into non-linear expansions and fragments of Presburger arithmetic, for example, remains a topic of active interest; see, for instance, the surveys~\cite{point2000decidable,bes2002survey,Haase18}, as well as the recent papers~\cite{Mansutti23,Mansutti24,Mansutti25}. A 2022 breakthrough by Hieronymi and Schulz shows that expanding Presburger arithmetic by two or more power predicates over multiplicatively independent bases leads to undecidability~\cite{HS22}; in other words, for example, 
the first-order theory $\mathit{FO} \langle   \mathbb{Z}; 0, 1, +, <, 2^{\mathbb{N}}, 3^{\mathbb{N}}\rangle$ is undecidable, where $2^{\mathbb{N}}$ and $3^{\mathbb{N}}$ stand for the sets of powers of $2$ and powers of $3$, respectively. Decidability can however be recovered when restricting to the \emph{existential} fragment, viz.\ $\exists \mathit{FO} \langle   \mathbb{Z}; 0, 1, +, <, 2^{\mathbb{N}}, 3^{\mathbb{N}}\rangle$~\cite{karimov-pres}. (The problem remains wide open when three or more power predicates are simultaneously in play.)

Let us turn to monadic \emph{polynomial} predicates,
i.e., sets of the form $\mathcal{R} = \{ f(u) \mid u \in \mathbb{Z} \}$, where $f \in \mathbb{Q}[u]$ is an integer-valued polynomial with rational coefficients.\footnote{Note that, whilst all polynomials with integer coefficients are automatically integer valued over $\mathbb{Z}$, the converse does not hold; consider, for example, the polynomial $f(u) = \binom{u}{2} = \frac{u^2+u}{2}$.}
It is folklore that, whenever $\mathcal{R}$ corresponds to a polynomial of degree at least $2$, the theory $\mathit{FO} \langle   \mathbb{Z}; 0, 1, +, <, \mathcal{R}\rangle$ is automatically undecidable, via a simple encoding of multiplication within.
In the 1970s, B\"uchi considered specifically the case of perfect squares, i.e., the predicate $\mathbf{Z}^2 := \{ u^2 \mid u \in \mathbb{Z}\}$, and asked about the decidability of the \emph{existential} fragment $\exists \mathit{FO} \langle \mathbb{Z}; 0, 1, +, <, \mathbf{Z}^2\rangle$. As we describe in greater detail in Sec.~\ref{sec:hardness}, B\"uchi in fact formulated a conjecture implying undecidability of this theory; a proof of B\"uchi's conjecture was only recently announced by 
Xiao~\cite{xiao2025buchi}, finally establishing undecidability of the corresponding logical theory after some five decades! Note that Xiao's proof only concerns the perfect-square predicate, and the general question of the decidability of $\exists \mathit{FO} \langle \mathbb{Z}; 0, 1, +, <, \mathcal{R}\rangle$, where $\mathcal{R}$ is an arbitrary non-linear polynomial predicate, remains open.

In addition to restricting the number and use of quantifiers, another classical means of attempting to recover decidability involves bounding the number of variables; standard references on bounded-variable logics include~\cite{Ott96,Gro98,GKV97,GO99,pra23}. 

We are now in a position to describe our main contributions. We focus on variable- and quantifier-bounded expansions of Presburger arithmetic with families of monadic polynomial predicates. Since variables and quantifiers are now in short supply, we expand our base signature to maximise expressiveness and flexibility,\footnote{For example, the subtraction operator is typically not included in the signature of Presburger-arithmetic theories, since a term such as $-x$ can be recovered through existential quantification: $\exists y \, . \, y + x =0$. Likewise, modular-arithmetic constraints are usually implicit: $x$ is an even number if and only if $\exists y \, . \, x = y+y$, etc.} by considering the following, where $k$ is the bound on the number of allowable distinct variables and the binary relation symbol
$\equiv_m$ refers to congruence modulo $m$:
\begin{itemize}
\item $\mathit{FO}^k\langle   \mathbb{Z}; 0, 1, +, -, <, (\equiv_m)_{m \geq 2} \rangle$ 
denotes the first-order fragment with no restrictions on quantifiers;
\item $\exists\mathit{FO}^k\langle   \mathbb{Z}; 0, 1, +, -, <, (\equiv_m)_{m \geq 2} \rangle$ denotes the existential fragment;
\item $\mathit{SMT}^k\langle   \mathbb{Z}; 0, 1, +, -, <, (\equiv_m)_{m \geq 2} \rangle$ denotes the \emph{satisfiability modulo theories} fragment, i.e., 
existential formulas in prenex normal form:
$\exists x_1, \ldots, x_k \, . \, \varphi(x_1,\ldots,x_k)$, with
$\varphi(x_1,\ldots,x_k)$ quantifier free.
\end{itemize}

Somewhat surprisingly, even restricting to a \emph{single} variable (i.e., $k=1$) immediately leads to well-known open problems: let $\mathcal{R}_1$ and $\mathcal{R}_2$ be predicates corresponding to polynomials $f_1$ and $f_2$; in general the decidability of whether there are integers $u$ and $v$ such that $f(u) = g(v)$ --- a severely restricted instance of Hilbert's tenth problem --- is open. But such a query is easily encodable within the bare theory 
$\mathit{SMT}^1\langle   \mathbb{Z}; \mathcal{R}_1, \mathcal{R}_2 \rangle$, by asking for the truth value of $\exists x \, . \, \mathcal{R}_1(x) \wedge \mathcal{R}_2(x)$.
Even in the case of a \emph{single} predicate $\mathcal{R}$ (with underlying polynomial $f$), decidability remains open. Consider, for arbitrary integer constants $a,b,c,d$, 
the sentence $\exists x \, . \, \mathcal{R}(ax+b) \wedge \mathcal{R}(cx+d)$, which is readily expressible in
$\mathit{SMT}^1\langle   \mathbb{Z}; 0, 1, +, - ,\mathcal{R} \rangle$. This formula asserts the existence of integers $u$ and $v$ such that $cf(u)+ad = af(v)+cb$; however in general it is not known whether such Diophantine equations can always be solved.

Our main positive results exclusively concern single-variable theories (in which case the first-order, existential, and SMT fragments essentially all coincide). We establish decidability of the following:
\begin{enumerate}
\item Single-variable expansions of Presburger arithmetic by arbitrarily many polynomial predicates corresponding to perfect fixed powers (Thm.~\ref{thm:1-pres-Zk}):
$$\mathit{FO}^1 \langle   \mathbb{Z}; 0, 1, +, -, <, (\equiv_m)_{m \geq 2}, 
    (\mathbf{Z}^k)_{k \geq 2}\rangle \, .$$
\item Single-variable expansions of Presburger arithmetic by arbitrarily many polynomial predicates of degree at most $3$ (Thm.~\ref{thm:2-pres-sqcu}):
$$\mathit{FO}^1 \langle   \mathbb{Z}; 0, 1, +, -,
    <, (\equiv_m)_{m \geq 2}, (\mathcal{R}_i)_{i}\rangle \, . $$
\end{enumerate}    
    
Amongst other ingredients, these theorems are obtained by making use of deep results on the solvability of hyperelliptic Diophantine equations and equations corresponding to algebraic curves of low genus. Or more precisely, assertions involving the \emph{positive} use of polynomial predicates translate to number-theoretic or algebro-geometric constraints, which can then be analysed using the relevant mathematical machinery,
whereas assertions involving \emph{negated} predicates are handled through measure-theoretic (or ``density''-type) arguments.

It is perhaps useful at this stage to provide a few concrete examples of the kinds of statements that can be expressed in instances of such single-variable theories:
\begin{itemize}
\item \emph{There is no triangular number\footnote{A triangular number is a positive integer of the form $n(n+1)/2$, for some integer $n$.} larger than $1$ that is a perfect cube.} This assertion is due to Fermat (see~\cite[Prop.~0.9]{kato00}, where the authors also note, ``it is very difficult to prove [this] proposition by hand without using any significant tools. In attempting to prove [it] we are naturally led to profound
mathematics.''). Writing $\mathcal{T}(x)$ to express the fact that $x$ is a triangular number, Fermat's statement is equivalent to the falsity of the sentence $\exists x \, . \, x > 1 \wedge \mathcal{T}(x) \wedge \mathbf{Z}^3(x)$. This formula is readily seen to belong to theories from the second of our two decidable classes above, since the polynomial predicates at play, namely $\mathcal{T}$ and $\mathbf{Z}^3$, have degrees $2$ and $3$ respectively.

\item \emph{The largest cube in the sequence of Fibonacci numbers is $8$.} 
Here we make use of a well-known result of Gessel to the effect that $n$ is a Fibonacci number if and only if either $5n^2+4$ or $5n^2-4$ is a perfect square~\cite[Prob.~H-187]{gessel72}. The desired statement is therefore equivalent to the non-existence of an $n>8$ such that $5n^2+4$ or $5n^2-4$ is a perfect square, and moreover such that $n$ is a perfect cube; letting $x$ denote $n^2$, we can express this as the negation of
$\exists x \, . \, \mathbf{Z}^2(x) \wedge x>64 \wedge (\mathbf{Z}^2(5x+4) \vee 
\mathbf{Z}^2(5x-4)) \wedge \mathbf{Z}^6(x)$. As written, this formula belongs to theories from the first of our two decidable classes; observe, however, that for any integer $n$,
whenever $n^2$ is perfect perfect cube, then so is $n$, and thus the conjunct 
$\mathbf{Z}^6(x)$ in our sentence can safely be replaced by $\mathbf{Z}^3(x)$, yielding a new formula that belongs to theories from both of our classes.

\item For any fixed power $d>3$, one can similarly assert that 
\emph{there is no Fibonacci number greater than $1$ that is a perfect $d$-th power.} This is a special case of a famous 2006 result of Bugeaud, Mignotte, and Siksek~\cite{bugeaud06}, according to which the
only perfect powers in the Fibonacci sequence are $0$, $1$, $8$, and $144$. Here the corresponding formulas belong to theories from the first decidable class.

\item Catalan's conjecture, stating that $9$ and $8$ are the only perfect powers with difference exactly $1$, was open for $150$ years, and only proven in 2002~\cite{catalan}. For any fixed $u$-th and $v$-th power, we can spell out the special case that 
\emph{there are no $n, m > 2$ such that $n^u - m^v = 1$}, as the negation of the sentence 
$\exists x \, . \, x>8 \wedge \mathbf{Z}^v(x) \wedge \mathbf{Z}^u(x+1)$.

\item Pillai's conjecture, along with the ``Generalised Tijdeman problem'', extend Catalan's conjecture, and remain open (see~\cite{pillai-conj}); the assertion is that, for every $k>1$, there exist only finitely many pairs of perfect powers with difference exactly $k$: $n^u - m^v = k$. 
We can readily express (and therefore substantiate) within our decidable formalism any special instance of this conjecture in which $u$, $v$, and $k$ are fixed. More precisely, for fixed positive integers $u,v,k,T$, we can write 
a formula $\exists x \, . \,\varphi(x,u,v,k,T)$, making use of the single variable $x$, expressing the fact that $n^u - m^v = k$ has some integer solution with $n, m > T$. If Pillai's conjecture is true, then by choosing $T$ sufficiently large, $\varphi$ eventually becomes false, substantiating the special case in question.

\end{itemize}

Finally, in Sec.~\ref{sec:hardness}, we complement our decidability results by 
establishing undecidability of expansions involving the perfect-square predicate $\mathbf{Z}^2$ when several variables are allowed (Thm.~\ref{thm:undec}).
Let us meanwhile conclude this introduction by briefly pointing out some of the formidable obstacles that stand in the way of extending our decidability results to two- or three-variable fragments involving merely the perfect square or perfect cube predicates. 

We start by considering \emph{perfect Euler bricks}, i.e., rectangular boxes with integer sides, all of whose diagonals are moreover also integers. The existence of perfect Euler bricks has famously remained open for over three centuries; to date none has ever been found, and no-one has been able to show that they cannot exist. Note, however, that the existence of a perfect Euler brick is straighforwardly encodable within $\mathit{SMT}^3\langle   \mathbb{Z}; 0, +, \mathbf{Z}^2\rangle$, thanks to the Pythagorean theorem; decidability of this three-variable fragment should therefore be considered squarely out of reach. On the other hand, decidability of two-variable fragments of expansions of Presburger arithmetic by the perfect-square predicate remains open and a fascinating avenue for further research.

The situation with $\mathit{SMT}^2\langle \mathbb{Z}; 0, 1, +, -,  \mathbf{Z}^3\rangle$, i.e., the case of purely existential two-variable sentences involving the perfect-cube predicate, also appears rather hopeless at present. That is because this fragment can encode arbitrary instances of the famous \emph{sum-of-three-cubes problem},\footnote{See 
\href{https://en.wikipedia.org/wiki/Sums_of_three_cubes}%
{\texttt{https://en.wikipedia.org/wiki/Sums_of_three_cubes}}.}
in which one asks, for a given positive integer $n$, whether there are integers $u,v,w$ such that $u^3+v^3+w^3=n$. Although the answer is known for several values of $n$, infinitely many instances remain open, with $n=114$ being the smallest at the time of writing.

\section{Technical Preliminaries}
\subsection*{Integer-Valued Polynomials}
We refer to \cite{IntValPoly} for basic properties of univariate integer-valued polynomials. In particular, a univariate integer-valued polynomial of degree $k$ can uniquely be written as an integer linear combination $\sum_{r=0}^{k} f_r \cdot \binom{x}{r}$, where $\binom{x}{r}$ is the binomial polynomial $\frac{x(x-1)\cdots(x-r+1)}{r!}$ and is always integer valued. We have that $\binom{x}{0} =1$ and $\binom{x}{1} = x$.

\subsection*{Linear Recurrence Sequences}
Linear recurrence sequences enable us to describe solution sets of certain Diophantine equations that arise in our analysis.

A linear recurrence relation over $\mathbb Q$ is an equation of the form
\begin{equation}\label{eq:LRS_def}
    u_{n+d} = a_{1} u_{n+d-1} + \dots + a_d u_n   \, ,
\end{equation}
where $a_1, \dots, a_{d} \in \mathbb Q$ and $a_d \neq 0$. The initial values $u_0,u_1, \dots, u_{d-1} \in \mathbb{Q}$ and \eqref{eq:LRS_def} together uniquely define a sequence of rationals $\langle u_n \rangle_{n=0}^\infty$ as well as a bi-sequence $\langle u_n \rangle_{n=-\infty}^\infty$. We refer to the former as a \emph{linear recurrence sequence} (LRS) and the latter as a \emph{linear recurrence bi-sequence} (LRBS)\@. The smallest integer $d$ for which a sequence obeys a relation of the form \eqref{eq:LRS_def} is the \emph{order} of the sequence.

Note that if $a_1, \dots, a_d, u_0, \dots, u_{d-1} \in \mathbb Z$ and $a_d = \pm 1$, then \eqref{eq:LRS_def} implies that the LRBS $\bm u = \langle u_n \rangle_{n=-\infty}^\infty$ is entirely contained in~$\mathbb Z$. In fact, an old result of Fatou \cite{Fatou_1904} (see also \cite[Chap. 7]{Berstel_Reutenauer_2010}) implies that an LRBS $\bm u$ satisfying \eqref{eq:LRS_def}, with $a_1,\dots,a_d,u_0,\dots,u_{d-1} \in \mathbb Z$, is contained in $\mathbb Z$ if and only if $a_d = \pm 1$. In this case, we say that $\bm u$ is \emph{reversible}.

Note that $\bm u$ has an exponential-polynomial form
\begin{equation*}
    u_n = \sum_{i=1}^s P_i(n) \lambda_i^n
\end{equation*}
where $\lambda_i$ are the \emph{characteristic roots}, that is, roots of the \emph{characteristic polynomial} 

\begin{equation*}
    g(x) = x^d - a_1 x^{d-1} - \dots - a_d \, ,
\end{equation*}
and $P_i$ are polynomials with algebraic coefficients with degree one less than the multiplicity of $\lambda_i$ as a root of $g$. We say that $\bm u$ is \emph{simple} if none of the roots of $g$ are repeated, which in turn is equivalent to each $P_i$ being constant. 

\subsection*{Diophantine Equations}
A Diophantine equation is a multivariate polynomial equality with integer coefficients for which one seeks integer solutions. In this section we detail some Diophantine equations that arise later, and how to obtain their solution sets.
\begin{definition}
    A \emph{Pell equation} is a Diophantine equation of the form $w^2 - n z^2 = 1$, where the coefficient $n>0$ is required not to be a perfect square. The solution $(w_0, z_0)$ with $w_0,z_0>0$ which minimises $w$ is called its \emph{fundamental solution}. Diophantine equations of the form $w^2 - n z^2 = N$ (where $n>0$ is not a perfect square and $N \ne 0$) are called \emph{generalised Pell equations}.
\end{definition}

The history of Pell equations goes back to ancient times, and it is well known that the fundamental solution exists and can be computed. We refer the reader to \cite{jacobson2009solving} for a comprehensive account and modern developments. We are specifically interested in \cite[Chap.~16.3]{jacobson2009solving}, which shows that the fundamental solution $(w_0, z_0)$ is the one for 
which $w_0, z_0$ are positive and $w_0 + z_0\sqrt{n}$ is minimal.

\begin{lemma}
\label{lem:solve-general-Pell}\cite[Thm.~16.3]{jacobson2009solving}
    Consider the generalised Pell equation $w^2 - n z^2 = N$, and let $(w_0, z_0)$ be the fundamental solution of $w^2 - nz^2 = 1$. We can compute a finite set $S$ of \emph{generating} pairs $(w_i, z_i)$ such that every solution $(w', z')$ to the above generalised Pell equation satisfies $w' + z'\sqrt{n} = (w_i + z_i\sqrt{n})(w_0 + z_0\sqrt{n})^m$ for some $(w_i, z_i) \in S$ and $m \in \mathbb{Z}$.
\end{lemma}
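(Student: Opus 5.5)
The plan is to work in the ring $\mathbb{Z}[\sqrt{n}]$ with the norm $\mathrm{N}(w + z\sqrt{n}) = w^2 - n z^2$. This norm is multiplicative. Solutions of $w^2 - nz^2 = N$ are exactly the elements of $\mathbb{Z}[\sqrt{n}]$ of norm $N$. Write $\varepsilon = w_0 + z_0\sqrt{n}$. Then $\varepsilon$ has norm $1$, so multiplying a solution by $\varepsilon^m$ for any $m \in \mathbb{Z}$ gives another solution; here $\varepsilon^{-1} = w_0 - z_0\sqrt{n}$ is again in $\mathbb{Z}[\sqrt{n}]$. The group $\langle \pm\varepsilon \rangle$ therefore acts on the solution set. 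The claim is that there are only finitely many orbits, that a representative of each can be computed, and that every solution is $\pm$ an element of the form (representative)$\cdot\varepsilon^m$. The sign can be absorbed into $S$ by including both $(w_i,z_i)$ and $(-w_i,-z_i)$.

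The key step is a reduction into a fundamental domain. We use the fact quoted from \cite[Chap.~16.3]{jacobson2009solving} that $\varepsilon$ is the minimal unit exceeding $1$ among positive solutions; this gives that $\varepsilon > 1$ generates all norm-$1$ units up to sign. Take any solution $\alpha = w' + z'\sqrt{n}$, with conjugate $\bar\alpha = w' - z'\sqrt{n}$, so that $\alpha\bar\alpha = N$. After replacing $\alpha$ by $-\alpha$ we may assume $\alpha > 0$. Multiplying by a suitable power $\varepsilon^{m}$, we can arrange
\[
\sqrt{|N|} \le \alpha\varepsilon^{m} < \sqrt{|N|}\,\varepsilon .
\]
Then $|\overline{\alpha\varepsilon^m}| = |N|/(\alpha\varepsilon^m)$ lies in $\bigl(\sqrt{|N|}/\varepsilon, \sqrt{|N|}\bigr]$. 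Writing $\alpha\varepsilon^m = w'' + z''\sqrt{n}$, we have
\[
w'' = \tfrac{1}{2}\bigl(\alpha\varepsilon^m + \overline{\alpha\varepsilon^m}\bigr), \qquad z'' = \tfrac{1}{2\sqrt{n}}\bigl(\alpha\varepsilon^m - \overline{\alpha\varepsilon^m}\bigr).
\]
Hence $|w''| \le \sqrt{|N|}(\varepsilon+1)/2$ and $|z''| \le \sqrt{|N|}(\varepsilon+1)/(2\sqrt{n})$. These bounds are explicit and computable once $(w_0,z_0)$ is known.

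The set $S$ is then obtained by exhaustive search: enumerate all integer pairs in the box, keep those with $w^2 - nz^2 = N$, and add their negatives. Every solution has the form $\alpha = (w''+z''\sqrt{n})\varepsilon^{-m}$ with $(w'',z'')$ in the box, which is exactly the stated form with exponent $-m \in \mathbb{Z}$. Computability of the fundamental solution is classical, for example via the continued fraction expansion of $\sqrt{n}$, or simply by searching $z_0 = 1, 2, \dots$ until $1 + nz_0^2$ is a square; termination is guaranteed by the existence of the fundamental solution.

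The main obstacle is the step that can be assumed from the literature: that $\varepsilon$ generates all norm-$1$ units of $\mathbb{Z}[\sqrt{n}]$ up to sign, i.e.\ Lagrange's theorem on the Pell equation. If this had to be justified, I would use minimality. Suppose $\eta > 1$ is a norm-$1$ unit that is not a power of $\varepsilon$, and choose $k$ with $\varepsilon^k < \eta < \varepsilon^{k+1}$. Then $1 < \eta\varepsilon^{-k} < \varepsilon$, and $\eta\varepsilon^{-k}$ is a unit exceeding $1$, so it has positive coordinates. This contradicts the minimality of $w_0 + z_0\sqrt{n}$.
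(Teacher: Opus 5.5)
Your proof is correct. The paper does not prove this lemma itself; it only cites \cite[Thm.~16.3]{jacobson2009solving}. So your argument is a genuinely self-contained route rather than a reproduction of the paper's. It is the classical one: move a solution into the fundamental domain $\sqrt{|N|} \le \alpha\varepsilon^m < \sqrt{|N|}\,\varepsilon$, bound the coordinates, and search a finite box. Standard textbook treatments obtain sharper bounds on the class representatives by the same idea. What your version gains is transparency and explicit effectivity. Because the box bounds involve irrationals, you should state integer upper bounds for the search, e.g.\ $\varepsilon < 2w_0$ (since $z_0\sqrt{n} = \sqrt{w_0^2-1} < w_0$) and $\sqrt{|N|} \le |N|$.

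One correction to your own assessment. The step you flag as the main obstacle, that $\varepsilon$ generates all norm-$1$ units up to sign, is never used in your argument. The reduction only needs some unit $\varepsilon > 1$ of norm $1$ in $\mathbb{Z}[\sqrt{n}]$, so that $\varepsilon^{\pm 1} \in \mathbb{Z}[\sqrt{n}]$ and multiplication by $\varepsilon^m$ preserves the norm. With any such unit, the same box argument yields a finite generating set, just a larger one if the unit is not fundamental. Your minimality argument for generation is fine, but it proves a different fact (every norm-$1$ unit is $\pm\varepsilon^k$), not something this lemma requires. The genuinely nontrivial external input is the existence of a nontrivial solution of $w^2 - nz^2 = 1$. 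That is what guarantees your search for $(w_0,z_0)$ terminates, and the statement already presupposes it by referring to the fundamental solution.
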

The algebraic number $w_0 + z_0\sqrt{n}$ derived from the fundamental solution is said to be the \emph{fundamental unit} of the Pell equation.

\begin{corollary} \label{cor:pell-sol-set}
The set of solutions $(w,z)$ to a generalised Pell equation $w^2-nz^2 = N$ is obtained as a finite union of pairs of simple reversible LRBS. Moreover, these pairs of LRBS take the form
\begin{equation*}
(w_m,z_m) = \left( A_1 \varepsilon^m + A_2 \varepsilon^{-m}, B_1 \varepsilon^m + B_2 \varepsilon^{-m} \right) \, ,
\end{equation*}
where $\varepsilon$ is the fundamental unit of the generalised Pell equation, and $A_1,A_2,B_1,B_2$ are non-zero algebraic numbers.
\end{corollary}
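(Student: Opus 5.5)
We derive the statement directly from Lemma~\ref{lem:solve-general-Pell}. Let $\varepsilon = w_0 + z_0\sqrt{n}$ be the fundamental unit and $\bar\varepsilon = w_0 - z_0\sqrt{n}$ its conjugate. Since $w_0^2 - n z_0^2 = 1$, we have $\bar\varepsilon = \varepsilon^{-1}$. For each generating pair $(w_i,z_i)\in S$, put $\alpha_i = w_i + z_i\sqrt{n}$ and $\bar\alpha_i = w_i - z_i\sqrt{n}$.

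By the lemma, every solution $(w,z)$ satisfies $w + z\sqrt{n} = \alpha_i\varepsilon^m$ for some $i$ and some $m\in\mathbb{Z}$. Applying the nontrivial automorphism $\sqrt{n}\mapsto-\sqrt{n}$ of $\mathbb{Q}(\sqrt{n})$, which is legitimate because $n$ is not a square, gives $w - z\sqrt{n} = \bar\alpha_i\varepsilon^{-m}$. Adding and subtracting the two identities yields
\[
w = \tfrac{\alpha_i}{2}\varepsilon^m + \tfrac{\bar\alpha_i}{2}\varepsilon^{-m},\qquad
z = \tfrac{\alpha_i}{2\sqrt{n}}\varepsilon^m - \tfrac{\bar\alpha_i}{2\sqrt{n}}\varepsilon^{-m}.
\]
This is the claimed form, with $A_1=\alpha_i/2$, $A_2=\bar\alpha_i/2$, $B_1=\alpha_i/(2\sqrt{n})$ and $B_2=-\bar\alpha_i/(2\sqrt{n})$.

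These coefficients are non-zero. Since $\alpha_i\bar\alpha_i = w_i^2-nz_i^2 = N\neq 0$, neither $\alpha_i$ nor $\bar\alpha_i$ vanishes.

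Conversely, for every $m\in\mathbb{Z}$ the pair defined by $\alpha_i\varepsilon^m$ is a solution. Indeed, $\varepsilon^m = a + b\sqrt{n}$ with $a,b\in\mathbb{Z}$, because $\varepsilon$ is a unit of $\mathbb{Z}[\sqrt{n}]$. The norm is multiplicative, so $N(\alpha_i\varepsilon^m) = N\cdot 1 = N$. Hence the solution set is exactly the union over the finitely many $i$ of these parametrised families.

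It remains to check that each family gives a pair of simple reversible LRBS. Both $w_m$ and $z_m$ have the form $c_1\varepsilon^m + c_2\varepsilon^{-m}$. They therefore satisfy the recurrence $u_{m+2} = (\varepsilon+\varepsilon^{-1})u_{m+1} - u_m$, that is, $u_{m+2} = 2w_0 u_{m+1} - u_m$. This recurrence has integer coefficients and last coefficient $-1$. Its characteristic roots $\varepsilon$ and $\varepsilon^{-1}$ are distinct, since $\varepsilon>1$ because $w_0,z_0>0$. So the sequences are simple, and with integer initial values they are reversible LRBS.

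The only point needing care is the conjugation step, which requires $\sqrt{n}$ to be irrational; the non-square hypothesis on $n$ supplies exactly this. Everything else is routine.
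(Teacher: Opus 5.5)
Your proof is correct and follows essentially the same route as the paper. Both start from the generating set of Lem.~\ref{lem:solve-general-Pell}, use the recurrence $u_{m+2} = 2w_0u_{m+1} - u_m$ with distinct characteristic roots $\varepsilon^{\pm 1}$, and arrive at exactly the same coefficients $A_1,A_2,B_1,B_2$. The differences are cosmetic: you obtain the closed form first via the conjugation $\sqrt{n}\mapsto-\sqrt{n}$ and then read off the recurrence, whereas the paper derives the recurrence first and then solves for the coefficients. You also make explicit two points the paper leaves implicit, namely that the coefficients are non-zero because $\alpha_i\bar\alpha_i = N \neq 0$, and that conversely every $m$ yields an integer solution.
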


\begin{proof}
By Lem.~\ref{lem:solve-general-Pell} every solution $(w_{i,m},z_{i,m})$ satisfies
\begin{equation}
    w_{i,m} + z_{i,m}\sqrt n = (w_i+z_i \sqrt n)(w_0+z_0\sqrt n)^m \label{eq:gen_Pell_LRBS}
\end{equation}
for some computable $m \in \mathbb Z$, fundamental solution $(w_0,z_0)$, and a finite set of pairs $(w_i,z_i)$. From \eqref{eq:gen_Pell_LRBS}, we have
\begin{align*}
    w_{i,m+t} + z_{i,m+t} \sqrt{n} = (w_0+z_0 \sqrt{n})^t (w_{i,m} + z_{i,m} \sqrt{n})
\end{align*}
for $t = 1,2$, so we may compute
\begin{align*}
    w_{i,m+2} + z_{i,m+2} \sqrt{n} -2w_0 (w_{i,m+1} + z_{i,m+1} \sqrt{n}) = -(w_{i,m} + z_{i,m} \sqrt{n}) \, ,
\end{align*}
by using that $w_0^2 - nz_0^2=1$. Thus, by the equation coefficients of $1$ and $\sqrt n$ in the above, we get
\begin{align}
    w_{i,m+2} &= 2w_0 w_{i,m+1} - w_{i,m} \label{eqn:w-rec}\quad \text{and}\\
    z_{i,m+2} &= 2w_0 z_{i,m+1} - z_{i,m} \label{eqn:z-rec}\, ,
\end{align}
so each $\langle w_{i,m} \rangle_{m=- \infty}^\infty$ and $\langle z_{i,m} \rangle_{m= - \infty}^\infty$ define reversible LRBS\@. The characteristic polynomials of \eqref{eqn:w-rec} and \eqref{eqn:z-rec} are both equal to $X^2-2w_0X+1$, which has distinct characteristic roots $(\varepsilon,\varepsilon^{-1})$, so the solutions take the exponential-polynomial form 
\begin{equation*}
    (w_{i,m},z_{i,m}) = \left( A_{1,i} \varepsilon^m + A_{2,i} \varepsilon^{-m},   B_{1,i} \varepsilon^m + B_{2,i} \varepsilon^{-m}  \right) \, ,
\end{equation*}
where $A_{1,i},A_{2,i},B_{1,i},B_{2,i}$ are algebraic numbers, as claimed. One can easily solve for these coefficients using \eqref{eq:gen_Pell_LRBS} to show that 
\begin{align*}
    A_{1,i} = \frac{w_i+z_i \sqrt n}{2} , \quad A_{2,i} = \frac{w_i-z_i\sqrt{n}}{2} , \quad B_{1,i} = \frac{w_i + z_i \sqrt{n}}{2 \sqrt{n}}  , \quad B_{2,i} = \frac{-w_i + z_i \sqrt{n}}{2 \sqrt{n}} \, .
\end{align*}
In particular, each of the coefficients is non-zero.\qedhere

\end{proof}

\begin{lemma}
    \label{lem:simul-pell-finite}
    Consider the system of simultaneous generalised Pell equations $w^2 - n_1 z_1^2 = N_1, w^2 - n_2 z_2^2 = N_2$, where $n_1 n_2$ is not a perfect square and $N_1 \ne N_2$. This system has only finitely many solutions which can moreover be effectively enumerated.
\end{lemma}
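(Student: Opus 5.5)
Two routes suggest themselves: a geometric argument via Siegel's theorem, and an elementary one via Corollary~\ref{cor:pell-sol-set}. The elementary route makes the enumeration explicit, so I will use it as the main proof.

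\textbf{Geometric route.} The system defines a curve $C$ in $\mathbb{A}^3$. Its projective closure in $\mathbb{P}^3$ is the intersection of the two quadrics
\[
W^2 - n_1 Z_1^2 = N_1 T^2, \qquad W^2 - n_2 Z_2^2 = N_2 T^2 .
\]
The hypotheses $N_1 \neq N_2$ and $n_1 n_2$ not a square make this a smooth intersection of two quadrics, hence a genus-one curve. There are four points at infinity, namely $W = \pm\sqrt{n_1}\,Z_1 = \pm\sqrt{n_2}\,Z_2$. Siegel's theorem then gives finiteness of the integral points. However, Siegel's theorem alone is ineffective, so I would rather argue directly.

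\textbf{Parametrising the first equation.} By Corollary~\ref{cor:pell-sol-set} applied to $w^2 - n_1 z_1^2 = N_1$, every solution lies on one of finitely many families
\[
w = A_1 \varepsilon^m + A_2 \varepsilon^{-m}, \qquad m \in \mathbb{Z},
\]
where $\varepsilon = w_0 + z_0\sqrt{n_1}$ and $A_1, A_2 \neq 0$. The same holds for the second equation, giving
\[
w = C_1 \delta^k + C_2 \delta^{-k},
\]
with $\delta$ the fundamental unit for $n_2$. It therefore suffices to show that each equation
\[
A_1 \varepsilon^m + A_2 \varepsilon^{-m} = C_1 \delta^k + C_2 \delta^{-k}
\]
has only finitely many solutions $(m,k)$, with an explicit bound.

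\textbf{Reducing to a linear form in logarithms.} For large $|w|$ one dominant term on each side controls the size, so $\left|A \varepsilon^{|m|} - C \delta^{|k|}\right|$ is bounded by a constant. Dividing through shows that
\[
\Lambda = |m|\log\varepsilon - |k|\log\delta + \log(A/C)
\]
satisfies $|\Lambda| \ll \varepsilon^{-|m|}$. Because $n_1 n_2$ is not a square, $\varepsilon$ and $\delta$ are multiplicatively independent: they lie in different quadratic fields, and no power of one lies in the other's field unless it is rational, which is impossible for units of infinite order. Hence $\Lambda$ does not vanish identically in $(m,k)$.

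\textbf{Obtaining an explicit bound.} Baker's theorem gives $|\Lambda| > \exp\left(-c \log\max(|m|,|k|)\right)$ with $c$ effective. Comparing this with $|\Lambda| \ll \varepsilon^{-|m|}$ bounds $|m|$, and $|k|$ is comparable to $|m|$. If $\Lambda = 0$ for some pair, that can happen for at most one value of $(|m|,|k|)$ by the independence just established. Finally, enumerate all pairs below the bound and check each one.

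\textbf{Main obstacle.} The delicate step is handling the degenerate configurations. One must confirm that $\log(A/C)$ does not make $\Lambda$ vanish along an infinite family; the independence of $\varepsilon$ and $\delta$ addresses this. One must also handle the sign and branch choices for $\pm m, \pm k$, which split into finitely many cases.

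There is an alternative that avoids Baker's theorem. Eliminating $w$ gives
\[
n_1 z_1^2 - n_2 z_2^2 = N_2 - N_1,
\]
and multiplying the two equations gives a quartic Thue-type equation. Effective results on Thue equations (Baker) would again yield bounds. Either route, the main obstacle is the effectivity, which rests on linear forms in logarithms.
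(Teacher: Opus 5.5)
Your main argument is correct, but it takes a different route from the paper. The paper multiplies the two equations. Setting $z = n_1 n_2 z_1 z_2$ gives the single equation $z^2 = n_1 n_2 (w^2 - N_1)(w^2 - N_2)$. Its right-hand side has the four distinct roots $\pm\sqrt{N_1}, \pm\sqrt{N_2}$, because $N_1 \neq N_2$ and $N_1 N_2 \neq 0$. The paper then applies Baker's effective theorem for hyperelliptic equations $y^2 = f(x)$. This is essentially the alternative in your last paragraph, except that the equation is hyperelliptic rather than of Thue type.

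Your main route combines the Pell parametrisation of Cor.~\ref{cor:pell-sol-set} with a lower bound for a linear form in three logarithms. This is the classical Baker--Davenport method for simultaneous Pell equations. The paper's reduction is a one-liner with no case analysis. It needs only simple roots, i.e.\ $N_1 \neq N_2$, and never uses that $n_1 n_2$ is a non-square. Your route uses that hypothesis essentially, through the multiplicative independence of $\varepsilon$ and $\delta$, and it needs bookkeeping over the families and the four sign branches. In return it gives explicit bounds on the exponents $m,k$, which can be sharply reduced in practice, and it reuses the paper's own Pell machinery.

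Two small points. First, \emph{effective} enumeration needs more than knowing that at most one pair has $\Lambda = 0$; you must be able to locate it or exclude it. This is easy here. Since $A\varepsilon^{|m|} \in \mathbb{Q}(\sqrt{n_1})$ and $C\delta^{|k|} \in \mathbb{Q}(\sqrt{n_2})$, $\Lambda = 0$ forces both to be rational. By the explicit coefficients in Cor.~\ref{cor:pell-sol-set}, this means $z_1 = 0$ and $z_2 = 0$ at those indices. At a genuine solution this would give $N_1 = w^2 = N_2$, so no solution has $\Lambda = 0$.

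Second, in your geometric remark, smoothness of the intersection of the two quadrics is governed by $N_1 N_2 (N_1 - N_2) \neq 0$. The non-squareness of $n_1 n_2$ plays no role there.
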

\begin{proof}
    Writing $z =n_1 n_2 z_1 z_2$, it suffices to prove that $z^2 = n_1 n_2(w^2 - N_1)(w^2 - N_2)$ has only finitely many solutions which can moreover be effectively enumerated. This is done by a direct application of \cite{Baker1969hyperelliptic} or \cite[Thm.~4.2]{Baker_1975}\footnote{The original 1975 print makes a mistake of omission in the statement of the theorem, which was subsequently corrected in later editions.}
    (see in particular the comment at the beginning of the proof, which clarifies that the case $Y^2 = c(X - \alpha_1)\cdots (X - \alpha_n)$ is also handled by the proof).
\end{proof}

We remark that algorithms to find solutions have been further refined, see e.g., \cite{Szalay2007, tzanakis}. The work of Baker \cite{Baker1969hyperelliptic} implies the following lemma for so-called hyperelliptic equations.

\begin{lemma}
    \label{lem:higher-k-finite-sol}
    Let $k \ge 2$, $j \ge 3$, $N \ne 0$, and $n$ be integers. The Diophantine equation $w^k = n z^j + N$ has only finitely many solutions, which can moreover be effectively enumerated.
\end{lemma}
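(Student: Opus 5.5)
We will reduce the equation $w^k = n z^j + N$ to a superelliptic equation of the shape covered by Baker's theorem~\cite{Baker1969hyperelliptic}: $Y^m = f(X)$ with $f \in \mathbb{Z}[X]$ having at least two (when $m \ge 3$) or at least three (when $m=2$) simple roots. Baker's result gives an effective upper bound on $\max(|X|,|Y|)$ in terms of $m$, the degree and the height of $f$. Enumerating all integer pairs below this bound then lists every solution.

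We first dispose of degenerate cases. If $n = 0$, the equation becomes $w^k = N$, which has finitely many solutions in $w$. However, $z$ is then unconstrained, so this case must be understood as excluded (implicitly $n \neq 0$), or else treated separately as a trivial case. We will note this and assume $n \ne 0$ from now on. Replacing $z$ by $-z$ if needed, and using that $j$ may be even or odd, we may also reduce the sign of $n$ in the obvious way whenever that is convenient.

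The main step is to view the equation as $w^k = f(z)$ with $f(X) = nX^j + N$. Since $N \ne 0$ and $n \ne 0$, we have $f'(X) = jnX^{j-1}$, whose only root is $0$, and $f(0)=N\ne 0$. Hence $f$ is separable with $j \ge 3$ distinct complex roots, all of them simple. We then split on $k$.
\begin{itemize}
\item If $k = 2$, the equation $w^2 = f(z)$ is hyperelliptic with at least three simple roots, which is exactly the hypothesis of Baker's hyperelliptic theorem (\cite[Thm.~4.2]{Baker_1975}, in the form used in Lem.~\ref{lem:simul-pell-finite}, which allows a leading coefficient $n$).
\item If $k \ge 3$, the equation $w^k = f(z)$ is superelliptic with at least two simple roots, which is again covered by Baker's theorem.
\end{itemize}
In both cases we obtain an effectively computable bound $H(n,N,j,k)$ with $|w|,|z| \le H$. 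The solutions can therefore be enumerated by brute force within the box $[-H,H]^2$, which proves both finiteness and effectivity.

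The main obstacle is checking that we are invoking Baker's theorem in a correct form. The versions usually stated assume either that $f$ is monic, or, for superelliptic equations, that $f$ has at least two simple roots. Our $f$ has leading coefficient $n$, so we must confirm that Baker's argument covers non-monic $f$, as the footnote remark in the paper indicates. Failing that, we can reduce to the monic case. Multiplying the equation by $n^{kj-1}$ and setting $W = n^{j}w$ gives an equation of the form $W^k = (n^{k}z)^j\cdot(\text{const}) + \text{const}$. More simply, we can substitute $Z = n z$ and multiply through by $n^{j-1}$, which yields $n^{j-1} w^k = Z^j + n^{j-1}N$. By absorbing constants via a $k$-th power scaling $w \mapsto n^{a}w$, with $a$ chosen so that $k a \equiv j-1$ is handled by extra factors of $n$, we arrive at a monic separable polynomial in $Z$. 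This scaling bookkeeping is routine but must preserve integrality. The separability check and the case split on $k$ are straightforward.
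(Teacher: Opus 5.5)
Your proposal takes the same route as the paper, which simply cites Baker's 1969 effective bounds for $y^m=f(X)$: here $f(X)=nX^j+N$ is separable with $j\ge 3$ simple roots, which is enough for both $k=2$ and $k\ge 3$. Baker's theorem is stated for arbitrary (non-monic) integer $f$, so your monic-reduction fallback is unnecessary, and as sketched it would not work in general (e.g.\ when $\gcd(k,j)>1$ and $n$ is prime). You are also right that the statement tacitly needs $n\neq 0$; this holds in the paper's application, where $n=a_1a_2^{k_1-1}>0$.
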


\section{Fixed-Power Predicates}\label{sec:power predicates}
In this section, we prove the following theorem regarding single-variable Presburger arithmetic, where $\equiv_m$ is a binary relation symbol denoting congruence modulo $m$,
and the predicate $\mathbf{Z}^k$ is the set $\{n^k \mid n \in \mathbb{Z}\}$ of perfect $k$-th powers.
\begin{theorem}
\label{thm:1-pres-Zk}
    The theory $\mathit{FO}^1 \langle   \mathbb{Z}; 0, 1, +, -, <, (\equiv_m)_{m \geq 2}, 
    (\mathbf{Z}^k)_{k \geq 2}\rangle$ is decidable.
\end{theorem}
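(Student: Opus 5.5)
Since only one variable $x$ is available, every nested quantifier rebinds $x$. So every subformula with at most one free variable defines a subset of $\mathbb{Z}$, and a sentence is a Boolean combination of statements $\exists x\,.\,\psi(x)$ with $\psi$ quantifier free. I will therefore reduce the theorem to deciding the satisfiability of one quantifier-free formula $\psi(x)$. Innermost quantified subformulas are closed and so evaluate to constants $\top$ or $\bot$, and I proceed inductively outwards.

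The atoms of $\psi$ have three forms: linear (in)equalities $ax+b<cx+d$, congruences $ax+b \equiv_m cx+d$, and power atoms $\mathbf{Z}^k(ax+b)$. I put $\psi$ in disjunctive normal form and handle a single conjunction. The linear atoms define an interval, the congruences a residue class modulo some $M$. Substituting $x = Mt + r$, the conjunction becomes: $t$ ranges over an interval $I$, and
\[ \textstyle\bigwedge_{i\in P}\mathbf{Z}^{k_i}(a_i t+b_i)\ \wedge\ \bigwedge_{j\in N}\neg\mathbf{Z}^{k_j}(c_j t+d_j). \]
Atoms with $a_i=0$ are evaluated directly. If $I$ is bounded, I check it by brute force.

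\textbf{Case $P=\emptyset$.} I use a density argument. The number of $t\in[-T,T]$ with $c_jt+d_j$ a perfect $k_j$-th power is $O(T^{1/2})$, so on an unbounded interval most $t$ satisfy all the negated atoms. I will make this effective: compute an explicit $T$ with $|I\cap[-T,T]|$ exceeding the sum of the counting bounds, so that a witness exists. When $I$ is bounded, brute force suffices as before.

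\textbf{Case $|P|=1$.} Write $a t+b=u^k$. Solutions exist if and only if some residue $u \bmod |a|$ satisfies $u^k\equiv b \pmod{a}$, and then the solution set is $t=(u^k-b)/a$ for $u$ in finitely many residue classes. Each class is infinite in the appropriate direction (for even $k$ only one sign of $at+b$ is possible). Negated atoms then become $\neg\mathbf{Z}^{k_j}(p_j(u))$ with $p_j$ a polynomial in $u$ of degree $k$ and nonzero constant shift. I split on whether $p_j(u)$ is itself identically a $k_j$-th power up to the shift. If $c_j(u^k-b)/a+d_j = \alpha u^k+\beta$ with $\beta\neq 0$, then $\alpha u^k+\beta = v^{k_j}$ has finitely many effectively enumerable solutions by Lem.~\ref{lem:higher-k-finite-sol} when $k\geq 3$ (swapping roles as needed), or by Baker/Siegel when $\max(k,k_j)\geq 3$. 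When $\beta=0$, the atom is constant on the residue class: it reduces to whether $\alpha u^k$ is a $k_j$-th power, which I decide by $p$-adic and congruence reasoning on $u$. In the remaining case $k=k_j=2$, I get a generalised Pell equation, whose solutions are sparse (exponentially growing by Cor.~\ref{cor:pell-sol-set}), so a density argument again yields a witness among the $u$.

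\textbf{Case $|P|\geq 2$.} The pair $u^{k_1}=a_1t+b_1$, $v^{k_2}=a_2t+b_2$ eliminates $t$ to $a_2u^{k_1}-a_1v^{k_2}=a_2b_1-a_1b_2$. If the right-hand side is nonzero and $\max(k_1,k_2)\ge3$, Lem.~\ref{lem:higher-k-finite-sol} gives finitely many enumerable solutions, so $t$ lies in an explicit finite set and I check each candidate. If both exponents are $2$, I have a generalised Pell equation: with two positive squares, Lem.~\ref{lem:simul-pell-finite} gives finiteness, and otherwise Cor.~\ref{cor:pell-sol-set} parametrises the solutions as LRBS. If the right-hand side is zero, the atoms are proportional and one reduces to the other up to constant factors. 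I iterate, reducing to at most one "essential" positive square atom parametrised by an LRBS, and then handle negated atoms by density over the exponentially sparse set.

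The main obstacle is the mixed positive/negated case with squares, where the candidate set is an LRBS and the negated atoms must fail infinitely often. Here I would show that the negated condition holds for all but finitely many terms, or identically fails, using Lem.~\ref{lem:simul-pell-finite} again: a term satisfying both Pell-type conditions is a solution of a simultaneous system, which has finitely many solutions unless the coefficients $n_1n_2$ form a square, and in that case the conditions coincide.
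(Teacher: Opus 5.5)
Your skeleton matches the paper's: reduce to a conjunction, parametrise a power atom through residue classes of its root, use Lem.~\ref{lem:higher-k-finite-sol} for a non-similar pair with an exponent $\ge 3$, Pell and simultaneous Pell equations for squares, and sparsity for negated atoms. It breaks, however, on \emph{similar} atoms ($a_2b_1=a_1b_2$), which are exactly your ``right-hand side zero'' and ``$\beta=0$'' cases. There you assert that one atom ``reduces to the other up to constant factors'' and that a negated atom is ``constant on the residue class''. Both hold only when one exponent divides the other.

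For $\mathbf{Z}^2(t)\wedge\neg\mathbf{Z}^4(t)$ with $t=u^2$, the negated atom says $|u|$ is not a perfect square. That is not a congruence condition on $u$: $u=1$ violates it and $u=2$ does not. Worse, $\mathbf{Z}^2(x)\wedge\mathbf{Z}^3(4x)$ is equivalent to $\mathbf{Z}^6(4x)$, an infinite set equivalent to neither atom. Hence $\exists x\,.\,\mathbf{Z}^2(x)\wedge\mathbf{Z}^3(4x)\wedge\neg\mathbf{Z}^6(4x)$ is false, even though $\neg\mathbf{Z}^6(4x)$ excludes only a sparse part of the solutions of either positive atom alone. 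Keeping one positive atom as the parametrisation and arguing by sparsity therefore returns the wrong answer.

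The missing idea is the paper's coalescing step, Lem.~\ref{lem:similar_coalesce}. Compare $\mathfrak{v}_p(ax+b)$ modulo each exponent at the finitely many primes dividing the coefficients. Then use the Chinese Remainder Theorem to replace all similar positive atoms by a single $\mathbf{Z}^K(Ax+B)$ with $K=\mathrm{lcm}(k_1,\dots,k_l)$, or detect unsatisfiability. Next drop atoms redundant with respect to it, since their truth value is determined. A remaining similar negated atom $\neg\mathbf{Z}^k(cx+d)$ with $k\nmid K$ is violated exactly where a coalesced atom of exponent $\mathrm{lcm}(k,K)>K$ holds. That set is a finite union of images of polynomials of degree $>K$, so it has density zero by Lem.~\ref{lem:zero_density}. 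When the positive set is an LRBS, such an atom is instead violated only finitely often, by Lem.~\ref{lem:higher-k-finite-sol} applied against the other, non-similar square atom.

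Separately, your justification in the last paragraph is wrong: when $n_1n_2$ is a square the two conditions do not coincide. Take positive atoms $\mathbf{Z}^2(x)$, $\mathbf{Z}^2(2x+1)$ and the negated atom $\neg\mathbf{Z}^2(x+1)$. Eliminating $x$ against $\mathbf{Z}^2(2x+1)$ gives $w^2-2z^2=1$ and $w^2-2z'^2=-1$, so $n_1n_2=4$. Yet a common solution forces $z'^2-z^2=1$, so only $x=0$ is excluded and $x=4$ is a witness. Treating the negated atom as identically violated would give the wrong answer.

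The correct argument is this: non-similarity makes $N_1\neq N_2$, both non-zero. So $z^2=n_1n_2(w^2-N_1)(w^2-N_2)$ has a quartic with four simple roots, and Baker's theorem gives finitely many, effectively computable solutions whether or not $n_1n_2$ is a square. The proof of Lem.~\ref{lem:simul-pell-finite} never uses its non-square hypothesis. Likewise, for two square atoms with $a_1a_2$ a perfect square you do not get a Pell equation; the difference-of-squares factorisation gives finiteness instead.
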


The core subroutine in the decision procedure solves systems of Diophantine equations of the form $w^k - nz^j  = N$. We begin by describing the pre-processing that leads to its invocation, which is summarised in the following proposition.

\begin{proposition}
    \label{prop:pre-1-pres-Zk}
    The decision problem in Thm.~\ref{thm:1-pres-Zk} Turing-reduces to deciding whether there exists $x \in \mathbb{Z}$ that satisfies a given set of constraints, which includes exactly one constraint of the form $x > c$, and constraints of the form $\mathbf{Z}^k(ax + b)$ and $\neg \mathbf{Z}^k(ax + b)$, where $a>0$.
\end{proposition}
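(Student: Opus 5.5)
We work with a single variable $x$. Every atomic formula then has the form $t_1(x) \mathrel{R} t_2(x)$ with $R \in \{<, =, \equiv_m\}$, or $\mathbf{Z}^k(t(x))$, where each term is an integer affine function $ax+b$.

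\textbf{Eliminating nested quantifiers.} A sentence of $\mathit{FO}^1$ may still requantify $x$ inside subformulas. We proceed by induction on structure, innermost quantifier first. Each subformula $\exists x\,\psi(x)$ with $\psi$ quantifier-free has no free variables, so it is a closed sentence. By the inductive hypothesis we can decide it using the procedure of the statement, and we replace it by $\top$ or $\bot$. This is where the reduction is Turing rather than many-one: the oracle is consulted once per quantified subformula. Universal quantifiers are handled as $\neg\exists\neg$. After this step it suffices to decide $\exists x\,\varphi(x)$ with $\varphi$ quantifier-free.

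\textbf{Normalising the atoms.} We put $\varphi$ in disjunctive normal form and decide each disjunct separately. Each disjunct is a conjunction of literals, which we normalise as follows.
\begin{itemize}
\item Linear literals simplify to $x<c$, $x>c$, $x=c$, negations of these, and congruences $x\equiv r \pmod m$ or their negations, after dividing out coefficients. A negated congruence becomes a finite disjunction of congruences.
\item An equation $x=c$ is checked directly by substitution.
\item All congruences together define a finite union of residue classes modulo $M=\operatorname{lcm}$. We split into cases $x = My + r$ with $0 \le r < M$ and substitute. This turns every $\mathbf{Z}^k(ax+b)$ into $\mathbf{Z}^k(aMy + ar + b)$, which is again affine in a single new variable, so the same variable can be reused.
\item An atom $\mathbf{Z}^k(ax+b)$ with $a=0$ is evaluated to a truth value.
\item Coefficients with $a<0$ are handled by splitting into the two half-lines $x>c$ and $x\le c$. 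On $x\le c$ we substitute $x=-y$; this flips the signs of all coefficients. Every atom has a single coefficient, so after this substitution every $a$ is positive simultaneously. On the half-line $x>c$ we would otherwise need another flip, so instead we split each atom separately. When $k$ is odd, $\mathbf{Z}^k(-t)\iff\mathbf{Z}^k(t)$, so the sign of $a$ can be flipped freely. When $k$ is even, $\mathbf{Z}^k(ax+b)$ with $a<0$ holds only for $x \le -b/a$, which is a bounded region relative to the other half-line.
\end{itemize}

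\textbf{Reducing to a single lower bound.} The case split above also gives a lower bound. The sole issue is that an atom with $a>0$ and one with $a<0$ cannot be made simultaneously positive by one substitution. We resolve this as follows.
\begin{itemize}
\item Upper bounds $x<c'$, and any bounded interval, are handled by finite enumeration: we check each $x$ in the finite range directly. Only an unbounded ray $x>c$ remains.
\item On that ray, an even-$k$ atom with $a<0$ is eventually false, because its argument becomes negative. Such atoms can therefore be replaced by $\bot$ after enlarging $c$.
\item Odd-$k$ atoms are flipped to $a>0$ using $\mathbf{Z}^k(-t)\iff\mathbf{Z}^k(t)$.
\end{itemize}
Several lower bounds merge into one by taking their maximum. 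If a disjunct has no lower bound but an upper bound, we first substitute $x=-y$, which converts it into the previous case.

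\textbf{Main obstacle.} The delicate part is the sign bookkeeping: ensuring that every case split yields exactly one $x>c$ constraint, with all coefficients positive and the number of atoms unchanged. Everything else is routine Presburger normalisation.
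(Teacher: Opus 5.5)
Your proof is correct and follows essentially the paper's route: DNF, merging congruences into residue classes modulo $M$ and substituting $x=My+r$, enumerating bounded ranges, reducing to a single ray $x>c$ (via $x\mapsto -x$ where needed), flipping odd-$k$ atoms, and disposing of even-$k$ atoms with $a<0$ because they are eventually false on the ray; your innermost-first evaluation of closed quantified subformulas is a more careful treatment of re-quantifying $x$ than the paper's direct appeal to a prenex form $Qx\,.\,\psi$. One blemish: the sentence claiming that substituting $x=-y$ makes every coefficient positive at once is false, since it turns the positive coefficients negative, so delete it and let your later ray argument, which you explicitly invoke for exactly this issue, do the work.
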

\begin{proof}
We first prove that deciding the theory indeed reduces to a constraint satisfaction problem. Any sentence in the theory may be written in prenex normal form as $Qx \, . \,\psi$ where $Q$ is a quantifier and $\psi$ is a quantifier-free formula. Note that since $\forall x\, . \,\psi$ is equivalent to $\neg \exists x\, . \,\neg \psi$, we may reduce to deciding the truth of existential sentences, i.e., when $Q$ is $\exists$. By putting $\psi$ in disjunctive normal form, we may rewrite $\exists x \, . \,\psi$ as $\exists x\, . \bigvee_i \varphi_i$ where each $\varphi_i$ is a conjunction of literals, and this may further be rewritten as $\bigvee_i \exists x\, . \,\varphi_i$. In this way we reduce to deciding the satisfiability of a given conjunction of literals, and we proceed by analysing the constraints that may arise from literals in the theory.

By suitably rearranging and simplifying, we can assume that literals are of the form $x = c$, $x < c$, $x > c$, $x \equiv_m c$, $\mathbf{Z}^k(ax+b)$, and $\neg \mathbf{Z}^k(ax + b)$. Observe that this requires rewriting $\neg (x=c)$ as $(x<c) \vee (x > c)$, $\neg (ax \equiv_m b)$ as $\bigvee^{m-1}_{r = 0, r \ne b} ax \equiv_m r$, and $ax \equiv_m b$ as $\bigvee_{r=0, ar \equiv_m b}^{m-1} x \equiv_m r$. 

To eliminate the modular-arithmetic constraints, we use an extended version of the Chinese Remainder Theorem (see e.g., \cite[Thm.~3.12]{elem-NT-text}) to coalesce the modular-arithmetic constraints $x \equiv_{m_i} c_i$ into a single conjunct $x \equiv_M r$, or prove that they are infeasible. We now make this constraint implicit by replacing all occurrences of $x$ by $My + r$, and simplifying the resulting expressions.

We claim that we may further reduce to considering conjunctions where $x = c$ does not occur, and where the only inequality that occurs is $x > c$ for some $c \in \mathbb{Z}$. Indeed,
if a term $x = c$ does occur, we simply perform the obvious substitution, reducing to a quantifier-free formula. If the remaining conjuncts imply that $x$ is in a bounded interval, i.e., there are terms $x > c_1$ and $x < c_2$, this case is readily solved by finite inspection.  
Thus we may assume at most one inequality appears. 
If no inequalities appear, we simply case split by considering in turn 
$x < 0$, and $x = 0$, and $x >0$. If we have $x < c$, the obvious linear substitution $x \mapsto -x$ turns this term into one of the form $x > c$. This proves our claim.

We finally address the positivity of the coefficients of $x$ in the power predicates. If $k$ is odd, we can assume that in all instances (positive and negative) of $\mathbf{Z}^k(ax + b)$, the coefficient $a$ is positive, by possibly replacing $\mathbf{Z}^k(ax+b)$ by $\mathbf{Z}^k(-ax-b)$. If $k$ is even and $a < 0$ in a positive occurrence of such an atom, we have an upper bound on $x$ (as $x$ is assumed to be lower-bounded), and the conjunction can be handled trivially. If $a < 0$ for a term $\neg \mathbf{Z}^k(ax + b)$, this term always holds when $x$ exceeds a computable bound and can thus be disposed of straightforwardly. We can therefore reduce to the case where all coefficients of $x$ in the power predicates are positive. 
\end{proof}
We make a further observation: certain power constraints can be ``redundant'' in view of other power constraints. For example, consider the three constraints $\mathbf{Z}^2(x)$, $\mathbf{Z}^2(3x)$, and $\mathbf{Z}^4(16x)$. If $\mathbf{Z}^4(16x)$ holds, then $\mathbf{Z}^2(x)$ also holds and $\mathbf{Z}^2(3x)$ does not. We therefore say that $\mathbf{Z}^2(x)$ and $\mathbf{Z}^2(3x)$
are both \emph{redundant with respect to} $\mathbf{Z}^4(16x)$, since the (positive) truth of $\mathbf{Z}^4(16x)$ uniquely determines the truth values of the other two constraints.
We formalise this idea in the following definition:
\begin{definition}
The constraint $\mathbf{Z}^k(cx+d)$
is \emph{redundant with respect to} $\mathbf{Z}^j(ax+b)$ if $k \mid j$ and $ad=bc$. 
\end{definition}
In general, if $\mathbf{Z}^k(cx+d)$ is redundant with respect to $\mathbf{Z}^j(ax+b)$ then the (positive) truth of $\mathbf{Z}^j(ax+b)$ determines the truth value of $\mathbf{Z}^k(cx+d)$. Indeed, if $ax + b$ is a perfect $j$-th power, then in particular it is a perfect $k$-th power, and so is $c^k (ax + b) = ac^{k-1}(cx + d)$ (using $ad = bc$). We thus have that $cx + d$ is a perfect $k$-th power if and only if $ac^{k-1}$ is, and the latter can be effectively checked. The above discussion shows that we can identify and discard (positive or negative) constraints that are redundant with respect to some given positive constraint.

In the same vein, we define the notion of \emph{similar} constraints.
\begin{definition}
The constraint $\mathbf{Z}^j(cx + d)$
is \emph{similar} to a constraint $\mathbf{Z}^k(ax + b)$ if $ad = bc$. 
\end{definition}
Note that the notion of being similar is a transitive property.
Though similar constraints cannot be as immediately discarded as redundant constraints, we will show that a conjunction of similar positive constraints can be coalesced into a single positive constraint.

For the proof, we require the notion of \emph{$p$-adic valuation}. Recall that for a prime $p$, the $p$-adic valuation of a non-zero integer $n$, denoted $\mathfrak{v}_p(n)$, is equal to the highest power of $p$ that divides $n$, e.g., $\mathfrak{v}_2(20) = 2$. We take $\mathfrak{v}_p(0) = \infty$. The valuation $\mathfrak{v}_p$ extends to rational numbers as $\mathfrak{v}_p(m/n) = \mathfrak{v}_p(m) - \mathfrak{v}_p(n)$.

\begin{lemma} \label{lem:similar_coalesce}
Given similar constraints $\mathbf{Z}^{k_1}(a_1x+b_1), \dots, \mathbf{Z}^{k_l}(a_lx+b_l)$, either they are not simultaneously satisfiable, or we may find a constraint $\mathbf{Z}^K(Ax+B)$ that is satisfied if and only if $\mathbf{Z}^{k_i}(a_ix+b_i)$ is satisfied for all $1 \leq i \leq l$. Furthermore, $K = \mathrm{lcm}(k_1, \dots, k_l)$, where $\mathrm{lcm}$ denotes the least common multiple.
\end{lemma}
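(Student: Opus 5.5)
Since the constraints are pairwise similar ($a_i b_j = a_j b_i$), all the affine forms $a_i x + b_i$ are rational multiples of one primitive form. I will write $g_i = \gcd(a_i,b_i)$ (all $a_i>0$ by Prop.~\ref{prop:pre-1-pres-Zk}). Similarity then says $(a_i/g_i, b_i/g_i)$ is the same primitive pair $(\alpha,\beta)$ for every $i$. Hence $a_i x + b_i = g_i y$, where $y := \alpha x + \beta$ is a single integer-valued affine form. The conjunction becomes the requirement that $g_i y$ is a perfect $k_i$-th power for every $i$. Everything is now a statement about the single integer $y$ and the finite set of primes $P$ dividing $\prod_i g_i$. (If $y=0$ all constraints hold trivially; this one value of $x$, if integral, is handled separately or absorbed, since $0$ is a $K$-th power.)

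Next I characterise the condition prime by prime, with signs handled separately. For $y \neq 0$, $g_i y \in \mathbf{Z}^{k_i}$ holds iff two things hold. First, a sign condition: $y>0$, or $k_i$ odd. Second, for every prime $p$, $\mathfrak{v}_p(g_i)+\mathfrak{v}_p(y) \equiv 0 \pmod{k_i}$. For $p \notin P$ this reads $k_i \mid \mathfrak{v}_p(y)$ for all $i$, i.e.\ $K \mid \mathfrak{v}_p(y)$. For $p \in P$ we get the simultaneous congruences $\mathfrak{v}_p(y) \equiv -\mathfrak{v}_p(g_i) \pmod{k_i}$. By the generalised Chinese Remainder Theorem (as used in Prop.~\ref{prop:pre-1-pres-Zk}), these are either inconsistent, in which case the constraints are unsatisfiable, or equivalent to a single congruence $\mathfrak{v}_p(y) \equiv -e_p \pmod K$ with $0 \le e_p < K$. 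Similarly, if some $k_i$ is even we need $y>0$, and otherwise the sign of $y$ is unconstrained but $K$ is odd.

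Finally, I set $G := \prod_{p\in P} p^{e_p}$ and take the single constraint $\mathbf{Z}^K(G y) = \mathbf{Z}^K(G\alpha x + G\beta)$. The equivalence is checked directly. $\mathfrak{v}_p(Gy) \equiv 0 \pmod K$ for all $p$ iff all the congruences above hold. If $K$ is even, $Gy \in \mathbf{Z}^K$ forces $Gy>0$, hence $y>0$; if $K$ is odd, all $k_i$ are odd and sign is irrelevant. So $A = G\alpha$ and $B = G\beta$, and the new constraint is similar to the old ones, with $K=\mathrm{lcm}(k_1,\dots,k_l)$ as claimed.

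The main obstacle is bookkeeping rather than depth. I must get the CRT consistency step right, which is exactly where unsatisfiability is detected. I must also treat the sign/parity interaction and the degenerate value $y=0$ carefully. In particular, $y=0$ makes every constraint true regardless of the congruences, so in the inconsistent case I should say the constraints are satisfiable only at that single point, which can be tested directly.
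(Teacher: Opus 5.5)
Your proof is correct and is essentially the paper's argument: your primitive form $y=\alpha x+\beta$ is the paper's reduced $ax+b$ (with $g_i=a_i/a$), and both proofs reduce the conjunction to per-prime valuation congruences, merge them by the generalised Chinese Remainder Theorem at the finitely many primes dividing the $g_i$, and multiply by $\prod_p p^{e_p}$ to obtain a single $\mathbf{Z}^K$ constraint. Your explicit handling of signs (even $K$ forcing $y>0$) and of the point $y=0$ is more careful than the paper, which glosses over both. Your caveat about $y=0$ in fact corrects the lemma's dichotomy as stated: for example, $\mathbf{Z}^2(x)\wedge\mathbf{Z}^2(2x)$ is satisfiable only at $x=0$, and no single constraint $\mathbf{Z}^K(Ax+B)$ is equivalent to it.
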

\begin{proof}
Assume that $\mathbf{Z}^{k_1}(a_1 x + b_1), \ldots, \mathbf{Z}^{k_l}(a_l x + b_l)$ are similar, and let $b/a$ with $a > 0$ be the reduced form of the rational constant $b_i/a_i$ (which is independent of $i$ as $a_ib_j = b_ia_j$ for all $1 \le i,j\le l$).

If all constraints are satisfied, we have that $\mathfrak{v}_p(a_i x + b_i) \equiv 0 \mod k_i$ for all primes $p$ and indices $i = 1, \ldots, l$. This can be rearranged as 
\begin{equation} \label{eqn:vp_mod_k_i}
\mathfrak{v}_p(ax + b) \equiv  \mathfrak{v}_p(a) - \mathfrak{v}_p(a_i) \mod k_i
\end{equation}
for each $p,i$.
Note that $\mathfrak{v}_p(a) - \mathfrak{v}_p(a_i) \equiv 0 \mod k_i$ holds for all $i$ for all but finitely many primes that divide some $a_i$ --- let us call such primes {``interesting''}. For each interesting prime $p$, we apply the (extended) Chinese Remainder Theorem \cite[Thm.~3.12]{elem-NT-text} to determine whether the constraints given by \eqref{eqn:vp_mod_k_i} for $i=1, \dots, l$ are simultaneously satisfiable, and if so, compute a residue $r_p$ such that they hold if and only if $\mathfrak{v}_p(ax + b) \equiv -r_p \mod K$, where $K = \mathrm{lcm}(k_1,\dots,k_l)$.
This is equivalent to $(ax + b)\prod_p p^{r_p}$ being a perfect $K$-th power. 
By construction, $\mathbf{Z}^K((ax + b)\prod_p p^{r_p})$ holds if and only if $\mathbf{Z}^{k_i}(a_ix +b_i)$ holds for all $1 \leq i \leq l$. 
\end{proof}

\begin{remark}
If $\mathbf{Z}^{k_1}(a_1x + b_1), \dots, \mathbf{Z}^{k_l}(a_lx+b_l)$ are similar and simultaneously satisfiable, by Lem.~\ref{lem:similar_coalesce} we may discard each $\mathbf{Z}^{k_i}(a_ix+b_i)$ to be replaced by the single constraint $\mathbf{Z}^K(Ax+B)$. We say that we \emph{coalesce} $\mathbf{Z}^{k_1}(a_1x+b_1), \dots, \mathbf{Z}^{k_l}(a_lx + b_l)$ into $\mathbf{Z}^K(Ax+B)$.
\end{remark}

In the sequel, we shall assume there are no pairs of redundant constraints or similar positive constraints: we first discard redundant constraints, then coalesce similar positive constraints into a single positive constraint, and then again discard redundant constraints. As an example, if we had $\mathbf{Z}^2(5x), \mathbf{Z}^3(4x), \neg \mathbf{Z}^6(24x)$, the last constraint becomes redundant only after the first two are coalesced into $\mathbf{Z}^6(500x)$.

Once the pre-processing step is completed, our strategy for solving the satisfiability problem consists in handling the various constraints ($x > c$, positive constraints of the form $\mathbf{Z}^k(ax + b)$, and negative constraints of the form $\neg \mathbf{Z}^k(ax + b)$) sequentially in the order given. 
More precisely, we maintain a set $S \subseteq \mathbb{Z}$ of candidate solutions, and iterate over the positive constraints. If, at a given iteration, the set $S$ of candidates is finite, then we can decide satisfiability by simply enumerating $S$. Otherwise, taking a further positive constraint into account will restrict the set of candidates to a subset that has relative density $0$ in $S$.
Let $S_{\mathsf{pos}}$ be the set of candidates after having accounted for all the positive constraints. If this set is infinite, then the overall conjunction is necessarily satisfiable. Indeed, by the same reasoning above, any negative constraint will rule out only a subset of $S_{\mathsf{pos}}$ with relative density $0$.

To formally implement this argument, we first analyse the solution sets arising from positive constraints.

\begin{proposition} \label{prop:pos_pow_sol_sets}
The solution set of all $x \in \mathbb{Z}$ satisfying $l$ non-similar constraints of the form $\mathbf{Z}^{k_i}(a_ix+b_i)$ with $a_i>0$ for $1 \leq i \leq l$ is effectively computable and has the following structure.
\begin{enumerate}
    \item If $l = 0$ then $S = \mathbb{Z}$.
    \item If $l = 1$ then either $S = \varnothing$ or $S$ is a finite union of sets of the form $f(\mathbb{Z})$ where $f$ is a polynomial of degree $k_1$ with a positive leading coefficient.
    \item If $l = 2$ then either $S = \varnothing$ or one of the following holds.
    \begin{enumerate}
        \item If $\max \{k_1,k_2\} > 2$ then $S$ is finite.
        \item If $k_1,k_2 = 2$ then $S$ is a union of finitely many simple reversible LRBS. 
    \end{enumerate}
    \item If $l \geq 3$ then $S$ is finite.
\end{enumerate}
\end{proposition}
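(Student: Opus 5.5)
We treat the cases in increasing order of $l$, reducing each to the Diophantine results already stated.

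\textbf{Cases $l=0$ and $l=1$.} The case $l=0$ is immediate. For $l=1$, the constraint $\mathbf{Z}^{k_1}(a_1x+b_1)$ holds if and only if $a_1x+b_1=u^{k_1}$ for some $u\in\mathbb{Z}$. This in turn requires $u^{k_1}\equiv b_1 \pmod{a_1}$. The set of such $u$ is a finite union of residue classes $u\equiv r\pmod{a_1}$, which we compute by inspecting all residues. If it is empty, then $S=\varnothing$. Otherwise, for each admissible residue $r$ we substitute $u=a_1t+r$ and obtain
\[
x=f_r(t)=\frac{(a_1t+r)^{k_1}-b_1}{a_1}.
\]
This is an integer-valued polynomial in $t$ of degree $k_1$, with leading coefficient $a_1^{k_1-1}>0$. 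Hence $S=\bigcup_r f_r(\mathbb{Z})$.

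\textbf{Case $l=2$.} We write $a_1x+b_1=w^{k_1}$ and $a_2x+b_2=z^{k_2}$. Eliminating $x$ gives
\[
a_2w^{k_1}-a_1z^{k_2}=a_2b_1-a_1b_2=:N.
\]
Non-similarity is exactly the condition $N\neq 0$, and $x$ is recovered from $w$.

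\emph{Case (a):} suppose $\max\{k_1,k_2\}>2$, and say $k_2\ge 3$. Multiplying by $a_2^{k_1-1}$ and setting $W=a_2w$ gives
\[
W^{k_1}=a_1a_2^{k_1-1}z^{k_2}+a_2^{k_1-1}N.
\]
This is of the form handled by Lem.~\ref{lem:higher-k-finite-sol}, with exponents $k_1\ge2$ and $j=k_2\ge3$ and non-zero constant term. It therefore has finitely many, effectively enumerable solutions. If instead $k_1\ge3$ and $k_2=2$, we swap the roles of the two constraints.

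\emph{Case (b):} suppose $k_1=k_2=2$. Setting $W=a_2w$ gives
\[
W^2-a_1a_2z^2=a_2N.
\]
If $a_1a_2$ is not a perfect square, Cor.~\ref{cor:pell-sol-set} gives the solution set as a finite union of pairs of simple reversible LRBS. Then $x=(w^2-b_1)/a_1$ is obtained by squaring $A_1\varepsilon^m+A_2\varepsilon^{-m}$, which is again a simple LRBS with roots $\varepsilon^2,1,\varepsilon^{-2}$. Restricting to the $m$ for which the division by $a_1$ and $a_2$ yields an integer is a periodic condition, since the LRBS are eventually periodic modulo any integer in both directions. Splitting $m$ into residue classes modulo this period keeps each piece a simple reversible LRBS. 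If $a_1a_2=s^2$ is a perfect square, the equation factors as $(W-sz)(W+sz)=a_2N\neq0$. This has finitely many solutions, which we also regard as a finite union of (constant) LRBS.

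\textbf{Case $l\ge3$.} Take three constraints. If some $k_i>2$, the pair containing it is already finite by case (a). So assume $k_1=k_2=k_3=2$. We then get
\[
a_2a_3w^2=a_2a_3b_1+a_1a_3(z_2^2-b_2)\cdot(\ldots),
\]
or more cleanly, after scaling, a pair of simultaneous generalised Pell equations in a common variable $w$:
\[
w^2-n_1z_1^2=N_1,\qquad w^2-n_2z_2^2=N_2.
\]
Pairwise non-similarity gives $N_1\neq N_2$. If $n_1n_2$ is not a square, Lem.~\ref{lem:simul-pell-finite} gives finitely many solutions. If it is a square, we multiply the two equations and argue as follows. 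We have
\[
(n_1n_2)(z_1z_2)^2=(w^2-N_1)(w^2-N_2),
\]
and the left side is a perfect square. A quartic $(w^2-N_1)(w^2-N_2)$ with $N_1\neq N_2$ that is a square of a polynomial would force repeated roots, which is impossible. Hence it can be a square for only finitely many $w$, either by Baker's result or elementarily by bounding it between consecutive squares. If $n_i$ itself is a square, finiteness follows as in the factoring argument above.

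\textbf{Main obstacle.} I expect the main difficulty to be the bookkeeping in case (b): checking that the integrality and divisibility restrictions cut the LRBS into finitely many simple reversible LRBS, and treating the degenerate square cases uniformly.
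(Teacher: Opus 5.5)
Your proposal is correct and follows the paper's proof essentially step by step: the same residue-class parametrisation for $l=1$, elimination of $x$ followed by Lem.~\ref{lem:higher-k-finite-sol} or Cor.~\ref{cor:pell-sol-set} (with periodicity to enforce integrality) for $l=2$, and reduction to simultaneous generalised Pell equations and Lem.~\ref{lem:simul-pell-finite} for $l\ge 3$. Your separate treatment of the subcase where $n_1n_2$ is a square is a small improvement, since the paper invokes that lemma without checking its non-square hypothesis; two fixes are still needed. First, replace the garbled display by the explicit scaling $w=a_2a_3y_1$, so that $N_1-N_2=a_1a_2a_3(a_2b_3-a_3b_2)\neq 0$ is actually verified. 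Second, note that reversible LRBS are purely (not merely eventually) periodic modulo any integer, which is what makes the split into residue classes exact.
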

\begin{proof}
\textbf{Case (1)} is obvious.

\textbf{Case (2)}: If $l = 1$ then $S = \varnothing$ if $b_1$ is not a $k_1$-th power modulo $a_1$, and this can be checked algorithmically by enumerating all $k_1$-th powers modulo $a_1$. Conversely, if $b_1$ is a $k_1$-th power modulo $a_1$ then there are infinitely many solutions and they can be parametrised as follows. Pick $u \in \mathbb{N}$ such that $u^{k_1} \equiv b_1 \bmod a_1$. Then consider the polynomial $f_{u,k_1,a_1} \in \mathbb{Z}[t]$ that satisfies
\begin{equation*}
    (u+ta_1)^{k_1} =  a_1f_{u,k_1,a_1}(t) + b_1 \, .
\end{equation*}
Then $f_{u,k_1,a_1}$ has degree $k_1$ and a positive leading coefficient (as $a_1 > 0$). Every $t \in \mathbb{Z}$ gives rise to a solution $x = f_{u,k_1,a_1}(t)$ of $\mathbf{Z}^k(a_1x+b_1)$. Conversely, whenever $x \in \mathbb{Z}$ is a solution, then $(a_1x+b_1)^{1/k_1}$ must be of the form $u+ta_1$ for some $u$ such that $u^{k_1} \equiv b_1 \bmod a_1$ and some $t \in \mathbb{Z}$. Therefore the solution set is exactly 
\begin{equation*}
    S = \bigcup_{\substack{0\leq u < a_1 \\ u^{k_1} \equiv b_1 \bmod a_1}} f_{u,k_1,a_1}(\mathbb Z) \, .
\end{equation*}
\textbf{Case (3)}: Suppose $l = 2$ and (without loss of generality) $k_2 \geq k_1$. We have the system
\begin{equation}\label{eq:sim_sys_1}
    a_1x + b_1 = y_1^{k_1} \:\land \:   a_2x + b_2 = y_2^{k_2} \,.
\end{equation}
By taking a linear combination, we eliminate $x$ to obtain the system
\begin{align}
    a_2 y_1^{k_1} - a_1 y_2^{k_2} &= a_2b_1 - a_1b_2 \label{eq:sim_sys_3} \\
    y_1^{k_1} \equiv_{a_1} b_1 &\wedge y_2^{k_2} \equiv_{a_2} b_2    \, .
\end{align}
By multiplying \eqref{eq:sim_sys_3} by $a_2^{k_1-1}$, and setting $w \coloneq a_2y_1$ and $z \coloneq y_2$ we obtain the system
\begin{align}
    w^{k_1} - (a_1 a_2^{k_1-1})z^{k_2} &= a_2^{k_1-1}(a_2b_1-a_1b_2) \label{eq:sim_sys_4} \\
    \bigvee_{\substack{0 \leq r_i < a_i \\ r_1^{k_1} \equiv_{a_1}b_1 , \, r_2^{k_2} \equiv_{a_2} b_2}}  w &\equiv_{a_1 a_2} a_2 r_1 \:\wedge\: z \equiv_{a_2} r_2 \label{eq:sim_sys_5} \, .
\end{align}
\textbf{Case (3a)}: If $k_2 \geq 3$ then \eqref{eq:sim_sys_4} satisfies the conditions of Lem.~\ref{lem:higher-k-finite-sol} (note that $a_2b_1-a_1b_2 \neq 0$ by non-similarity) and so there are finitely many effectively computable solutions, which may be checked against the modular constraints. 

\textbf{Case (3b)}: Suppose $k_1 = k_2 = 2$. If $a_1 a_2^{k_1-1}$ is a perfect square, then the left-hand side of \eqref{eq:sim_sys_4} may be factored using the difference of two squares, and by considering prime factorisations there are finitely many solutions. Otherwise, if $a_1 a_2^{k_1-1}$ is not a perfect square, \eqref{eq:sim_sys_4} comprises a generalised Pell equation, and by Cor.~\ref{cor:pell-sol-set} the solutions $(w,z)$ are exactly the value sets of finitely many pairs of simple reversible LRBS. 

Since reversible LRBS are periodic modulo $N$ for any $N \ge 1$, given an LRBS $\bm u$ one can effectively find an integer $M$ such that each subsequence $\langle u_{Mn+s} \rangle_{n=-\infty}^{\infty}$ is constant modulo $a_1a_2$ for each $0 \leq s \leq M-1$. Therefore the solution set of pairs $(w,z)$ satisfying \eqref{eq:sim_sys_4} and \eqref{eq:sim_sys_5} is comprised of a finite union of such subsequences (or possibly the empty set if none of the modular constraints are satisfied).

For every solution pair $(w,z)$, we recover a solution $x$ to the original system  \eqref{eq:sim_sys_1} by $x = \frac{z^{k_2}-b_2}{a_2}$. Since for any simple LRBS $\bm u$, we have $\left\langle \frac{u_n^{k_2}-b_2}{a_2} \right\rangle_{n=-\infty}^\infty$ is also a simple LRBS, and moreover for any $z$ satisfying the modular constraints \eqref{eq:sim_sys_5} we have $\frac{z^{k_2}-b_2}{a_2}$ is an integer, the set $S$ of solutions $x$ is a union of finitely many simple reversible LRBS\@.

\textbf{Case (4)}: Assume $l \geq 3$. If $\max\{k_1,k_2,k_3\} \geq 3$, then $S$ is finite and effectively computable by Case (3). Otherwise, $k_1 = k_2 = k_3=2$, and by the same process as in Case (3) we obtain a system
\begin{align}
    a_2y_1^2 - a_1y_2^2 &= a_2b_1-a_1b_2 \label{eq:sim_sys_6} \\
    a_3y_1^2 - a_1 y_3^2 &= a_3b_1 - a_1 b_3  \label{eq:sim_sys_7}
\end{align}
along with some modular constraints which we omit. By multiplying \eqref{eq:sim_sys_6} by $a_2a_3^2$ and \eqref{eq:sim_sys_7} by $a_3a_2^2$, and setting $w \coloneq a_2a_3y_1$, $z_2 \coloneq y_2$, $z_3 \coloneq y_3$, we obtain the simultaneous equations
\begin{align}
    w^2 - a_1a_2a_3^2 z_2^2 = a_2a_3^2(a_2b_1-a_1b_2)  \label{eq:sim_sys_8}\\
    w^2 - a_1a_2^2a_3 z_3^2 = a_2^2a_3(a_3b_1-a_1b_3) \label{eq:sim_sys_9} \, .
\end{align}
If either $a_1a_2a_3^2$ or $a_1a_2^2a_3$ is a perfect square then there are finitely many solutions by factoring the left-hand side of \eqref{eq:sim_sys_8} or \eqref{eq:sim_sys_9} using the difference of two squares. Otherwise, noting that $a_2b_1-a_1b_2$ and $a_3b_1-a_1b_3$ are non-zero by non-similarity, we have a system of simultaneous Pell equations in $w,z_2,z_3$, which has finitely many effectively computable solutions by Lem.~\ref{lem:simul-pell-finite}, so $S$ is finite and effectively computable.
\end{proof}

Prop.~\ref{prop:pos_pow_sol_sets} already gives an algorithm to decide the satisfiability of any set of constraints of the form $x > c$ and positive constraints $\mathbf{Z}^k(ax+b)$. We now show that when the solution set is infinite, negative constraints can only be violated on a subset of relative density $0$ within the solution set, meaning that arbitrarily many negative constraints will still leave infinitely many solutions overall. 
\begin{definition}
Let $\varnothing \neq T \subseteq \mathbb N$ and $S \subseteq T$. Define the \emph{(upper) density of $S$ inside $T$} to be
\begin{equation*}
    \limsup_{n \to \infty} \frac{\left|S \cap [0,n] \right|}{\left| T \cap [0,n] \right|} \, . 
\end{equation*}
\end{definition}
First we need an elementary lemma. Given a function $f : \mathbb Z \to \mathbb Z$, define $S_f(c,n) = \mathrm{Im}(f) \cap [c,n]$. 
\begin{lemma} \label{lem:zero_density}
Suppose $f \in \mathbb{Z}[x]$ has degree $d$ and a positive leading coefficient, $c_1,c_2, N > 0$, and $g : \mathbb N \to \mathbb N$ is a function such that for all $y \ge N$ we have $|g(y)| > c_2y^{d+1}$.
Then we have
\begin{equation*}
    \limsup_{n \to \infty} \frac{|S_g(c_1,n) \cap S_f(c_1,n)|}{|S_f(c_1,n)|} = 0 \, .
\end{equation*}
\end{lemma}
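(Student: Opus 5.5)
Throughout, $n$ is large and $d \ge 1$ (for $d=0$ the image of $f$ is a single point and the statement is degenerate). The plan is to bound the numerator and the denominator separately by explicit powers of $n$ and compare exponents. The denominator is at least of order $n^{1/d}$; the numerator is at most of order $n^{1/(d+1)}$. Since $1/(d+1) < 1/d$, the ratio tends to $0$.

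\textbf{Lower bound for the denominator.} Write the leading coefficient of $f$ as $\alpha > 0$. Because $f$ has integer coefficients and degree $d$, there are constants $C, t_0$ such that $f$ is strictly increasing on $[t_0,\infty)$, $f(t_0) \ge c_1$, and $0 < f(t) \le C t^d$ for $t \ge t_0$.

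Consequently, every integer $t$ with $t_0 \le t \le (n/C)^{1/d}$ gives a distinct value $f(t) \in [c_1, n]$. Hence
\[
|S_f(c_1,n)| \ge (n/C)^{1/d} - t_0 - 1 \ge c' n^{1/d}
\]
for some $c'>0$ and all large $n$. Only injectivity on a tail matters here, not the exact count.

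\textbf{Upper bound for the numerator.} We use $|S_g(c_1,n)\cap S_f(c_1,n)| \le |S_g(c_1,n)|$. An element of $S_g(c_1,n)$ is a value $g(y) \le n$. If $y \ge N$, then $c_2 y^{d+1} < g(y) \le n$, which forces $y < (n/c_2)^{1/(d+1)}$. The values $y < N$ contribute at most $N$ elements.

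Therefore
\[
|S_g(c_1,n)| \le N + (n/c_2)^{1/(d+1)} = O\!\left(n^{1/(d+1)}\right).
\]
This step does not need $g$ to be injective, since the image is bounded by the number of admissible arguments.

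\textbf{Combining.} The ratio is $O\!\left(n^{1/(d+1) - 1/d}\right)$, which tends to $0$, so the $\limsup$ equals $0$.

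\textbf{Main obstacle.} The only care needed is in the denominator: we must make sure the counted values of $f$ are distinct and lie in $[c_1,n]$. Restricting to the tail $t \ge t_0$, where $f$ is increasing and at least $c_1$, handles both points. Negative arguments only add more elements to the denominator, so ignoring them is harmless.
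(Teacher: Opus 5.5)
Your proof is correct and follows essentially the same argument as the paper. You get a lower bound of order $n^{1/d}$ on $|S_f|$ from injectivity and polynomial growth of $f$ on a tail, and an upper bound $N + (n/c_2)^{1/(d+1)}$ on $|S_g|$ by counting admissible arguments. Comparing the exponents then finishes it.

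There are two small differences. You work with $[c_1,n]$ directly instead of first reducing to $[0,n]$. You also make the assumption $d \ge 1$ explicit, which the paper's proof needs but uses only tacitly.
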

\begin{proof}
It is sufficient to show that
\begin{equation} \label{eq:lim_sup_0}
    \limsup_{n \to \infty} \frac{|S_g(0,n)|}{|S_f(0,n)|} = 0 \, .
\end{equation}
Since the leading coefficient of $f$ is positive, there are constants $A_1,A_2>0$ such that for all $y \geq A_1$, we have $f$ is injective and $0\leq f(y) \leq A_2 y^d$. 
Therefore,
\begin{align*}
    |S_f(0, A_2 y^d)| \ge y-A_1 \, ,
\end{align*}
and so we have
\begin{align*}
    |S_f(0,n)| \ge \left(\frac{n}{A_2} \right)^{1/d} - A_1 \, ,
\end{align*}
while $|S_g(0, n)| \le N + (n/c_2)^{1/(d+1)}$.
Then \eqref{eq:lim_sup_0} follows from the following inequality, which holds whenever $|S_f(0,n)|>0$ and $(n/A_2)^{1/d} - A_1 > 0$;
\begin{equation*} \frac{|S_g(0,n)|}{|S_f(0,n)|} \le \frac{N + (n/c_2)^{1/(d+1)}}{(n/A_2)^{1/d} - A_1}\,.\qedhere
\end{equation*}
\end{proof}


\begin{proposition} \label{prop:negative_constraints}
Let $S$ be the set of solutions $x$ to a system of constraints given by $x > c$ and non-similar constraints $\mathbf{Z}^{k_i}(a_ix+b_i)$ for $i =1 ,\dots, l$, $a_i > 0$. If $S$ is infinite, then the subset $S' \subseteq S$ for which any non-redundant negative constraint $\neg \mathbf{Z}^k(ax+b)$ is violated has density $0$ relative to $S$.
\end{proposition}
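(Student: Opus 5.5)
Since $S$ is infinite, Prop.~\ref{prop:pos_pow_sol_sets} leaves only three possibilities: $l=0$ with $S$ cofinite in $(c,\infty)$; $l=1$ with $S$ a finite union of sets $f(\mathbb Z)\cap(c,\infty)$, where $f$ has degree $k_1$ and positive leading coefficient; or $l=2$, $k_1=k_2=2$, with $S$ a finite union of simple reversible LRBS restricted to $(c,\infty)$. I would treat these in turn. In each case I first apply Prop.~\ref{prop:pos_pow_sol_sets}, or Case~(2) of its proof, to the enlarged system consisting of the positive constraints together with $\mathbf{Z}^k(ax+b)$. This describes $S'$, and I then compare growth rates. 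Upper density is subadditive, and a set of density $0$ inside one piece of a finite union of pieces of positive density has density $0$ in the union. So it suffices to work piece by piece, provided I check that each piece of $S$ contributes positive density (or discard pieces of density zero relative to another piece).

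\textbf{Case $l=0$.} Here $S'$ is contained in the set of $x$ with $ax+b$ a perfect $k$-th power. This set is a finite union of images of degree-$k$ polynomials, so it has $O(n^{1/k})$ elements up to $n$, while $S$ has $\asymp n$. The density is therefore $0$.

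\textbf{Case $l=1$.} The negative constraint is not redundant, so it is either non-similar to $\mathbf{Z}^{k_1}(a_1x+b_1)$, or similar with $k\nmid k_1$. If it is non-similar, the joint system has two non-similar constraints. When $\max\{k,k_1\}>2$, Case~(3a) makes $S'$ finite. When $k=k_1=2$, $S'$ is a finite union of LRBS, which grow exponentially, so $|S'\cap[0,n]|=O(\log n)$ against $|S\cap[0,n]|\gg n^{1/2}$. If instead it is similar with $k\nmid k_1$, Lem.~\ref{lem:similar_coalesce} coalesces the two constraints into a single constraint $\mathbf{Z}^K(Ax+B)$ with $K=\mathrm{lcm}(k,k_1)>k_1$. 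Then $S'$ is a finite union of images of polynomials of degree $K\ge k_1+1$. Taking $g$ to enumerate these images, Lem.~\ref{lem:zero_density} applied against each $f$ in the decomposition of $S$ gives density $0$.

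\textbf{Case $l=2$, $k_1=k_2=2$.} Adding the negative constraint as a positive one gives either three pairwise non-similar constraints, so $S'$ is finite by Case~(4), or something that can be coalesced with one of the two constraints. In the second situation I coalesce via Lem.~\ref{lem:similar_coalesce} and get two non-similar constraints, one of them of exponent $K>2$ by non-redundancy; Case~(3a) then again makes $S'$ finite. A finite subset of an infinite set has density $0$.

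The step I expect to be most delicate is the density comparison in Case $l=1$. The pieces $f(\mathbb Z)$ overlap, and the lemma is stated for a single $f$. I would handle this by noting that every piece has counting function $\asymp n^{1/k_1}$, so the union does too, while $|S'\cap[0,n]|=O(n^{1/K})$ directly. The main subtlety is then just checking that non-redundancy forces $K>k_1$, or forces non-similarity, in every subcase.
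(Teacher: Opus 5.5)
Your proposal is correct and follows the paper's proof essentially step for step. It uses the same three cases, the same similar/non-similar split at $l=1$ with coalescing to $K=\mathrm{lcm}(k,k_1)>k_1$, and finiteness via Cases (3a) and (4) of Prop.~\ref{prop:pos_pow_sol_sets} at $l=2$, with direct counting bounds replacing some uses of Lem.~\ref{lem:zero_density}. Your caveat about each piece needing positive density is unnecessary, since $|S\cap[0,n]|\ge|P\cap[0,n]|$ for any piece $P\subseteq S$, so density $0$ relative to one piece already gives density $0$ relative to $S$.
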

\begin{proof}
A negative constraint $\neg \mathbf{Z}^k(cx+d)$ being violated is equivalent to the positive constraint $\mathbf{Z}^k(cx+d)$ holding. 
It is sufficient to prove the result for a single negative constraint $\neg\mathbf{Z}^k(cx+d)$ as the union of finitely many null-density sets again has null density.
We go through the cases given by Prop.~\ref{prop:pos_pow_sol_sets}.

\textbf{Case (1)}: $l = 0$. Then $S = [c+1,\infty)$. The discarded set $S'$ with the constraint $\mathbf{Z}^k(ax+b)$ added becomes a finite union of sets of the form $g(\mathbb Z) \cap [c+1,\infty)$ where $g$ is a polynomial with $\deg g = k \geq 2$ and positive leading coefficient, by Prop.~\ref{prop:pos_pow_sol_sets}. Apply Lem.~\ref{lem:zero_density} with $f(y) = y$ and each $g$ to conclude $S'$ has density $0$ relative to $S$.

\textbf{Case (2)}: $l = 1$. First, suppose $\mathbf{Z}^k(ax+b)$ is similar (but not redundant) to $\mathbf{Z}^{k_1}(a_1x+b_1)$. 
By Lem.~\ref{lem:similar_coalesce}, these constraints get coalesced into $\mathbf{Z}^K(Ax+B)$ where $K$ is the least common multiple of $k$ and $k_1$. 
Then the solution set $S$ to $x > c$ and $\mathbf{Z}^{k_1}(a_1x + b_1)$ is a finite union of sets of the form $f(\mathbb Z) \cap [c+1,\infty)$ for a polynomial $f$ with positive leading coefficient and degree $k_1$, and the discarded set $S'$ to $x > c$ and $\mathbf{Z}^K(Ax+B)$ is a finite union of sets of the form $g(\mathbb Z) \cap[c+1,\infty)$ for polynomials $g$ of degree $K > k_1$. 
Therefore we may again apply Lem.~\ref{lem:zero_density} with $f,g$ to conclude $S'$ has density $0$ relative to $S$.

Otherwise, suppose $\mathbf{Z}^k(ax+b)$ is not similar to $\mathbf{Z}^{k_1}(a_1x+b_1)$. Then by Prop.~\ref{prop:pos_pow_sol_sets}, if the discarded set $S'$ to $x>c,\, \mathbf{Z}^{k_1}(a_1x+b_1),\, \mathbf{Z}^{k}(ax+b)$ is infinite, then $k = k_1 = 2$ and $S'$ is a union of finitely many simple reversible LRBS given by \cref{cor:pell-sol-set}, restricted to $[c+1,\infty)$. 
For such simple reversible LRBS $\langle u_n \rangle_{n= -\infty}^\infty$, we have that $|u_n|$ grows exponentially as $|n| \to \infty$. Therefore we may apply Lem.~\ref{lem:zero_density} to  $f$ and the function $g(y) = u_y$ for each LRBS $\bm u$ forming part of the discarded set of $S'$, to get that $S'$ has density $0$ relative to $S$.

\textbf{Case (3)}: $l = 2$. The only way in which $S$ is infinite is if $k_1=k_2=2$. In that case, it is impossible for $S'$ to be infinite. Indeed, by Prop.~\ref{prop:pos_pow_sol_sets} we have that $S'$ is infinite only if there are at most two non-similar constraints among $\mathbf{Z}^{k_1}(a_1x+b_1),\, \mathbf{Z}^{k_2}(a_2x+b_2),\, \mathbf{Z}^k(ax+b)$, meaning that (without loss of generality) $\mathbf{Z}^k(ax+b)$ is similar to $\mathbf{Z}^{k_1}(a_1x+b_1)$. By Lem.~\ref{lem:similar_coalesce} we can coalesce these constraints into $\mathbf{Z}^K(Ax+B)$, where $K$ is the least common multiple of $k,k_1$. But by non-redundancy, we cannot have $k \mid k_1$ nor $k_1 \mid k$ so $K > k_1 = 2$. Therefore by Prop.~\ref{prop:pos_pow_sol_sets} $S'$ is finite, and so trivially has density $0$ relative to $S$. 
\end{proof}

Prop.~\ref{prop:negative_constraints} was the final step in our proof of Thm.~\ref{thm:1-pres-Zk}, to the effect that $\mathit{FO}^1 \langle   \mathbb{Z}; 0, 1, +, -, <, (\equiv_m)_{m>1}, (\mathbf{Z}^k)_{k > 1}\rangle$ is decidable. Indeed, in summary, we use Prop.~\ref{prop:pre-1-pres-Zk} to reduce to considering satisfiability of constraints of the form $x>c,\mathbf{Z}^k(ax+b),\neg \mathbf{Z}^k(ax+b)$. We may further reduce to the case in which the positive constraints $\mathbf{Z}^k(ax+b)$ are all non-similar, and all power constraints are non-redundant with respect to each other. Prop.~\ref{prop:pos_pow_sol_sets} shows that the solution set to any number of positive constraints $\mathbf{Z}^k(ax+b)$ is effectively computable, and in the case for which the solution set is infinite, Prop.~\ref{prop:negative_constraints} shows that the addition of any negative constraints $\neg \mathbf{Z}^k(ax+b)$ removes at most a subset of null density, so the solution set remains infinite. Meanwhile if the solution set is finite, one solves the decision problem by simply enumerating every solution and checking against all constraints. 

\section{Quadratic and Cubic Predicates}
\label{sec4}
In this section, we adapt the techniques used to prove Thm.~\ref{thm:1-pres-Zk} to decide single-variable Presburger arithmetic expanded with multiple predicates $(\mathcal{R}_i)_i$, where each predicate $\mathcal{R}_i$ corresponds to the value set of an integer-valued univariate polynomial $f_i$ of degree at most~$3$, i.e., $\mathcal{R}_i(x)$ holds if and only if there exists an integer $u$ such that $f_i(u) = x$. Formally, we prove the following.

\begin{theorem}
\label{thm:2-pres-sqcu}
    Let $(\mathcal{R}_i)_i$ be predicates corresponding to value sets of integer-valued polynomials $(f_i)_i$ of degree at most 3.
    Then the theory $\mathit{FO}^1 \langle   \mathbb{Z}; 0, 1, +, -, <, (\equiv_m)_{m \geq 2}, (\mathcal{R}_i)_{i}\rangle$ is decidable.
\end{theorem}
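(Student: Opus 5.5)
We follow the same scheme as the proof of Thm.~\ref{thm:1-pres-Zk}. First, the preprocessing of Prop.~\ref{prop:pre-1-pres-Zk} carries over essentially verbatim. We reduce to the satisfiability of a conjunction consisting of one constraint $x>c$ together with literals $\mathcal{R}_i(ax+b)$ and $\neg\mathcal{R}_i(ax+b)$. Degree-$1$ predicates are value sets of $f_r u + f_0$, i.e.\ residue classes, so they can be absorbed into the modular constraints. A degree-$0$ predicate is a singleton and is handled by substitution. For the sign of $a$, we use the substitution $u\mapsto -u$. For cubics, $f_i(-u)$ has leading coefficient of the opposite sign, so $\mathcal{R}_i(ax+b)$ with $a<0$ can be replaced by a predicate of the form $\mathcal{R}'(-ax-b)$ for the polynomial $-f_i(-u)$. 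We therefore allow the family of predicates to be closed under such transformations. For quadratics, the value set is bounded on one side. Hence a positive literal whose direction conflicts with $x>c$ forces $x$ into a bounded interval, and a negative one holds trivially for large $x$, exactly as in the even-$k$ case. We may thus assume $a>0$ and that each $f_i$ takes arbitrarily large positive values.

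Second, we analyse positive constraints. A single constraint $ax+b=f(u)$ has solution set $\{(f(u)-b)/a : f(u)\equiv b \bmod a\}$. Since $f$ is integer valued, the condition on $u$ is periodic with some computable period $P$. The solution set is therefore a finite union of images $g(\mathbb{Z})$ of integer-valued polynomials of degree $\deg f$ with positive leading coefficient (after restricting to large $x$), the analogue of Case~(2) of Prop.~\ref{prop:pos_pow_sol_sets}.

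Two positive constraints eliminate $x$ to a curve $C: a_2 f_1(u) - a_1 f_2(v) = N$, together with congruences on $u,v$. Here we split by degree.
\begin{itemize}
\item If both polynomials are quadratic, completing squares turns $C$ into a generalised Pell equation. If the curve is degenerate, either it factors and has finitely many solutions, or it defines a parametrised family such as $v=\pm u+\text{const}$. Degenerate families are exactly the analogue of ``similar'' constraints, and we coalesce them: the value sets coincide along a linear substitution, giving a single polynomial-image constraint, as in Lem.~\ref{lem:similar_coalesce}. Otherwise Cor.~\ref{cor:pell-sol-set} gives finitely many simple reversible LRBS, again with periodic congruence filtering.
\item If one polynomial is quadratic and the other cubic, $C$ is (after completing the square) $w^2 = $ cubic in $v$. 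If the cubic is squarefree this is an elliptic curve, and Baker's theorem (Lem.~\ref{lem:higher-k-finite-sol}-style, or \cite{Baker1969hyperelliptic}) gives finitely many effectively computable integral points. If the cubic has a repeated root, the curve is rational, and we obtain either a polynomial parametrisation, to be treated as a coalesced single constraint of degree $\ge 2$, or finitely many points.
\item If both polynomials are cubic, $C$ is a plane cubic curve. If it is smooth it has genus $1$, and Baker--Coates gives effective finiteness of integral points. If it is singular or reducible, we decompose it into components. Linear components correspond to ``similar'' constraints (e.g.\ $f_2(v)=f_1(\alpha v+\beta)$ up to the affine change), which we coalesce. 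Rational components of genus $0$ with at most two points at infinity are handled by Siegel/Maillet type effective results: we use the classification of genus-$0$ curves with infinitely many integral points, which yields either polynomial parametrisations or Pell-type (LRBS) families.
\end{itemize}
Three or more non-coalescable positive constraints then give finitely many solutions, since any pair already yields either a finite set or an LRBS family. Adding a third constraint to an LRBS family produces an exponential-polynomial equation. We handle this by eliminating instead to a curve of genus $\ge 1$, or to a system of simultaneous Pell equations as in Lem.~\ref{lem:simul-pell-finite}.

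Third, we treat negative constraints by density, as in Prop.~\ref{prop:negative_constraints}, using Lem.~\ref{lem:zero_density}. If the positive solution set $S$ is $[c+1,\infty)$, each violating set is a polynomial image of degree $\ge 2$. If $S$ is a polynomial image of degree $d$, the violating set is either finite, an LRBS family, or a coalesced image of strictly larger degree. All three have density zero, provided the coalescing is arranged so that a non-redundant constraint always strictly increases the degree or reduces the set to a lacunary one. If $S$ is an LRBS family, the violating set must be finite by the three-constraint analysis.

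The main obstacle is the genus-$0$ degenerate cases of $a_2f_1(u)-a_1f_2(v)=N$, particularly the cubic--cubic ones. We must effectively classify when this curve has infinitely many integral points, and show that each such family is either a polynomial image (hence coalescable, with a correct notion of redundancy to guarantee the degree strictly increases) or an LRBS. For this we would first try the Bilu--Tichy classification of equations $f(u)=g(v)$ with infinitely many solutions, which in degree $\le 3$ gives an explicit short list of standard pairs.
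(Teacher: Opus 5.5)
\textbf{Gap: the quadratic--quadratic--cubic configuration.} Your overall scheme matches the paper's: preprocessing, then polynomial images, Pell-type LRBS or finite sets for the positive literals, then a density argument for the negative ones. However, two of your structural claims are false, and they fail exactly where the paper needs an extra ingredient. You assert that three or more non-coalescable positive constraints have finitely many solutions, and that ``if $S$ is an LRBS family, the violating set must be finite''.

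Here is a counterexample. Let $\mathcal{R}=\{u^3-3u : u\in\mathbb{Z}\}$, and take the positive constraints $\mathbf{Z}^2(x+2)\wedge\mathbf{Z}^2(2x-4)$. They are non-similar, since $1\cdot(-4)-2\cdot 2\neq 0$.
\begin{itemize}
\item Writing $x+2=y^2$ and $2x-4=z^2$ gives $z^2=2y^2-8$, which forces $y=\pm 2p_n$, where $p_n+q_n\sqrt2=(3+2\sqrt2)^n$. So $S=\{2p_{2n}: n\in\mathbb{Z}\}$, an LRBS as in your quadratic--quadratic case.
\item Since $u^3-3u+2=(u-1)^2(u+2)$ and $u^3-3u-2=(u+1)^2(u-2)$, both quadratic--cubic curves are rational (double root), not elliptic.
\item One checks that $x\in S$ lies in $\mathcal{R}$ exactly when $x=2p_{6m}$, taking $u=2p_{2m}$. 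For instance, $x=39202=34^3-3\cdot 34$, with $x+2=198^2$ and $2x-4=280^2$. This is the commuting pair $t^2-2$, $t^3-3t$.
\end{itemize}
Consequently the three positive constraints have infinitely many common solutions; this is the paper's Case~(4c) of Prop.~\ref{prop:pos_poly_sol_sets}. Moreover, the negative literal $\neg\mathcal{R}(x)$ removes from $S$ all indices $n\equiv 0\pmod 3$, a subset of relative density $1/3$. Lem.~\ref{lem:zero_density} cannot help, because $S$ and the violating set grow exponentially with the same unit. Coalescing the rational quadratic--cubic pair into a degree-$6$ image does not rescue you either. You would then face a degree-$6$ image against a quadratic, which your pairwise analysis for degrees $\le 3$ does not cover.

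\textbf{Missing idea.} You must handle this exceptional configuration, where $S$ comes from two quadratics and the violating set from adding a cubic. The key is to show that the discarded indices form effectively computable arithmetic progressions. Only then can you decide whether finitely many such negative literals leave some index uncovered; ``$S$ infinite, hence satisfiable'' is not justified here. The paper does this in Prop.~\ref{prop:scarycase}:
\begin{itemize}
\item it parametrises the violating set through the auxiliary Pell equation $\gamma_3v_1^2-\gamma_1v_3^2=a_2(\gamma_3\delta_1-\gamma_1\delta_3)$;
\item it applies Lem.~\ref{lem:neg}, a special case of Schlickewei's result. There, comparing $\varepsilon^k$ with $\varepsilon^{rm}$ forces $k=rm+c$, and a Laurent polynomial vanishing at infinitely many $\varepsilon^m$ vanishes identically.
\end{itemize}

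\textbf{The obstacle you flagged is not one.} The cubic--cubic case needs no Bilu--Tichy classification. The top-degree part of $a_2f_1(u)-a_1f_2(v)$ is a binary cubic $\lambda u^3-\mu v^3$ with three distinct linear factors. So in the absolutely irreducible case the curve has three places at infinity, hence only finitely many integral points. These are effectively computable via Baker in genus $1$ or Poulakis in genus $0$. Infinite cubic--cubic families arise only in the reducible (redundant) case.
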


Before proceeding with the technical proof, we record a few simplifying assumptions. These assumptions establish an analogue of Prop.~\ref{prop:pre-1-pres-Zk} (i.e., the pre-processing step) \emph{mutatis mutandis}. 
Moreover, a polynomial $f_i$ of degree $0$ is constant and so $\mathcal{R}_i(ax+b)$ is equivalent to $ax + b = f_i(0)$, and for a polynomial $f_i(t) = dt+e$ of degree $1$, $\mathcal{R}_i(ax+b)$ is equivalent to $ax+b \equiv e \pmod{d}$.
Hence we can assume that the polynomials $f_i$ are of degree 2 or 3.
In summary, we obtain the following.
\begin{lemma}
    The decision problem in Thm.~\ref{thm:2-pres-sqcu} Turing-reduces to deciding whether there exists $x > 0$ satisfying a conjunction of exactly one constraint $x > c$ together with other constraints of the form $\mathcal{R}_i(ax+b)$ and $\lnot \mathcal{R}_i(ax+b)$, where $\mathcal{R}_i = f_i(\mathbb{Z})$ for a polynomial $f_i$ of degree two or three. 
\end{lemma}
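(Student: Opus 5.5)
I will follow the proof of Prop.~\ref{prop:pre-1-pres-Zk} closely, adjusting only the parts specific to power predicates. First I reduce to existential sentences. A universal sentence $\forall x\,.\,\psi$ becomes $\neg\exists x\,.\,\neg\psi$. I put the matrix in disjunctive normal form and distribute $\exists$ over the disjuncts. This leaves a finite family of satisfiability questions for conjunctions of literals.

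By rearranging the linear terms, each literal takes one of the forms $x=c$, $x<c$, $x>c$, $ax\equiv_m b$ (or its negation), $\mathcal{R}_i(ax+b)$, or $\neg\mathcal{R}_i(ax+b)$. I handle the modular literals as follows:
\begin{itemize}
\item negated congruences become finite disjunctions of residues;
\item congruences $ax\equiv_m b$ are solved for $x$;
\item all congruences are merged by the extended Chinese Remainder Theorem into a single $x\equiv_M r$, or shown infeasible;
\item the merged congruence is absorbed by substituting $x\mapsto My+r$, which keeps the remaining atoms of the form $\mathcal{R}_i(a'y+b')$ with integer coefficients.
\end{itemize}
An equality $x=c$ yields a variable-free, hence decidable, check. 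Note that $\mathcal{R}_i(n)$ for a concrete $n$ is decidable, since $f_i(u)=n$ is a univariate polynomial equation whose integer roots are computable. If both a lower and an upper bound occur, I inspect a finite range. If no bound occurs, I split into the cases $x<0$, $x=0$, and $x>0$. An upper bound $x<c$ is turned into a lower bound by the substitution $x\mapsto -x$. Finally, I translate by a constant so that the single lower bound yields $x>0$ while the constraint $x>c$ is still stated explicitly.

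Degree-$0$ and degree-$1$ predicates are eliminated exactly as noted just before the lemma. $\mathcal{R}_i(ax+b)$ becomes $ax+b=f_i(0)$ when $\deg f_i=0$. It becomes $ax+b\equiv e \pmod d$ when $f_i(t)=dt+e$, since $f_i(\mathbb{Z})=e+d\mathbb{Z}$. The negated forms become a disequality or a finite disjunction of congruences. These new linear and modular literals are fed back into the previous normalisation step. The process terminates because it creates no new polynomial atoms. Every step is effective and produces finitely many instances, so we obtain a Turing reduction.

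The main obstacle is mild: the paper's version requires $a>0$, and the lemma here does not demand it. Unlike for $\mathbf{Z}^k$, I cannot simply flip the sign of $a$, because $f_i(\mathbb{Z})$ is not closed under negation. However, the stated lemma allows arbitrary integers $a,b$, so nothing needs to be done. If a later step needs $a>0$, I would replace $f_i$ by $-f_i$ (still integer-valued, of the same degree) and thereby introduce a new predicate $-\mathcal{R}_i$, using $\mathcal{R}_i(ax+b)\iff(-\mathcal{R}_i)(-ax-b)$.
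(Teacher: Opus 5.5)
Your proposal is correct and follows the paper's argument. You redo the pre-processing of Prop.~\ref{prop:pre-1-pres-Zk} (DNF, coalescing congruences via the Chinese Remainder Theorem with the substitution $x\mapsto My+r$, reducing to a single lower bound) and eliminate degree-$0$ and degree-$1$ predicates as an equality and a congruence respectively. You also make explicit a point the paper leaves implicit, namely that $\mathcal{R}_i(n)$ is decidable for concrete $n$ in the finite-inspection steps.

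Your closing remark on $a>0$ is indeed not needed for this lemma. Be aware, though, that for quadratic $f_i$ the sign of $a$ times the leading coefficient does not change under $f_i\mapsto -f_i$ or under scaling. So if later steps also need a positive leading coefficient, the case where this product is negative must be handled as in Prop.~\ref{prop:pre-1-pres-Zk}: a positive occurrence bounds $x$, and a negative one holds for all sufficiently large $x$.
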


Next we want to restrict the kinds of polynomials that can appear. 
Recall that a polynomial $f(t) = c_dt^d + c_{d-1}t^{d-1} + \cdots + c_0$ is \emph{depressed} if its second-highest coefficient, $c_{d-1}$, is zero.

Let $\mathcal{R}(ax+b, q, r)$ denote the predicate $\{f(q u + r) \mid u \in \mathbb{Z}\}$. We add this ternary predicate to our signature and henceforth always assume that all our polynomials are depressed and monic.

\begin{lemma}
\label{lem:depressed}
     For any integer-valued polynomial $f$ of degree at most $3$ and corresponding predicate $\mathcal{R}$ and constants $a, b \in \mathbb{Z}$, we can compute a depressed monic polynomial $\tilde f$ with integer coefficients, together with constants $\tilde a, \tilde b, q, r$, such that for all $x$, $\mathcal{R}(ax + b)$ is equivalent to $\tilde{\mathcal{R}}(\tilde ax + \tilde b, q, r)$.
\end{lemma}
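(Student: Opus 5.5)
Write $f(t)=\sum_{r=0}^{d} f_r\binom{t}{r}$ with $d\in\{2,3\}$ and integer $f_r$, as recalled in the preliminaries. Then $D:=d!\,f(t)$ has integer coefficients and leading coefficient $\ell:=f_d\neq 0$, so $\ell t^d+c\,t^{d-1}+\dots$ with $c\in\mathbb Z$. The plan has three steps: clear denominators, make the polynomial monic by a scaling substitution, and remove the $t^{d-1}$ term by a shift. The shift introduces a congruence restriction on the new argument, and that restriction is what the extra parameters $q,r$ record.

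\textbf{Clearing denominators and making it monic.} The condition $\mathcal R(ax+b)$ says that $ax+b=f(u)$ for some $u\in\mathbb Z$. This is equivalent to $d!\,(ax+b)=D(u)$. Multiplying by $\ell^{d-1}$ gives
\[
\ell^{d-1}D(u)=(\ell u)^d+c(\ell u)^{d-1}+\dots=:g(\ell u),
\]
where $g\in\mathbb Z[s]$ is monic. Thus $\mathcal R(ax+b)$ holds iff $\ell^{d-1}d!\,(ax+b)=g(s)$ for some $s\equiv 0\pmod{\ell}$. If $\ell<0$ and $d$ is even, the sign of $\ell^{d-1}$ flips, but the identity still holds with the same $g$, so no case split is needed.

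\textbf{Depressing.} Next substitute $s=(v-c)/d$, i.e.\ $v=ds+c$. Then
\[
d^d g\!\left(\frac{v-c}{d}\right)=:\tilde f(v)
\]
is monic in $v$ with integer coefficients. Its $v^{d-1}$ coefficient is $-dc+dc=0$, so $\tilde f$ is depressed; the remaining coefficients are integers because each carries enough powers of $d$. The conditions $s\in\mathbb Z$ and $s\equiv0\pmod{\ell}$ together become $v\equiv c\pmod{d\ell}$. Writing $v=qu'+r$ with $q=|d\ell|$ and $r=c\bmod q$, we obtain: $\mathcal R(ax+b)$ iff $\tilde a x+\tilde b\in\{\tilde f(qu'+r)\mid u'\in\mathbb Z\}$, with $\tilde a=d^d\ell^{d-1}d!\,a$ and $\tilde b=d^d\ell^{d-1}d!\,b$. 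This is exactly $\tilde{\mathcal R}(\tilde ax+\tilde b,q,r)$.

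\textbf{Checking the equivalence and computability.} Each substitution above is a bijection between the relevant integer parameter sets: $u\leftrightarrow s=\ell u$ onto multiples of $\ell$, and $s\leftrightarrow v=ds+c$ onto the class $c \bmod d$. Multiplying both sides of an equation by the fixed nonzero integers $d!$, $\ell^{d-1}$ and $d^d$ preserves it in both directions. All the constants are explicitly computable from the coefficients of $f$. The only point needing care, and the one I expect to be the main obstacle, is confirming that the depressed polynomial really has integer coefficients. Its $v^k$ coefficient is $d^d$ times a coefficient of $g((v-c)/d)$, which has denominator at most $d^d$, so integrality holds. The degree-$\le1$ cases were already handled before this lemma, but they also fit the same recipe.
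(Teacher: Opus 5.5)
Your proof is correct and is essentially the paper's ``complete the cube'' argument. Your two substitutions compose to $v=d\ell u+c$ with total multiplier $d^d\ell^{d-1}$; for $d=3$ this is exactly the paper's multiplication by $27c_3^2$ followed by $v=3c_3u+c_2$, here done uniformly for $d\in\{2,3\}$ and with $q=|d\ell|$ in place of the paper's normalisation $c_3>0$.

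The other differences are cosmetic. The paper also absorbs the constant term of $\tilde f$ into $\tilde b$ so that $\tilde f(0)=0$. It relies on that later, and you can get it by subtracting $\tilde f(0)$ from both sides. Your closing aside holds for $d=1$ but not literally for $d=0$, where $\ell^{d-1}$ is not an integer; this is harmless because constant predicates are handled separately.
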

\begin{proof}
    We tackle the degree-$3$ case, the degree-$2$ case being similar and simpler. 

    The predicate $\mathcal{R}(ax+b)$ is equivalent to $\exists u \, .\, c_3 u^3 + c_2 u^2 + c_1 u + c_0 = ax + b$. We can multiply through by an appropriate integer and assume without loss of generality that $c_3, \ldots, c_0$ are integers, and $c_3$ is positive. We ``complete the cube'' by multiplying through by $27c_3^2$ and write the equivalent statement
    $$
    \exists u \, .\, (3c_3u+c_2)^3 + 27c_1c_3^2 u + 27c_0c_3^2 - 9c_2^2c_3 u - c_2^3 = 27c_3^2(ax+ b),
    $$
    which can be further rearranged as
    \begin{align*}
    &\exists u \, .\, (3c_3u+c_2)^3 + (9c_1c_3 - 3c_2^2)(3c_3u + c_2) \\
    &=~ 27ac_3^2x + (27bc_3^2 - 27c_0c_3^2 + 9c_1c_2c_3 - 2c_2^3) \, ,
    \end{align*}
    or in other words
    $\exists u \, .\, \tilde f(3c_3 u + c_2) = \tilde a x + \tilde b$. 
\end{proof}

We continue following a very similar strategy to find a witness $x$ that satisfies all constraints as in Sec.~\ref{sec:power predicates}.
Each predicate $\mathcal{R}_i$ can occur both positively and negatively, and we want to show that we can enumerate a solution set satisfying positive constraints when it is finite; and when it is infinite, adding a further ``non-similar'' positive constraint would in all but one case restrict the solution set to a subset of relative null density. In the exceptional case, the solutions to the positive constraints are parametrised as an LRS, and the discarded indices form arithmetic progressions. Thus, in this case too, we can effectively determine whether there remains a value of $x$ not discarded by the negative constraints. 

As we did previously, we need to account for redundant constraints, but it is not immediately clear what a meaningful definition of redundancy is.
Recall that a multivariate polynomial is \emph{absolutely irreducible} if it is irreducible over the complex numbers.

\begin{definition}
The constraint $\mathcal{R}_1(a_1x+b_1)$ 
is \emph{redundant with respect to} $\mathcal{R}_2(a_2x+b_2)$ if $ \deg(f_1) \mid  \deg(f_2)$ and $a_2 f_1(u_1) - a_1 f_2(u_2) - a_2 b_1 + a_1 b_2 \in \Q[u_1,u_2]$ is not absolutely irreducible.
\end{definition}


As the polynomials defining our predicates have degrees 2 or 3, two constraints can only be in a redundancy relationship if the underlying polynomials have the same degree. We now have:

\begin{proposition}
    Let $\mathcal{R}_1(a_1x+b_1)$ be redundant with respect to $\mathcal{R}_2(a_2x+b_2)$. 
    Then $a_1b_2 = a_2b_1$.
\end{proposition}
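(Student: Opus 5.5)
The plan is to show that a linear factor of the separated-variable polynomial is forced to pass through the origin, and that this kills the constant term. Throughout, $d := \deg f_1 = \deg f_2 \in \{2,3\}$, since redundancy forces equal degrees. I use the normal form of Lem.~\ref{lem:depressed}: each $f_i$ is monic and depressed.

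\textbf{A needed extra normalisation.} The statement is false without also assuming $f_i(0)=0$. For example, with $f_1(u) = u^2+1$, $f_2(u)=u^2$, $a_1=a_2=1$, $b_1=0$ and $b_2=-1$, the polynomial in the definition is $u_1^2-u_2^2$. This is reducible, yet $a_1b_2 - a_2b_1 = -1 \neq 0$. So I will additionally assume $f_i(0)=0$. This costs nothing: $\mathcal{R}_i(a_ix+b_i)$ is equivalent to $\exists u\,.\,f_i(u)-f_i(0) = a_ix + (b_i - f_i(0))$, so the constant term can be absorbed into $b_i$. Hence $f_i(u)=u^2$ when $d=2$, and $f_i(u) = u^3 + e_i u$ when $d=3$. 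Put $c := a_2b_1 - a_1b_2$ and
\[ F(u_1,u_2) := a_2 f_1(u_1) - a_1 f_2(u_2) - c . \]
The goal is to show that if $F$ is not absolutely irreducible then $c=0$.

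\textbf{Step 1: a reducible $F$ has a linear factor through which $u_2$ depends on $u_1$.} Since $\deg F = d \le 3$, any nontrivial factorisation over $\mathbb{C}$ has a factor of degree $1$, say $G = \alpha u_1 + \beta u_2 + \gamma$.

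First, $\beta \ne 0$. Otherwise $u_1 - \gamma'$ would divide $F$. Substituting $u_1=\gamma'$ would then make $F$ vanish identically in $u_2$, which is impossible because $F$ contains the monomial $-a_1u_2^d$ with $a_1 \neq 0$.

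Therefore $F$ vanishes identically on a line $u_2 = \mu u_1 + \nu$. Comparing the top-degree parts, $\alpha u_1 + \beta u_2$ divides $a_2u_1^d - a_1u_2^d$, so $\mu^d = a_2/a_1$. In particular $\mu \neq 0$, because $a_1,a_2 \neq 0$.

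\textbf{Step 2: substitute and compare coefficients.} Substitute $u_2 = \mu u_1 + \nu$ into $F$ and require the resulting polynomial in $u_1$ to vanish identically.

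For $d=2$, the coefficient of $u_1$ is $-2a_1\mu\nu$. This forces $\nu=0$, and then the constant term $-a_1\nu^2 - c$ forces $c = 0$.

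For $d=3$, the $u_1^3$ terms cancel by the choice of $\mu$. The coefficient of $u_1^2$ is $-3a_1\mu^2\nu$, which forces $\nu = 0$. The constant term is $-a_1\nu^3 - a_1e_2\nu - c$, which then gives $c=0$. (The vanishing of the $u_1$ coefficient, $a_2e_1 - a_1e_2\mu$, is an extra condition and is not needed.)

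In both cases $a_1b_2 = a_2b_1$, as claimed.

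\textbf{Main obstacle.} The real difficulty is not the coefficient comparison, which is routine. It is pinning down the correct normalisation: as the counterexample shows, the proposition genuinely requires the constant terms of the $f_i$ to be absorbed into the $b_i$. The proof should state this convention explicitly, as an addendum to Lem.~\ref{lem:depressed}, before running Steps 1 and 2.
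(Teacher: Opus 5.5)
Your proof is correct and is essentially the paper's argument. The paper likewise assumes the $f_i$ are depressed with $f_i(0)=0$; the construction in Lem.~\ref{lem:depressed} actually delivers this, though the lemma's statement does not say so. It then posits a factorisation $A_1(u_1+C_2u_2+D)\,Q(u_1,u_2)$ and compares coefficients to force $D=0$, and hence $a_1b_2-a_2b_1=0$. Substituting the line $u_2=\mu u_1+\nu$, as you do, is a leaner way to carry out the same coefficient comparison. It also justifies two things the paper takes for granted: that a linear factor exists, and that its coefficients are nonzero.
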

\begin{proof}
Recall from Lem.~\ref{lem:depressed}, we assume that $f_1$ and $f_2$ are depressed and satisfy $f_i(0) = 0$.
Thus, when $\deg(f_1)=\deg(f_2) = 2$, write $f_i(u_i) = c_i u_i^2$ for $i = 1,2$.
Then we have, by the definition of absolutely reducibility: $$A_1 u_1^2  - A_2 u_2^2  + a_1b_2 -a_2b_1 = A_1(u_1 + C_2u_2 + D)(u_1 + C_2'u_2 + D')\,,$$ where $A_1 = a_2c_1$ and $A_2 = a_1c_2$ are non-zero. 
Then, $A_1C_2C_2' = -A_2$, $C_2+C_2' = 0$, $D + D' = 0$, $DC_2'+D'C_2 = 0$, and $A_1DD' = a_1b_2 -a_2b_1$.
Thus the second and third equations imply that $D = -D'$ and $C_2 = -C_2'$, and the first implies that $C_2 \ne 0$ as $A_2 \ne 0$. Hence the fourth equation implies that $D = 0$ and so the last equation yields $a_1b_2 -a_2b_1 = 0$. 

When $\deg(f_1)=\deg(f_2) = 3$, let $f_i(u_i) = c_i u_i^3 + d_iu_i$ for $i = 1,2$.
Then we have by absolute reducibility:
\begin{align*}
    &\quad A_1 u_1^3  + B_1u_1 - A_2 u_2^3  - B_2u_2 + a_1b_2 -a_2b_1 \\
    &= A_1(u_1 + C_2u_2 + D)(u_1^2 + E_1u_1u_2 + C_2'u_2^2 + E_2u_1 +E_3u_2 + D')\,,
\end{align*}
where $A_1 = a_2c_1$ and $A_2 = a_1c_2$ are non-zero, $B_1 = a_2d_1$ and $B_2 = a_1d_2$. 
Then, $A_1C_2C_2' = -A_2$ and $A_1DD' = a_1b_2 -a_2b_1$ and $E_1 + C_2$, $E_2 + D $, $C_2' + E_1C_2$, $E_3+C_2E_2+E_1D$, $C_2 E_3+DC_2'$ are all zero.
Substituting $E_1 = -C_2$ and $E_2 = -D$ gives that $C_2' - C_2^2$, $E_3-2C_2D$, and $C_2 E_3+DC_2'$ are all zero. 
Hence, as $C_2 \ne 0$ as $A_2 \ne 0$, $E_3=C_2D$, which forces that $C_2D = 0$ and thus that $D = 0$. 
Thus $a_1b_2 -a_2b_1 = 0$. 
\end{proof}

In the case of redundancy with cubic polynomials, we can thus solve the (Diophantine) equation $a_2 f_1(u_1) - a_1 f_2(u_2) = 0$ via the factorisation $A_1u_1^3 + B_1u_1 - A_2u_2^3 - B_2u_2  = A_1(u_1 + mu_2)(u_1^2 - m u_1u_2 + m^2u_2^2 + B_1/A_1)$, where $m = -B_2/B_1 = -(A_2/A_1)^{1/3}$ is non-zero.  This follows using the notation in the proof above.
We shall assume that $m$ is rational. This is of course the case when $B_2 \ne 0$, or when $B_1 = B_2 = 0$ and $A_2/A_1$ is a perfect cube. Otherwise, it is impossible for $A_1x^3 - A_2y^3 = 0$ to have integer solutions.

Looking at the proof of the lemma above, when two redundant predicates are both satisfied, a linear relationship between $u_1$ and $u_2$ has to hold, or in the cubic case, $u_1$ and $u_2$ have to lie on a certain conic. 
This conic is $u_1^2 - m u_1u_2 + m^2u_2^2 + B_1/A_1 = 0$, which represents an ellipse, which thus contains finitely many integer points $(u_1, u_2)$ that we can effectively compute. 
Hence we conclude the following.
\begin{lemma}
    Let $\phi(x)$ be a conjunction of two redundant predicates $\mathcal{R}_1(a_1x+b_1, q_1, r_1)$ and either $\mathcal{R}_2(a_2x+b_2, q_2, r_2)$ or $\lnot\mathcal{R}_2(a_2x+b_2, q_2, r_2)$.
    Then $\phi(x)$ can be written as the conjunction of one predicate $\mathcal{R}_3(a_3x+b_3, q_3, r_3)$ and a finite number of atoms definable in quantifier-free Presburger arithmetic.
\end{lemma}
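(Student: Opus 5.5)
We prove the lemma by unfolding both predicates into equations in auxiliary variables $u_1,u_2$ and using the factorisation of $a_2 f_1(u_1) - a_1 f_2(u_2) - a_2b_1 + a_1b_2$ established above. By the preceding proposition, redundancy forces $a_1b_2 = a_2b_1$, so (with $f_i$ depressed and $f_i(0)=0$) this polynomial is just $a_2 f_1(u_1) - a_1 f_2(u_2)$. Any common $x$ gives
\[
a_1x+b_1 = f_1(q_1u_1+r_1), \qquad a_2x+b_2 = f_2(q_2u_2+r_2),
\]
and the point $(v_1,v_2) = (q_1u_1+r_1,\, q_2u_2+r_2)$ satisfies $a_2f_1(v_1) = a_1f_2(v_2)$. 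By the explicit factorisations, $(v_1,v_2)$ lies on one of the components:
\begin{itemize}
\item in the quadratic case, one of the two lines $v_1 = \pm m v_2$ with $m=\sqrt{A_2/A_1}$;
\item in the cubic case, either the line $v_1 + m v_2 = 0$ or the ellipse $v_1^2 - m v_1 v_2 + m^2 v_2^2 + B_1/A_1 = 0$.
\end{itemize}
If $m$ is irrational, the line components carry no integer point other than possibly $(0,0)$, which yields a single candidate $x$.

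\textbf{The ellipse contributes finitely many $x$.} The ellipse has finitely many, effectively computable, integer points. Each gives a value $x = (f_1(v_1)-b_1)/a_1$, subject to integrality and to $v_i \equiv r_i \pmod{q_i}$. These finitely many values become a disjunction of atoms $x = c$, all quantifier-free Presburger.

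\textbf{The line components reduce to the first predicate.} On a line $v_2 = \lambda v_1$ with $\lambda\in\Q$, the second predicate is automatic once the first holds, subject only to divisibility conditions. These are: $\lambda v_1 \in \mathbb{Z}$, $\lambda v_1 \equiv r_2 \pmod{q_2}$, and $v_1 \equiv r_1 \pmod{q_1}$. All of them are congruence conditions on $v_1$ modulo some computable $M$, i.e.\ $v_1$ ranges over a finite union of residue classes $q_3 w + r_3$.

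\textbf{Building $\mathcal{R}_3$.} For the positive conjunction, take $\mathcal{R}_3 := \mathcal{R}_1$ with refined modulus $q_3$. The union over the finitely many residues $r_3$ (and over the $\pm$ signs in the quadratic case) is either merged into one predicate by a change of variable, or absorbed by picking $q_3 = \mathrm{lcm}(q_1, M)$. Each extra residue class then gives a disjunct. Disjunctions are harmless, since the surrounding procedure already works with DNF. To keep literally one predicate per disjunct, we note that the map $x\mapsto v_1 \bmod M$ is not Presburger-definable, so the residue restriction must be folded into $q_3,r_3$.

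\textbf{The negated case.} For $\mathcal{R}_1 \wedge \neg\mathcal{R}_2$, we write $\neg\mathcal{R}_2$ on the set where $\mathcal{R}_1$ holds as the complement of the above. We take $\mathcal{R}_3(a_1x+b_1, q_1, r')$ over those residue classes $r' \bmod \mathrm{lcm}(q_1,M)$ of $v_1$ for which the line conditions fail. We then conjoin $x \neq c$ for the finitely many exceptional $x$ coming from the ellipse, and also from the two-line overlap at $v_1=0$.

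\textbf{Main obstacle.} The subtle point is that $x$ need not determine $v_1$ uniquely: a cubic can hit the same value at up to three integers. So ``$v_1$ in the bad residue classes'' is not simply the negation of ``some preimage in the good classes''. Since $f_1$ is eventually injective, we would handle this by computing the finitely many non-injectivity points of $f_1$ and listing their $x$-values as exceptional Presburger atoms.
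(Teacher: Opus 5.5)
Your route is the one the paper sketches: split the reducible curve $a_2f_1(v_1)=a_1f_2(v_2)$ into line component(s), on which the second predicate reduces to congruence conditions on $v_1$, and a bounded conic that contributes finitely many $x$. You are more careful than the paper on two points it glosses over.

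First, the congruence conditions can cut out several residue classes, and the conic adds isolated points. So what you obtain is a finite disjunction of ``one predicate plus Presburger atoms''. That is all the DNF-based procedure needs, and the literal single-conjunction form is not attainable in general. For example, with $f_1=f_2=v^3$, the formula $\mathcal{R}_1(x,1,0)\wedge\neg\mathcal{R}_2(x,7,1)$ defines $\{v^3 : v\not\equiv 1 \bmod 7\}$. Using cube roots of unity modulo $7^e$, cubes of $v\equiv 1$ and of $v\equiv 2 \pmod 7$ cannot be separated by any congruence, and one checks that no single predicate cut down by Presburger atoms defines this set.

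Second, in the negated case $x$ may have several $f_1$-preimages. Your handling of this is where the argument breaks as written. Eventual injectivity holds for cubic $f_1(v)=v^3+cv$: collisions with $v\neq w$ lie on the positive-definite conic $v^2+vw+w^2+c=0$. It fails for quadratics. For $f_1(v)=v^2$, every nonzero value has the two preimages $\pm s$, so there are infinitely many ``non-injectivity points'' and they cannot be listed as exceptional atoms.

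The quadratic case needs a symmetry argument instead. There the curve is the pair of lines $v_2=\pm\lambda v_1$. Let $G$ be the set of $v_1$ for which one of these lines yields an integer $v_2\equiv r_2 \pmod{q_2}$; then $G=-G$. Hence, if $f_1^{-1}(a_1x+b_1)\cap\mathbb{Z}=\{\pm s\}$, ``some preimage lies in $(r_1+q_1\mathbb{Z})\setminus G$'' is equivalent to ``$\mathcal{R}_1$ holds and no preimage lies in $G$''. So your residue-class construction is correct in the quadratic case, with no exceptions needed.

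Finally, at the finitely many exceptional values $c$ (conic points and cubic collisions), decide $\phi(c)$ directly rather than conjoining $x\neq c$ for all of them. In the negated case, $\phi(c)$ always implies your residue-class disjunct. So you must exclude exactly those $c$ at which $\phi$ fails, and a conic point with $v_2\not\equiv r_2 \pmod{q_2}$ need not be one of them.
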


As we can observe from the factorisation, the solution set to $a_2 f_1(u_1) - a_1 f_2(u_2) = 0$ is the union of integer points on a line (passing through the origin and having rational slope $m = -s/t$), and finitely many integer points on a bounded conic. In other words, a solution $u_1$ corresponds to a solution $u_2$ if and only if the former takes one of finitely many values, or satisfies a set of divisibility constraints. 
The constraint on $a_2 x + b_2$ is thus ``redundant'' in view of the constraint on $a_1 x + b_1$ in the sense that it does not add ``algebraic'' information beyond modular-arithmetic annotation. 


We can henceforth focus on the case where there is no redundancy, and at least one of the predicates corresponds to a cubic polynomial.

\begin{proposition}
 \label{prop:pos_poly_sol_sets}
The solution set of all $x \in \mathbb{Z}$ satisfying $l$ non-redundant constraints of the form $\mathcal{R}_i(a_ix+b_i, q_i, r_i)$ with $a_i>0$ and $d_i=\deg(f_i)$ for $1 \leq i \leq l$ is effectively computable and has the following structure.
\begin{enumerate}
    \item If $l = 0$ then $S = \mathbb{Z}$.
    \item If $l = 1$ then either $S = \varnothing$ or $S$ is a finite union of sets of the form $f(\mathbb{Z})$ where $f$ is a polynomial of degree $d_1$ and with a positive leading coefficient.
    \item If $l = 2$ then either $S = \varnothing$ or one of the following holds.
    \begin{enumerate}
        \item If $d_1=d_2= 3$ then $S$ is finite.
        \item If $d_1=d_2=2$ then $S$ is a union of finitely many simple reversible LRBS. 
        \item If $d_1\neq d_2$ then either $S$ is finite or $S$ is a finite union of sets of the form $f(\mathbb{Z})$ where $f$ is a polynomial of degree~$6$.
        
    \end{enumerate}
    \item If $l= 3$ then either $S = \varnothing$ or one of the following holds.
    \begin{enumerate}
        \item If any two $i\neq j$ exist such that $d_i=3= d_j$, then $S$ is finite.
        \item If $d_1=d_2=d_3=2$ then $S$ is finite.
        \item Otherwise $S$ is a union of finitely many simple reversible LRBS. 
    \end{enumerate}
    \item If $l\geq 4$ then $S$ is finite.
\end{enumerate}
\end{proposition}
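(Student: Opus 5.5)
We follow the template of Prop.~\ref{prop:pos_pow_sol_sets}, now working with depressed monic polynomials as provided by Lem.~\ref{lem:depressed}. Case (1) is trivial. For Case (2), the constraint $\mathcal{R}_1(a_1x+b_1,q_1,r_1)$ says $f_1(q_1u+r_1)=a_1x+b_1$ for some $u$. We enumerate the residues $s$ modulo $a_1$ such that $f_1(q_1s+r_1)\equiv b_1 \pmod{a_1}$. For each such residue we substitute $u=s+a_1t$ and obtain $x=(f_1(q_1(s+a_1t)+r_1)-b_1)/a_1$, which is a polynomial in $t$ of degree $d_1$ taking integer values. A sign change $t\mapsto -t$ in the cubic case makes the leading coefficient positive. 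In the quadratic case the leading coefficient is already positive because $f_1$ is monic and $a_1>0$.

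For $l=2$ we eliminate $x$ and obtain the curve
\[
C:\; a_2f_1(u_1)-a_1f_2(u_2)=a_2b_1-a_1b_2 ,
\]
together with congruence conditions on $u_1,u_2$. By non-redundancy $C$ is absolutely irreducible, and we compute its genus.
\begin{itemize}
\item When $d_1=d_2=3$, $C$ is a smooth plane cubic (smoothness at infinity follows since $a_2u_1^3-a_1u_2^3$ has three distinct linear factors). Hence $C$ has genus~$1$, and Baker's effective theorem on elliptic equations (the same source as Lem.~\ref{lem:higher-k-finite-sol}) gives finitely many computable integer points.
\item When $d_1=d_2=2$, $C$ has the form $w^2-nz^2=N$ after scaling. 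Non-redundancy excludes the factored case, so either $n$ is not a square and Cor.~\ref{cor:pell-sol-set} plus periodicity of LRBS modulo $a_1a_2q_1q_2$ gives simple reversible LRBS, exactly as in Case (3b) of Prop.~\ref{prop:pos_pow_sol_sets}, or $n$ is a square with $N\neq 0$ and the solutions are finite.
\item When $d_1=2$ and $d_2=3$, we get $w^2=\alpha u_2^3+\beta u_2+\gamma$. If the cubic on the right has distinct roots, the curve is elliptic and Lem.~\ref{lem:higher-k-finite-sol}/Baker gives finiteness. Otherwise the cubic has a repeated root, i.e.\ $\alpha(u_2-\rho)^2(u_2-\sigma)$ with $\rho,\sigma$ rational. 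Then the curve has genus~$0$ with a singular point, and we parametrise it rationally by $w=v(u_2-\rho)$, $v^2=\alpha(u_2-\sigma)$. This gives $u_2$ quadratic in $v$ and $w$ cubic in $v$, so $x$ becomes a degree-$6$ polynomial in $v$. If $\rho=\sigma$, the solutions are again a degree-$6$ family or finite. We still need to impose integrality and the congruences on a finite set of residue classes of $v$, and to dispose of the finitely many exceptional points.
\end{itemize}

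For $l=3$ we look at the pairs of constraints. If two of the cubics are present, the corresponding pair is already finite. If $d_1=d_2=d_3=2$, the reduction to simultaneous Pell equations from Case (4) of Prop.~\ref{prop:pos_pow_sol_sets} together with Lem.~\ref{lem:simul-pell-finite} gives finiteness. In the remaining case we have two quadratics and one cubic. If the cubic–quadratic pair is finite we are done. Otherwise that pair is the degree-$6$ family $x=g(v)$, and imposing the second quadratic gives $a g(v)+b=f(u)$, a curve $w^2=h(v)$ with $\deg h=6$. Generically this has genus $2$ and finitely many points, but the claimed LRBS outcome shows that $h$ can be a square times a quadratic. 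In that case the curve reduces to a Pell equation and yields simple reversible LRBS. So we must factor $h$ and distinguish whether its squarefree part has degree $\le 2$ (giving LRBS or finitely many solutions) or $\ge 3$ (finitely many solutions, via Baker for hyperelliptic curves).

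For $l\ge4$, any four constraints contain either two cubics, three quadratics, or a configuration covered by Case (4) intersected with a further non-redundant constraint. Intersecting an LRBS family with a new quadratic or cubic gives an exponential Diophantine equation, which we handle by the same pairwise and triple reductions. We expect the main obstacle to be the genus-$0$ degenerations: showing that non-redundancy still permits only the listed shapes, with the correct degree and positivity, and handling the $l=3$ mixed case carefully.
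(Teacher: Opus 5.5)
\textbf{Case (3a).} This case rests on the claim that $C$ is a smooth plane cubic, but you only checked smoothness at infinity. Affine singular points occur whenever a point with $f_1'(u_1)=f_2'(u_2)=0$ lies on $C$, and non-redundancy does not prevent this. Take $f_1=f_2=t^3-3t$, $a_1=a_2=1$, $b_1=0$, $b_2=4$. The polynomial $u_1^3-3u_1-u_2^3+3u_2+4$ is absolutely irreducible, since $a_1b_2\neq a_2b_1$. Yet it vanishes together with both partial derivatives at $(1,-1)$, so $C$ is a nodal cubic of genus $0$.

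Baker's genus-$1$ theorem gives nothing here. What your computation at infinity does give is three smooth points at infinity, hence three places at infinity. The genus-$0$ case then needs an effective result for genus-$0$ curves with at least three places at infinity, which is exactly why the paper invokes Poulakis alongside Baker's Thm.~4.3. Even in the genus-$1$ case, a plane cubic not in Weierstrass form is covered by the Baker--Coates theorem, not by the hyperelliptic result behind Lem.~\ref{lem:higher-k-finite-sol}.

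\textbf{Case (4c).} Your dichotomy on the squarefree part of the sextic $h$ leaves out the decisive case. Since $\deg h=6$, the squarefree part has even degree, and it could a priori be constant. Then $h=c\,s(v)^2$ with $c$ a nonzero square would make every admissible $v$ a solution. That yields a degree-$6$ polynomial family, not LRBS or a finite set, contradicting (4c).

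Ruling this out is precisely where non-redundancy of the two quadratic constraints must enter, and you leave it as an open ``obstacle''. The paper puts both cubic--quadratic pairs in degenerate form $v_i^2=a_2(\gamma_iu_2+\delta_i)$ for $i=1,3$. It then eliminates $u_2$ to get $\gamma_3v_1^2-\gamma_1v_3^2=a_2(\gamma_3\delta_1-\gamma_1\delta_3)$ and shows the right-hand side is non-zero. Otherwise $-\delta_1/\gamma_1$ would be a common root of $a_3(a_1f_2+a_2b_1-a_1b_2)$ and $a_1(a_3f_2+a_2b_3-a_3b_2)$. It would then be a root of their difference $a_2(a_3b_1-a_1b_3)$, which is a non-zero constant.

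In your composition framework, the same fact settles the dichotomy. When the second pair is degenerate, the squarefree part of $h$ is a quadratic with distinct roots. When that pair is elliptic, at most one of the three simple roots of the cubic equals $-\delta_1/\gamma_1$, and each of the others yields two simple roots of $h$, so the squarefree part has degree at least $4$. With this supplied, your route works.

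\textbf{Case (5).} This is simpler than you suggest. By pigeonhole, any four constraints of degree $2$ or $3$ contain two cubics or three quadratics, so Cases (3a) and (4b) suffice and no exponential Diophantine equations arise.
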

\begin{proof}
\textbf{Case (1)} is obvious.

\textbf{Case (2)}:
Write $f(u)$ for $f_1(q_1 u + r_1)$.
If $l = 1$ then $S = \varnothing$ if $b_1$ does not lie in the value set of $f$ modulo $a_1$, and this can be checked algorithmically by enumerating all values of $f$ modulo $a_1$. 
Conversely, if $b_1$ does lie in the value set of $f$ modulo $a_1$ then there are infinitely many solutions and they can be parametrised as follows. Pick $u \in \mathbb{N}$ such that $f(u) \equiv b_1 \bmod a_1$. Then consider the polynomial $g_{u,a_1} \in \mathbb{Z}[t]$ that satisfies
\begin{equation*}
    f(u+ta_1) =  a_1g_{u,a_1}(t) + b_1 \, .
\end{equation*}
Then $g_{u,a_1}$ has degree $d_1$ and a positive leading coefficient (as $a_1, q_1 > 0$). Every $t \in \mathbb{Z}$ gives rise to a solution $x = g_{u,a_1}(t)$ of $\mathcal{R}_1(a_1x+b_1, q_1,r_1)$. Conversely, whenever $x \in \mathbb{Z}$ is a solution, then there must exist some $y = u+ta_1$ such that $f(y)=a_1x+b_1$, $f(u) \equiv b_1 \bmod a_1$, and $t \in \mathbb{Z}$. Therefore the solution set is exactly 
\begin{equation*}
    S \, \, \, = \!\!\!\!\!\!\!\! \bigcup_{\substack{0\leq u < a_1 \\ f(u) \equiv b_1 \bmod a_1}} \!\!\!\!\!\!\!\!\!\! g_{u,a_1}(\mathbb Z) \, .
\end{equation*}


\textbf{Case (3a)}
We claim that a pair of non-redundant cubic positive constraints $\mathcal{R}_1(a_1 x + b_1, q_1, r_1), \mathcal{R}_2(a_2 x + b_2, q_2, r_2)$ has finitely many solutions, which can moreover be effectively enumerated. Indeed, homogenising the cubic curve $$a_2 f_1(u_1) - a_1f_2(u_2) - a_2 b_1 + a_1 b_2 = 0$$ into the projective plane over an appropriate algebraic extension of the rationals gives an absolutely irreducible curve. This curve has three places at infinity, $([\rho_i: 1: 0])_{i=1}^3$ where each $\rho_i$ is a complex cube root of $a_2/a_1$ (because $f_1$ and $f_2$ are monic, evaluating this curve at $[u_1:u_2:0]$ gives $a_2u_1^3 - a_1u_2^3 = 0$). If this curve has genus $1$, then its finitely many integer points can be enumerated by \cite[Thm.~4.3]{Baker_1975}. Otherwise, the curve has genus $0$, in which case its finitely many integer points can be enumerated by \cite{poulakis-three-place}.

\textbf{Case (3b)} This case follows from Case (3b) of Prop.~\ref{prop:pos_pow_sol_sets} \emph{mutatis mutandis}.

\textbf{Case (3c)}
We now consider the case of two positive constraints, where the first $\mathcal{R}_1(a_1 x + b_1, q_1, r_1)$ corresponds to a quadratic polynomial. The attendant curve $a_1 f_2(u_2) - a_2f_1(u_1) = a_1b_2 - a_2 b_1$ can then be simplified to have the form $(a_2 u_1)^2 = a_2(a_1 f_2(u_2) + a_2 b_1 - a_1 b_2) = a_1 a_2 g(u_2)$, where $g \in \mathbb{Q}[u_2]$ is monic, and $a_1 g \in \mathbb{Z}[u_2]$. If $g$ has three distinct roots (i.e., an elliptic curve has arisen), then the finitely many integer points on the curve can be enumerated by \cite[Thm.~2]{Baker1969hyperelliptic}.

Otherwise, $g$ has a repeated root, which is necessarily rational because it corresponds to the common factor of $g$ and its derivative. By Gauss's lemma, $a_1 g$ thus splits over $\mathbb{Z}$ as $(\alpha u_2 + \beta)^2(\gamma u_2 + \delta)$. We then make the substitution $v_1 = \frac{a_2 u_1}{\alpha u_2 + \beta}$, and observe that $v_1^2 = a_2(\gamma u_2 + \delta)$. In this manner, the values of $u_1$ and $u_2$ are parametrised by $v_1$. If we have constraints that $u_1$ and $u_2$ are respectively $r_1$ modulo $q_1$ and $r_2$ modulo $q_2$, we enforce the following modular-arithmetic constraints on $v_1$:
\begin{align*}
    v_1^2 \equiv a_2 r_2 \gamma + a_2 \delta &\mod a_2 \gamma q_2, \\
    \alpha v_1^3 + (a_2 \beta \gamma - a_2 \alpha \delta)v_1 \equiv a_2^2r_1\gamma &\mod a_2^2  \gamma q_1. 
\end{align*}
Inserting $v_1^2 = a_2(\gamma u_2 + \delta)$ into $(\alpha u_2 + \beta)^2(\gamma u_2 + \delta)$ we obtain that $x$ is parametrised as a degree $6$ polynomial in $v_1$.

We thus get solutions to our constraints whenever $v_1$ satisfies the above: such values for $v_1$, if they exist, form a union of finitely many arithmetic progressions by the Chinese Remainder Theorem. 

\textbf{Case (4a)} immediately follows from Case (3a).

\textbf{Case (4b)}  follows from  Case 4 of Prop.~\ref{prop:pos_pow_sol_sets} \emph{mutatis mutandis}.

\textbf{Case(4c)}
We freely borrow notation from Case (3c). 
Assume $f_2$ has degree 3 and that $f_1$ and $f_3$ have degree 2. 
Then, as in Case (3c), we construct $v_1$ and $v_3$ that give the system 
\begin{align*}
    v_1^2 &= a_2 (\gamma_1 u_2 + \delta_1) \, , \\
    v_3^2 &= a_2 (\gamma_3 u_2 + \delta_3) \, ,
\end{align*}
where as before, $v_i = \frac{a_2 u_i}{\alpha_i u_2 + \beta_i}$. Eliminating $u_2$ yields $\gamma_3 v_1^2 - \gamma_1 v_3^2 = a_2(\gamma_3 \delta_1 - \gamma_1 \delta_3)$.

We argue that the constant on the right is non-zero. Suppose for the sake of deriving a contradiction that $\gamma_3\delta_1 - \gamma_1\delta_3 = 0$. This would imply that the polynomials $a_1 a_3 g_1 = a_3(a_1 f_2 + a_2 b_1 - a_1 b_2)$ and $a_1 a_3 g_3 = a_1(a_3 f_2 + a_2 b_3 - a_3 b_2)$ have the root $-\delta_1/\gamma_1$ in common. This would also have to be a root of their difference $a_2(a_3b_1 - a_1b_3)$, which by our assumption of non-redundancy, is a non-zero constant: a contradiction, as desired.

The equation $\gamma_3 v_1^2 - \gamma_1 v_3^2 = a_2(\gamma_3 \delta_1 - \gamma_1 \delta_3)$ thus has infinitely many solutions only if it is a generalised Pell equation. In this case, by Cor.~\ref{cor:pell-sol-set} the values of $v_1$ for which all three constraints are satisfied form a union of finitely many exponentially-growing LRBS. 

\textbf{Case (5)}:
Four non-redundant (positive) constraints will have finitely many solutions by virtue of containing two cubic constraints (Case (3a)) or three quadratic constraints (Case (4b)) and may be effectively enumerated.
\end{proof} 
\begin{proposition} \label{prop:negative_constraints_poly}
Let $S$ be the set of solutions $x$ to a system of constraints given by $x > c$ and non-redundant constraints $\mathcal{R}_i(a_ix+b_i, q_i, r_i)$ for $i =1 ,\dots, l$, $a_i > 0$. If $S$ is infinite, then the subset $S' \subseteq S$ for which any non-redundant negative constraint $\neg \mathcal{R}(ax+b)$ is violated has density $0$ relative to $S$, unless $S$ is of the form $(3b)$ and $S'$ is of the form $(4c)$ in Prop.~\ref{prop:pos_poly_sol_sets}.
\end{proposition}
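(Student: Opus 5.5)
The plan follows the proof of Prop.~\ref{prop:negative_constraints}, now using the case analysis of Prop.~\ref{prop:pos_poly_sol_sets}. A violated negative constraint $\neg\mathcal{R}(ax+b)$ means the positive constraint $\mathcal{R}(ax+b)$ holds. So $S'$ is exactly the solution set of the enlarged positive system with $l+1$ constraints. A finite union of density-$0$ sets has density $0$, so it suffices to treat one negative constraint. If the new constraint is redundant with respect to one of the positive ones, the lemma on redundant pairs replaces the pair by a single predicate plus quantifier-free Presburger atoms. Then $S'$ is either a Presburger-definable subset of $S$ or a finite set, and I regard this reduction as part of the preprocessing. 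I therefore assume the enlarged system is non-redundant, and compare the structure of $S$ (with $l$ constraints) to that of $S'$ (with $l+1$).

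The cases go as follows. If $l=0$, then $S=[c+1,\infty)$ and $S'$ is a finite union of images of polynomials $g$ of degree $\geq 2$. Lem.~\ref{lem:zero_density} with $f(y)=y$ gives density $0$. If $l=1$, then $S$ is a finite union of $f(\mathbb{Z})\cap[c+1,\infty)$ with $\deg f=d_1\in\{2,3\}$, and $S'$ has one of the forms (3a)--(3c).
\begin{itemize}
\item In (3a), $S'$ is finite.
\item In (3b), $S'$ is a finite union of simple reversible LRBS, whose terms grow exponentially in $|n|$. Lem.~\ref{lem:zero_density} then applies with $g(y)=u_y$, exactly as before.
\item In (3c), $S'$ is a finite union of images of degree-$6$ polynomials, and $6>d_1$. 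Lem.~\ref{lem:zero_density} applies directly.
\end{itemize}

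If $l=2$ and $S$ is infinite, then $S$ is of type (3b) or (3c). From type (3c), $S$ is parametrised by degree-$6$ polynomials in $v_1$. Adding a third constraint yields type (4a), which is finite, or type (4c), which is an exponentially growing LRBS in $v_1$. In the latter case, composing with the degree-$6$ parametrisation still gives exponential growth of $x$ in the LRBS index, so $S'$ is sparser than any polynomial image and Lem.~\ref{lem:zero_density} applies. From type (3b), adding a quadratic constraint gives (4b), which is finite. Adding a cubic constraint gives (4c), and this is exactly the exceptional case excluded in the statement. If $l=3$ and $S$ is infinite, $S$ is of type (4c). Any fourth non-redundant constraint makes $S'$ finite by Case (5), so the density is trivially $0$.

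The main obstacle is making Lem.~\ref{lem:zero_density} applicable when the ambient set $S$ is itself an LRBS rather than a polynomial image. In the non-exceptional cases this never arises, since an LRBS-type $S$ only gains finite subsets $S'$. The one remaining subtlety is the $l=1$ case where the new constraint has the same degree as the old one and is not redundant. There I need the lemma's growth comparison to hold uniformly for the enlarged system. I would check this by noting that non-redundancy forces the enlarged system into type (3a) (finite) or (3b) (exponential), never into a polynomial image of degree $d_1$.
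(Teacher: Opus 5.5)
Your proposal is correct and follows essentially the same route as the paper. You treat a violated negative constraint as an enlarged positive system, run through the cases of Prop.~\ref{prop:pos_poly_sol_sets}, apply Lem.~\ref{lem:zero_density} against degree-$6$ polynomial images or exponentially growing LRBS, note that LRBS-type solution sets only ever lose finitely many points outside the excluded (3b)/(4c) situation, and get finiteness when $l=3$; your extra remarks (exponential growth after composing the (4c) LRBS with the degree-$6$ parametrisation, and setting aside redundant negative constraints, which the hypothesis already excludes) only make explicit what the paper leaves implicit.
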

\begin{proof}
A negative constraint $\neg\mathcal{R}_i(ax+b)$ being violated is equivalent to the positive constraint $\mathcal{R}_i(ax+b)$ holding. 
We go through the cases given by Prop.~\ref{prop:pos_poly_sol_sets}. In each case, we show that a negative constraint discards either a subset of relative density $0$, or that the set of parameters (of the solutions to the positive constraints) invalidated by forming finitely many arithmetic progressions. In either case, we can effectively determine whether there remain solutions after accounting for finitely many negative constraints.

\textbf{Case (1)}: $l = 0$. Then $S = [c+1,\infty)$. The discarded set $S'$ with the constraint $\mathcal{R}_i(ax+b)$ added becomes a finite union of sets of the form $g(\mathbb Z) \cap [c+1,\infty)$ where $g$ is a polynomial with $\deg g = d \geq 2$ and positive leading coefficient, by Prop.~\ref{prop:pos_poly_sol_sets}. Apply Lem.~\ref{lem:zero_density} with $f(y) = y$ and each $g$ to conclude $S'$ has density $0$ relative to $S$.

\textbf{Case (2)}: $l = 1$.
Then the solution set $S$ to $x > c$ and $\mathcal{R}_1(a_1x + b_1)$ is a finite union of sets of the form $f(\mathbb Z) \cap [c+1,\infty)$ for a polynomial $f$ with positive leading coefficient and degree $d_1$. Now,
$S'$ will satisfy the conclusion of either Case (3a), (3b), or (3c). 
In case $S'$ satisfies (3a) it is finite and thus has density $0$ relative to $S$.
The case of $S'$ satisfying (3b) has been handled in Prop.~\ref{prop:negative_constraints} \emph{mutatis mutandis} and $S'$ thus has density $0$ relative to $S$.
And in case $S'$ satisfies (3c), the discarded set $S'$ is a finite union of sets of the form $g(\mathbb Z) \cap[c+1,\infty)$ for polynomials $g$ of degree $6$, which is greater than $d_1$. 
Therefore we may again apply Lem.~\ref{lem:zero_density} with each pair $f,g$ to conclude $S'$ has density $0$ relative to $S$.

\textbf{Case (3)}: $l = 2$.  $S$ is infinite only if $d_1= 2, d_2 = 3$, or $d_1 = d_2 = 2$.
In the former case, $S'$ can be infinite only if $d_3 = 2$. Then by Prop.~\ref{prop:pos_poly_sol_sets}, if the discarded set $S'$  is infinite, then $S'$ is a union of finitely many simple reversible LRBS restricted to $[c+1,\infty)$. As before we conclude by Lem.~\ref{lem:zero_density}  that $S'$ is a null-density subset of $S$. 
The case $d_1=d_2=2, d_3 = 3$ results in finitely many solutions being discarded if a pair of constraints gives rise to an elliptic curve (see the first part of Case (3c) of Prop.~\ref{prop:pos_poly_sol_sets}); the case where it does not result in infinite LRBS of solutions being discarded. This discarded set can have positive relative density; however we have from Case (3c) of Prop.~\ref{prop:pos_poly_sol_sets} that the solutions to the positive constraints are themselves parametrised as LRBS\@. Prop.~\ref{prop:scarycase} shows that the indices of discarded solutions form arithmetic progressions, and hence we can effectively determine whether there exists a solution not discarded by the negative constraints.
 
\textbf{Case (4)} 
 $S'$ is finite and hence has density $0$ relative to $S$.
\end{proof}

The following is a special case of \cite[Prop.~2]{Schlickewei1993}.
\begin{lemma}[Parametrisation of $x$ values ruled out]
\label{lem:neg}
    Let $D_1,D_2 > 0$ be square-free integers. Let $(\alpha_n, \beta_n)$ be a pair of LRBS of solutions to the generalised Pell equation $\alpha^{2}-D_1 \beta^{2}=N_1$, and let $(\sigma_n,\tau_n)$ be a pair of LRBS of solutions to $\sigma^{2}-D_2\tau^{2}=N_2$, given by Cor.~\ref{cor:pell-sol-set}. Let $\phi_1,\phi_2$ be rational polynomials and $\mathcal K$ be the set of $k \in \mathbb Z$ such that there exists $m \in \mathbb Z$ with \begin{equation}
    \label{eqn:pellmerge}
    (\alpha_k, \beta_k)=(\phi_1(\sigma_{m}), \phi_{2}(\tau_{m}))\,.
    \end{equation}
    Suppose $\mathcal K$ is infinite, then we have
    \begin{itemize}
        \item[(i)] $D_1=D_2$;
        \item[(ii)] $\deg \phi_1=\deg \phi_2=:r$; and 
        \item[(iii)] The solutions $(\alpha_k,\beta_k)$ to \eqref{eqn:pellmerge} can be parametrised; there exists effectively computable $c \in \mathbb Z$ such that $\left\{(\alpha_k,\beta_k) : k \in \mathcal K \right\} = \left\{(\alpha_{rm + c},\beta_{rm + c}) : m \in \mathbb{Z} \right\}$.
    \end{itemize}
\end{lemma}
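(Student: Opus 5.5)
The plan is to reduce \eqref{eqn:pellmerge} to an $S$-unit / exponential-polynomial equation in the two indices $k$ and $m$, and then to read off (i)--(iii) from the structure of its infinitely many solutions. By Cor.~\ref{cor:pell-sol-set} we write $\alpha_k = A_1\varepsilon_1^k + A_2\varepsilon_1^{-k}$, $\beta_k = B_1\varepsilon_1^k + B_2\varepsilon_1^{-k}$, and similarly $\sigma_m, \tau_m$ in terms of $\varepsilon_2$. Here $\varepsilon_1 \in \mathbb{Q}(\sqrt{D_1})$ and $\varepsilon_2\in\mathbb{Q}(\sqrt{D_2})$ are units of infinite order, and all coefficients are non-zero. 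Expanding $\phi_1(\sigma_m)$ turns the first coordinate of \eqref{eqn:pellmerge} into
\[
A_1\varepsilon_1^k + A_2\varepsilon_1^{-k} = \sum_{j=-r_1}^{r_1} C_j \varepsilon_2^{jm},
\]
where $r_1=\deg\phi_1$ and the outer coefficients $C_{\pm r_1}$ are non-zero. This is a linear equation in the multiplicative group generated by $\varepsilon_1,\varepsilon_2$. Schlickewei's result \cite[Prop.~2]{Schlickewei1993} (or the subspace theorem / unit-equation finiteness) says that infinitely many solutions can only come from finitely many non-degenerate solutions together with families in which some sub-sum vanishes identically. We will analyse those degenerate families directly.

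First step: since $\mathcal K$ is infinite, $|k|\to\infty$ along $\mathcal K$, and by growth $|m|\to\infty$ as well. Comparing the dominant terms, for $k,m\to+\infty$ say, we get $A_1\varepsilon_1^k \sim C_{r_1}\varepsilon_2^{r_1 m}$. The vanishing-subsum structure then forces the identity $A_1\varepsilon_1^{k}=C_{r_1}\varepsilon_2^{r_1m}$ to hold exactly on an infinite subfamily. Hence $\varepsilon_1^{k}/\varepsilon_2^{r_1 m}$ takes the fixed value $C_{r_1}/A_1$ for infinitely many pairs $(k,m)$. Taking quotients of two such pairs gives multiplicative dependence $\varepsilon_1^{k-k'}=\varepsilon_2^{r_1(m-m')}$. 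A non-trivial power of $\varepsilon_1$ lies in $\mathbb{Q}(\sqrt{D_1})\setminus\mathbb{Q}$, so equality of these fields yields $D_1=D_2$ (both are square-free), which is (i). Running the same argument on the second coordinate with $\phi_2$ gives $\varepsilon_1^{k-k'}=\varepsilon_2^{r_2(m-m')}$ on a common infinite set of pairs. Since $\varepsilon_2$ has infinite order, $r_1=r_2$, which is (ii). The negative-index ends ($k\to-\infty$) are handled symmetrically, using the $\varepsilon^{-1}$ terms.

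Second step, toward (iii): with $D_1=D_2$, both $\varepsilon_1$ and $\varepsilon_2$ are powers of the fundamental unit $\eta$ of $\mathbb{Z}[\sqrt{D_1}]$ up to sign. The dominant-term identity then becomes a linear relation $a k = b r m + e$ between exponents. Once the leading terms cancel, all remaining subsums must vanish identically as polynomial identities. This forces $\phi_1(\sigma_m)$ to be literally the LRBS $\alpha$ evaluated along an arithmetic progression in $m$. Matching $\varepsilon_1^{k}$ with $\varepsilon_2^{rm}$ should give exactly that $\varepsilon_1$ and $\varepsilon_2^{\,r}$ are equal up to the normalisation of Cor.~\ref{cor:pell-sol-set}. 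In that case the solution set is $\{k = rm + c\}$, with $c$ computable from the finitely many identities $C_{r}/A_1=\varepsilon_1^{c}$, which are solvable by comparing logarithms. Effectiveness comes from the fact that the exceptional degenerate families in Schlickewei's theorem are explicitly described.

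The main obstacle is this second step. Passing from ``infinitely many solutions'' to ``an exact identity on an arithmetic progression'' requires care with the vanishing-subsum classification: cross terms with intermediate $j$ could a priori cancel against one another. Pinning the progression's step to exactly $r$, rather than some divisor or multiple of it, also depends on the normalisation of the fundamental units. I would first try to rule out intermediate cancellations using the explicit non-zero coefficients from Cor.~\ref{cor:pell-sol-set}, and fall back on citing \cite[Prop.~2]{Schlickewei1993} for the precise form.
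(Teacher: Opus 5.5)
Your arguments for (i) and (ii) are fine in outline, but they take a different route from the paper. You use the unit-equation/subspace machinery to obtain the exact identity $A_1\varepsilon_1^k=C_{r_1}\varepsilon_2^{r_1m}$ on an infinite subfamily, then conclude via multiplicative dependence. The paper dispatches (i) in one line and gets (ii) by comparing growth rates. If you spell out the pigeonhole over partitions into minimal vanishing subsums, your (i) is more explicit than the paper's.

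The genuine gap is (iii), which you leave as the ``main obstacle''. It starts with a normalisation error. By Cor.~\ref{cor:pell-sol-set}, both pairs of LRBS are expressed through the fundamental solution of $w^2-Dz^2=1$. So once $D_1=D_2=D$, you have $\varepsilon_1=\varepsilon_2=\varepsilon$ exactly. Your claim that $\varepsilon_1$ and $\varepsilon_2^{\,r}$ coincide is wrong, and it would give a progression of step $1$, not $r$. With $\varepsilon_1=\varepsilon_2$, your own first-step identity already gives $\varepsilon^{k-rm}=C_r/A_1$ on the infinite subfamily, hence $k-rm=c$ is constant there. The paper reaches the same point elementarily: $\varepsilon^{k-rm}\to C_r/A_1$, and integer powers of $\varepsilon>1$ are discrete, so no subspace theorem is needed for this step.

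What is missing is the passage from ``$k=rm+c$ for infinitely many $m$'' to ``$\phi_1(\sigma_m)=\alpha_{rm+c}$ and $\phi_2(\tau_m)=\beta_{rm+c}$ for every $m\in\mathbb Z$''. That is what the set equality in (iii) requires. The paper gets it from a single observation. The function $\Lambda(x)=x^{r}\bigl(\phi_1(A_3x+A_4x^{-1})-A_1\varepsilon^{c}x^{r}-A_2\varepsilon^{-c}x^{-r}\bigr)$ is a polynomial of degree at most $2r$ vanishing at the infinitely many points $x=\varepsilon^m$. Hence $\Lambda\equiv 0$, and the same argument works for $\phi_2$ and $\beta$.

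This makes your worry about intermediate cross terms moot. Whatever the middle coefficients $C_j$ are, they sit inside a polynomial identity, which in fact forces them to vanish, so no classification of degenerate families is needed. Your own tool also already excluded cross-cancellation. A minimal vanishing subsum containing two terms $C_j\varepsilon^{jm}$ and $C_{j'}\varepsilon^{j'm}$ would confine $\varepsilon^{(j-j')m}$ to a finite set, which is impossible along infinitely many $m$. Finally, falling back on \cite{Schlickewei1993} for ``the precise form'' would amount to citing the lemma itself, since the paper presents it as a special case of that proposition.
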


\begin{proof}
It is clear from the polynomial relations that the field extensions $\mathbb{Q}(\sqrt{D_1})$ and $\mathbb{Q}(\sqrt{D_2})$ are equal, which can happen only if $D_1=D_2$.

By Cor.~\ref{cor:pell-sol-set}, we have
\begin{equation*}
(\alpha_k, \beta_k)=(A_1\varepsilon^{k}+A_2\varepsilon^{-k}, B_1\varepsilon^{k}+B_2\varepsilon^{-k})
\end{equation*}
and
\begin{equation*}
(\sigma_{m}, \tau_{m})=(A_3\varepsilon^{m}+A_4\varepsilon^{-m}, B_3\varepsilon^{m}+B_4\varepsilon^{-m}) \, ,    
\end{equation*}
where $\varepsilon$ is the fundamental unit associated with the underlying Pell equation, and $A_1,\dots A_4$, $B_1,\dots, B_4$ are algebraic numbers.

Substituting the latter equation into (\ref{eqn:pellmerge}) and comparing growth rates, one sees that \eqref{eqn:pellmerge} holding for infinitely many $k \in \mathbb N$ forces $\deg \phi_1=\deg \phi_2$ (which we call $r$). Further, for infinitely many pairs $(k,m)$ satisfying \eqref{eqn:pellmerge}, one has $k=r m +c$, for a constant integer $c$ not depending on $k$ or $m$. This can be seen by dividing out by $\varepsilon^{r m }$ after making the  substitution in (\ref{eqn:pellmerge}), and taking the limit as $ k,m\rightarrow\infty$. In particular, we must have $\frac{\varepsilon^{k}}{\varepsilon^{r m}}$ converges to a constant as $k, m\rightarrow \infty$. As $\varepsilon>1$, we have that this sequence becomes eventually constant, i.e., $\varepsilon^{k}/\varepsilon^{r m}=\varepsilon^{c}$ for some fixed integer $c$. We now claim that we must have $\phi_1(\sigma_m) = \alpha_{rm+c}$ for every $m\in \mathbb{Z}$. Indeed, one can consider the polynomial in $x$ of degree $2r$ given by
$\Lambda(x)=x^{r}\left(\phi_1(A_3 x +A_4x^{-1})- A_1\varepsilon^{c}x^{r}-A_2\varepsilon^{-c}x^{-r}\right)$.

We have just established that $\phi_1(\sigma_m) = \alpha_{r m + c}$ for infinitely many $m$, therefore $\Lambda(x)$ is a polynomial in $x$ with infinitely many roots of the form $x=\varepsilon^m$, for infinitely many values of $m$, hence must be identically zero. This shows that $\phi_1(\sigma_m) = \alpha_{rm+c}$ for all $m$, and by the same argument, we have $\phi_2(\tau_m) = \beta_{rm+c}$, from which (iii) follows.

Furthermore, this constant integer $c$ can be effectively determined from the constants in the equations. 
\end{proof}

\begin{proposition}\label{prop:scarycase}
    Suppose $S$ is of the form (3b) and $S'$ is of the form (4c) of Prop.~\ref{prop:pos_poly_sol_sets}. Then $S'$ consists of indices of $S$ in arithmetic progressions. These may be computed effectively.  
\end{proposition}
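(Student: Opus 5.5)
We make both $S$ and $S'$ explicit as Pell-type bi-sequences and then apply Lem.~\ref{lem:neg}. Concretely, $S$ comes from two quadratic constraints $\mathcal{R}_1(a_1x+b_1,q_1,r_1)$ and $\mathcal{R}_3(a_3x+b_3,q_3,r_3)$. As in Case (3b) of Prop.~\ref{prop:pos_poly_sol_sets}, which reduces to Case (3b) of Prop.~\ref{prop:pos_pow_sol_sets}, the pairs $(u_1,u_3)$ solve a generalised Pell equation $w^2-D_1 z^2=N_1$. By Cor.~\ref{cor:pell-sol-set} and the periodicity argument, $S$ is a finite union of sets $\{\psi(\alpha_k):k\in\mathbb{Z}\}$, where $(\alpha_k,\beta_k)$ is a pair of simple reversible LRBS (restricted to a residue class of $k$) and $\psi$ is a fixed quadratic polynomial recovering $x$. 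The discarded set $S'$ adds the negated cubic constraint $\mathcal{R}_2(a_2x+b_2,q_2,r_2)$, used positively. By Case (4c), $S'$ is parametrised through $v_1=a_2u_1/(\alpha u_2+\beta)$ and $v_3$, which satisfy $\gamma_3v_1^2-\gamma_1v_3^2=N_2$. So $S'$ is a finite union of sets $\{\chi(\sigma_m)\}$, where $(\sigma_m,\tau_m)$ solves a second generalised Pell equation and $u_1,u_3$ are rational polynomials in $\sigma_m,\tau_m$ of degree $3$ (since $u_1=v_1(\alpha u_2+\beta)/a_2$ and $u_2$ is quadratic in $v_1$).

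The first step is to write membership of $\psi(\alpha_k)$ in $S'$ as an equation of the form \eqref{eqn:pellmerge}. Take $\phi_1,\phi_2$ to be the rational polynomials expressing $u_1,u_3$ (after the scaling that produced $w,z$) in terms of $v_1,v_3$. A point $x\in S$ with parameter $k$ is discarded exactly when $(\alpha_k,\beta_k)=(\phi_1(\sigma_m),\phi_2(\tau_m))$ for some $m$ and some component of $S'$. Since $u_1,u_3$ determine $x$, and $x$ determines $u_1,u_3$ up to sign, we can absorb signs by also considering the pairs $(\pm\alpha_k,\pm\beta_k)$. There are finitely many component pairs, and we treat each separately. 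For each pair, the set $\mathcal K$ of discarded indices is either finite, and then it can be enumerated (after finitely many exceptional $k$ it is empty, by the growth comparison in the proof of Lem.~\ref{lem:neg}), or infinite. In the infinite case Lem.~\ref{lem:neg} gives $D_1=D_2$, equal degrees $r$, and an effective $c$ such that $\mathcal K$ coincides with the progression $\{rm+c\}$. The outcome is that the discarded indices form a finite union of arithmetic progressions together with finitely many isolated indices, all effectively computable.

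The main obstacle is making the finite-$\mathcal K$ case effective. Lem.~\ref{lem:neg} only describes $\mathcal K$ when it is infinite, and we must compute a bound beyond which no sporadic coincidence occurs. I would take this bound from the same dominant-root argument. Dividing by $\varepsilon^{rm}$ shows that any solution with $|k|,|m|$ large forces $k-rm$ to equal a constant $c$ for which $\Lambda\equiv 0$. Hence, outside the computed progressions, solutions are confined to an explicitly bounded range of $(k,m)$, which we check directly. The possible mismatch between the fundamental units of the two Pell equations (powers of a common unit when $D_1=D_2$) is handled by rewriting both in terms of the smaller unit, which only changes the modulus of the progressions. The modular side conditions on $k$ and $m$ carry over as residue classes, so they only intersect progressions with progressions. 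Finally, because the parameter set of $S$ is itself a union of progressions in $k$, deciding whether some index survives reduces to deciding whether a finite union of arithmetic progressions covers another one up to finitely many points, which is decidable by the Chinese Remainder Theorem.
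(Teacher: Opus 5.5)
Your proposal is correct and follows essentially the paper's route. You write $u_1,u_3$ as cubic rational polynomials in the Pell parameters $(v_1,v_3)$ of $S'$, and apply Lem.~\ref{lem:neg} once for each choice of sign to obtain the discarded indices of $S$ as arithmetic progressions. Your extra care fills in points the paper leaves implicit: the effective bound on sporadic coincidences outside the progressions, the possible mismatch of units, and the modular side conditions on $k$ and $m$; your direct formula $u_1=v_1(\alpha u_2+\beta)/a_2$ also avoids the paper's passage through $\sqrt{a_2}$.
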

\begin{proof}
We borrow notation from the relevant cases. Note that $S$ is parametrised by
$u_1,u_3$ satisfying the generalised Pell equation
$\gamma_3 u_1^2 - \gamma_1u_3^2 = C$ for some $C$ depending on the data,
and that further $u_1^2 =(\alpha_1 u_2 + \beta_1)^2(\gamma_1 u_2 + \delta_1)$
and also $u_3^2 =(\alpha_3 u_2 + \beta_3)^2(\gamma_3 u_2 + \delta_3).$
Note that $S'$ is parametrised by 
$v_1,v_3$ satisfying  $\gamma_3 v_1^2 - \gamma_1 v_3^2 = a_2(\gamma_3 \delta_1 - \gamma_1 \delta_3)$
with $v_i^2 = a_2(\gamma_iu_2+\delta_i)$ for $i=1,3.$
hence $u_2 =\frac{v_i^2-a_2\delta_i}{\gamma_ia_2}$
and thus $u_i^2 =(\alpha_i \frac{v_i^2-a_2\delta_i}{\gamma_ia_2} + \beta_i)^2\frac{v_i^2}{a_2}$ for $i=1,3$ so $u_i =\pm (\alpha_i \frac{v_i^2-a_2\delta_i}{\gamma_ia_2} + \beta_i) \frac{v_i}{\sqrt{a_2}}$ for $i=1,3$. If $\sqrt{a_2}$ is irrational, the claim is vacuous; otherwise we apply Lem.~\ref{lem:neg} four times for each choice of sign and get the desired conclusion.
\end{proof}
The above results justify the correctness of the following algorithm: We first compute $S$. Should $S$ be of the form (3b) and any $n\geq 1$ negative constraint gives rise to a discarded set $S'$ of the form (4c) we compute the resulting $S\setminus \bigcup_{i=1}^n S'_i$ using Prop.~\ref{prop:scarycase}. We set $\tilde{S} =S\setminus \bigcup_{i=1}^n S'_i$ in this case. Otherwise we set $\tilde{S}=S$.  If $\tilde{S}$ is finite we enumerate $\tilde{S}$ and check all constraints. If $\tilde{S}$ is infinite we simply return true as any negative constraint will remove only a null-density subset.

\section{Undecidability}
\label{sec:hardness}
\subsection{B\"uchi's Problem}
B\"uchi formulated the following problem while studying the existential fragment of Presburger arithmetic expanded with the perfect-square predicate $\mathbf{Z}^2$: does there exist an $M$ such that any integer sequence of $M$ squares whose second difference is constant and equal to $2$ is necessarily a sequence of \emph{consecutive} squares? That is, is there an $M$ such that for all $x_1,\dots,x_M$ such that $x_{i+2}^2 - 2x_{i+1}^2 + x_i^2 = 2$ for $i = 1,\dots,M-2$, we have that $x_i = x_{i+1} -1$ for $i = 1,\dots,M-1$?
A positive answer to B\"uchi's problem enables one to define the squaring function from
the perfect-square predicate without the need for quantifiers. Indeed, the assertion $y = x^2$ would be equivalent to $\bigwedge_{i=0}^{M-1} \mathbf{Z}^2(y + 2ix + i^2)$. 
Multiplication would in turn be positive-existentially defined using the identity $4xy = (x+y)^2 - (x - y)^2$. The undecidability of the existential fragment of Presburger arithmetic expanded with the perfect-square predicate $\mathbf{Z}^2$ would hence follow.
B\"uchi himself conjectured that $M = 5$, and a proof has recently been announced by Xiao \cite{xiao2025buchi}.

We show that the (negation of the) B\"uchi conjecture can be encoded in the two-variable existential theory $\mathit{SMT}^2\langle \mathbb{Z}; +, 0, 1, \mathbf{Z}^2 \rangle$. Indeed, any counterexample sequence must have the form $i^2 + c_1 i + c_0$ for $i = 1, \ldots, 5$, where $c_0, c_1$ are integers and the polynomial $g(t) = t^2 + c_1t + c_0$ is not of the form $(x+b)^2$. By the contrapositive of \cite[Cor.~1.7]{pos-exists-mult}, there exists $M$ such that $g(M)$ is not a perfect square: by appropriate shifting, we can assume $M = 6$. The negation of the B\"uchi conjecture is then simply the formula 
$$
\exists c_0, c_1 \, . \left(\bigwedge_{i=1}^5 \mathbf{Z}^2(c_0 + ic_1 + i^2) \right) \land \neg \mathbf{Z}^2(c_0 + 6c_1 + 36) \, .
$$

The following technical but elementary lemma shows how one can encode arbitrary Diophantine equations in the signature of $\langle \mathbb{Z}; 0, 1, +, -, \mathbf{Z}^2 \rangle$ with a limited budget of first-order variables. 
\begin{lemma}
\label{lem:encode-dio}
    Let $h \in \mathbb{Z}[x_1, \ldots, x_n]$. The assertion $h(x_1, \ldots, x_n) = 0$, where $x_1, \ldots, x_n$ are integer-valued variables, can be encoded in $\mathit{FO}\langle \mathbb{Z}; 0, 1, +, -, \mathbf{Z}^2 \rangle$ via a formula that uses at most $4$ bound variables, all of which are existentially quantified.
    
\end{lemma}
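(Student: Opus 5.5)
The plan is to make multiplication expressible with very few variables, using the perfect-square predicate together with variable reuse. The key observation is that $\mathbf{Z}^2$ lets us define squaring as soon as we have an auxiliary witness. To say $y = x^2$ with only one extra variable, we use that $y=x^2$ iff the whole polynomial $t\mapsto y+2tx+t^2$ takes square values at $t=0,\dots,M-1$. By Xiao's resolution of B\"uchi's problem~\cite{xiao2025buchi} this is quantifier-free, but we avoid relying on it. Instead we use the elementary characterisation that $y=x^2$ iff $y$, $y+2x+1$ and suitable neighbours are squares \emph{and} $y$ lies between $(x-1)^2$ and $(x+1)^2$. This betweenness is again non-linear, so the cleaner route is the following.

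\textbf{Step 1 (definable squaring).} We express $y=x^2$ as $\exists z\,.\,(\text{$4y+1$-type conditions})$. Concretely, $y=x^2$ holds iff $\mathbf{Z}^2(y)$, $\mathbf{Z}^2(y+2x+1)$, and there is no square strictly between $y$ and $y+2x+1$ (for $x\ge 0$). The last condition is a universal statement over one extra variable, but it uses $\neg\mathbf{Z}^2$. Since the lemma demands existential quantifiers only, I would instead use the identity that for $y\ge 0$ and $x\ge 0$, $\mathbf{Z}^2(y)\wedge\mathbf{Z}^2(y+2x+1)$ forces $y+2x+1-y=2x+1=(s+r)(s-r)$ where $s^2 = y+2x+1$ and $r^2=y$. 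Pinning $s-r=1$ then requires an extra square condition, for instance $\mathbf{Z}^2(4y+4x+1)$, since $(2x+1)^2 = 4x^2+4x+1$, combined with the earlier two conditions. I would try to verify that $\mathbf{Z}^2(y)\wedge\mathbf{Z}^2(y+2x+1)\wedge\mathbf{Z}^2(4y+4x+1)\wedge\ldots$ with a small fixed number of shifts $y+2ix+i^2$ cuts the solutions down to $y=x^2$ up to a finitely checkable set. If that fails, I fall back on \cite{xiao2025buchi} (B\"uchi with $M=5$), which gives $y=x^2$ as a quantifier-free formula in $x,y$ alone.

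\textbf{Step 2 (Diophantine encoding with few variables).} Write $h=0$ as a straight-line program of additions and multiplications. Following the standard trick for bounded-variable logics, we express each product through $4uv=(u+v)^2-(u-v)^2$. The plan is to transform $h(x_1,\dots,x_n)=0$ into a nested formula in which each intermediate value is held in a variable that is immediately consumed and then requantified. Here we use that $\exists$-quantifiers may be reused in the scope structure, for example $\exists w\,.\,(w=(u+v)^2\wedge \exists w'\,.\ldots)$. The main obstacle is that $h$ has $n$ free variables, which cannot all be kept alive with 4 variables. So the plan is to first reduce, via Matiyasevich's theorem or simply by Skolem-style reduction, to a single equation and then encode the values of the tuple $(x_1,\dots,x_n)$ into one integer. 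This uses the polynomial's value directly: we evaluate $h$ by Horner-like accumulation, keeping an accumulator variable, a current-monomial variable and two scratch variables for the squaring definition from Step 1. Each $x_i$ is introduced by $\exists x_i$ exactly when needed; since a variable appearing in several monomials must be shared, we first rewrite $h=0$ into an equisatisfiable system where every original variable occurs in a bounded block. I expect this bookkeeping (keeping the budget at exactly 4) to be the delicate part, and I would settle it by choosing the order of quantifier introduction carefully.
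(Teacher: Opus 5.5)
There is a genuine gap, and it starts with a misreading of what the budget counts. In the lemma, $x_1,\dots,x_n$ are \emph{free} variables of the encoding formula. The bound of $4$ concerns only the auxiliary existentially quantified variables; this is how the lemma is used in Thm.~\ref{thm:undec}, where $13=9+4$ and $14=10+4$. Your Step~2 instead tries to squeeze the $x_i$ into the budget: you pack the tuple into one integer, introduce each $x_i$ ``exactly when needed'', and pass to an ``equisatisfiable system''. This detour is unnecessary and, as described, does not work. You give no pairing that is definable within four variables, and a bound $x_i$ occurring in several monomials must stay in scope across all of them. Moreover, an equisatisfiable system does not yield a formula \emph{equivalent} to $h(x_1,\dots,x_n)=0$ in the free $x_i$, which is what the lemma asserts. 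If the detour worked, it would put Hilbert's tenth problem inside $\exists\mathit{FO}^4$, far beyond Thm.~\ref{thm:undec}.

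Once the statement is read correctly, what remains is exactly the part you defer: showing that four names suffice. Your four roles have the right count: an accumulator, a current product, and two variables for the squares in $4uv=(u+v)^2-(u-v)^2$. These last two are not needed for squaring itself, which is quantifier-free via B\"uchi. The real content is that names can be re-quantified without clobbering values still needed, and you do not show this.

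The paper's scheme runs as follows. Rewrite $h_r$ as $t+g_{r1}$ with $t=h_{r+1}$, alternating the names $t_0,t_1$ down the list of monomials. Then rewrite $g_{rs}$ as $u-v$, together with separate conjuncts $u=(x_{j_s}+g_{r,s+1})^2$ and $v=(-x_{j_s}+g_{r,s+1})^2$. The partial-sum variables do not occur in the conjunct defining $u$ (or $v$), so they can be re-quantified there to expand $g_{r,s+1}$. Hence the pairs $\{t_0,t_1\}$ and $\{t_2,t_3\}$ alternate with nesting depth. Each resulting $t_a=T^2$ is then written as $\bigwedge_{i=0}^{4}\mathbf{Z}^2(t_a+2iT+i^2)$ with no new variables.

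Finally, your elementary alternative in Step~1 is not viable. Whether $y+2ix+i^2$ being a square for $i=0,\dots,M-1$ forces $y=x^2$ is B\"uchi's problem itself. For $M=4$ it fails, e.g.\ $y=36$, $x=246$. So Xiao's theorem is the intended ingredient, not a fallback.
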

\begin{proof}
    We need existentially quantified variables to implement multiplication using the identity $4xy = (x+y)^2 - (x - y)^2$, e.g., the proposition $t = 4xy$ is equivalent to $\exists u \exists v \, .\, (t = u -v) \land (u = (x + y)^2) \land (v = (x - y)^2)$, where the last conjunct is written as $\bigwedge_{i=0}^4 \mathbf{Z}^2(v + 2i(x-y)+i^2)$, and similarly for the penultimate conjunct.

    We momentarily leave aside the issue of the scarcity of existentially quantified variables and introduce rewrite rules (that replace polynomials with linear combinations of variables and simpler polynomials), with each application introducing fresh quantified variables. For convenience, we refer to the sum of monomials $r$ through $m$ of $h(x_1, \ldots, x_n)$ as $h_r$, and the $r$-th monomial $g_{r1}$ of degree $d_r$ is constructed through the intermediate monomials $g_{rs} = \frac{c_r}{4^{s-1}} \prod_{l=s}^{d_r} x_{j_l}$. In the rewrite rules that follow, the subformulas are assumed to be minimal.
    \begin{enumerate}
        \item A subformula of the form $\psi\left(h_r, \ldots\right)$ is rewritten as 
        $$
        \exists t \, .\, \psi\left[h_r \middle/ \left(t+g_{r1}\right)\right] \land \left(t = h_{r+1}\right),
        $$
        reducing the number of monomials.
        \item A subformula of the form $\psi\left(g_{rs}, \ldots \right)$ is rewritten as
        $$
        \exists u, v \, .\, \psi\left[ g_{rs} \middle/ (u - v) \right] \land ~u = \left( x_{j_s} + g_{r,s+1} \right)^2 \land~ v = \left( -x_{j_s} + g_{r,s+1} \right)^2,
        $$
        reducing the degree of the monomial.
        \item A subformula of the form $w = T^2$ is rewritten as
        $$
        \bigwedge_{i=0}^{4} \mathbf{Z}^2 \left(w + 2iT + i^2\right),
        $$
        where $T$ is a linear combination of variables.
    \end{enumerate}
    We observe that we can first repeatedly apply Rule 1 until the formula involves only monomials, then repeatedly apply Rule 2 until all propositions are either linear equations or assertions of a square relation, and finally apply Rule 3 to encode the latter in our signature. 

    Finally, we show that while applying the rewrite rules to the formula $h(x_1, \ldots, x_n) = 0$ and thus introducing existentially quantified variables, we can recycle these variables so that we only need $4$ of them. The key observation is that if a variable $t$ does not occur in a subformula $\psi$ being rewritten, it can be recycled for the purpose.

    We claim that we can alternate between introducing $t_0$ and $t_1$ as we repeatedly apply Rule 1. For instance, $h = 0$ gets rewritten to $\exists t_0 \, .\, t_0 + g_{11} = 0 \land t_0 = h_2$, which itself gets rewritten to $\exists t_0 \, .\, t_0 + g_{11} = 0 \land (\exists t_1 \, .\, t_0 = t_1 + g_{21} \land t_1 = h_3)$. We use our key observation that the previously quantified $t_0$ does not occur in $t_1 = h_3$, and can be recycled for this purpose of rewriting it. The intermediate formula after completing the applications of Rule 1 recycles $t_0, t_1$ in an alternating manner while introducing quantified variables.

    We now have subformulas of the form $t_b = t_{1-b} + g_{r1}$ that we need to rewrite using Rule~2. We cannot use $t_0, t_1$, and hence must use $t_2, t_3$ to obtain $$\exists t_2, t_3 \, .\, t_b = t_{1-b} + t_2 - t_3 \land t_2 = (x + g_{r2})^2 \land t_3 = (-x + g_{r2})^2,$$
    to which Rule 2 may need to be reapplied. This time, however, we have access to $t_0, t_1$ while rewriting $t_2 = (x + g_{r2})^2$. In this manner, we can alternate between introducing $t_0, t_1$, and $t_2, t_3$ while applying Rule~2. This leaves us with subformulas of the form $t_a = (x + t_b - t_c)^2$ to rewrite using Rule 3. This is merely a syntactic rewrite, and we have indeed proven that we need only $4$ existentially quantified variables. 
\end{proof}

\subsection{Universal Diophantine Equations and Undecidability}
It is well known that Hilbert's tenth problem, i.e., deciding whether a given polynomial equation has integer solutions, is undecidable. Thanks to Xiao's proof of B\"uchi's conjecture~\cite{xiao2025buchi}, together with Lem.~\ref{lem:encode-dio} and the bounds on the degree of the polynomial and number of variables, we now establish undecidability results for bounded-variable Presburger arithmetic expanded with the perfect-square predicate. More specifically, Jones \cite{jonesuniversal} constructs \emph{universal} Diophantine equations, i.e., polynomials $h$ in several unknowns $x_1, \ldots, x_n$ and parameters $x, y, z, w \in \mathbb{N}$ such that $$\exists x_1, \ldots, x_n \in \mathbb{N} \, .\, h(x, y, z, w, x_1, \ldots, x_n) = 0$$ if and only if $x$ is contained in the recursively enumerable set indexed by $\langle y, z, w\rangle$.

\begin{theorem}
\label{thm:undec}
    The following theories are undecidable:
    \begin{enumerate}
        \item $\exists \mathit{FO}^{13}\langle \mathbb{Z}; 0, 1, +, -, <, \mathbf{Z}^2 \rangle$,
        
        \item $\exists \mathit{FO}^{14}\langle \mathbb{Z}; 0, 1, +, -, \mathbf{Z}^2 \rangle$,
        
        \item $\mathit{SMT}^{600}\langle \mathbb{Z}; 0, 1, +, -, <, \mathbf{Z}^2\rangle$,

        \item $\mathit{SMT}^{2200}\langle \mathbb{Z}; 0, 1, +, -, \mathbf{Z}^2\rangle$.
    \end{enumerate}
\end{theorem}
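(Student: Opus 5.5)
We reduce from Hilbert's tenth problem, using a fixed universal Diophantine equation from Jones~\cite{jonesuniversal}. Jones gives explicit universal polynomials with concrete bounds on the number of unknowns and the degree. Two examples are a version with $\nu = 9$ unknowns and very large degree, and a version with degree $4$ and a larger number of unknowns, roughly $58$. We use the $(\nu,\delta)$ pair that is cheapest for each item.

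For a fixed index $\langle y,z,w\rangle$ of an undecidable r.e. set, membership of $x$ reduces to the solvability of $h(x,y,z,w,x_1,\dots,x_n)=0$ over $\mathbb{N}$. The parameters $x,y,z,w$ are constants, so they can be substituted into $h$. Natural-number unknowns are handled in one of two ways, depending on the signature:
\begin{itemize}
\item with $<$, we add the conjuncts $x_i \ge 0$;
\item without $<$, we use Lagrange's four-square theorem, writing $x_i = \sum_{j=1}^4 y_{ij}^2$. Equivalently, we observe that $h$ has a solution over $\mathbb{N}$ iff the substituted polynomial has one over $\mathbb{Z}$. This enlarges the number of unknowns by a factor of $4$ and doubles the degree.
\end{itemize}

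\textbf{Items 1 and 2.} We keep the $n$ unknowns as free-standing existentially quantified variables. We then apply Lem.~\ref{lem:encode-dio}, which encodes the equation $h=0$ using only $4$ additional bound variables, reused throughout. For item 1, Jones' $9$-unknown equation gives $9+4=13$ variables; the nonnegativity conjuncts $x_i\ge 0$ cost no new variables. For item 2, we need to avoid $<$. The plan is to express $x_i\ge 0$ itself with the reusable scratch variables: existentially quantify four fresh $s_j$ with $x_i = \sum_j s_j^2$, each square encoded via Rule~3 of Lem.~\ref{lem:encode-dio}. The idea is to nest these inside the scope-recycling scheme so that only one further variable is needed, giving $14$. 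This recycling argument is the step most likely to need care: we must show that the quantifier scopes can be arranged so that the four-square witnesses and the multiplication scratch variables never overlap by more than one extra variable.

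\textbf{Items 3 and 4.} In the prenex SMT fragment, variables cannot be recycled, so we count every intermediate variable produced by Rules 1 and 2. Each monomial of degree $d$ costs about $2(d-1)$ variables for the products plus one for the partial sum. The count is then $O(\text{number of monomials}\times\text{degree})$ plus the $n$ unknowns. Using Jones' degree-$4$ universal equation and bounding its number of monomials explicitly gives about $600$ variables. Without $<$, the Lagrange substitution blows up the monomial count and degree, giving about $2200$ variables.

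Undecidability then follows because B\"uchi's conjecture~\cite{xiao2025buchi} justifies the five-square encoding of squaring in Rule~3. The main obstacle is the bookkeeping needed to certify the specific numerical constants, in particular the monomial count of Jones' polynomial after substitution.
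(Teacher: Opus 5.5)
Item~1 matches the paper's argument: Matiyasevich's nine positive-integer unknowns plus the four scratch variables of Lem.~\ref{lem:encode-dio}.

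\textbf{Item 2: a different but valid route.} The paper never encodes positivity here. It cites Sun's version of the nine-unknown theorem over $\mathbb{Z}$ (nine integer unknowns and one unknown required to be nonzero, expressed by the negated atom $x_{10}\neq 0$), giving $10+4=14$. Your four-square route also works, and more easily than you expect, so the step you flag as needing care is not an obstacle. The positivity conjuncts sit beside the encoding of $h=0$, not inside its scope, so $t_0,\dots,t_3$ can be reused there. Also, $\mathbf{Z}^2$ applies to the squares themselves, so you need neither square roots nor Rule~3. The conjunct $\exists t_0\,\exists t_1\,\exists t_2\,.\,\mathbf{Z}^2(t_0)\wedge\mathbf{Z}^2(t_1)\wedge\mathbf{Z}^2(t_2)\wedge\mathbf{Z}^2(x_i-1-t_0-t_1-t_2)$ expresses $x_i\geq 1$ at no cost in variables. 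You therefore get undecidability with $13$ variables and no $<$, which implies item~2.

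\textbf{Items 3 and 4: a genuine gap.} The content of these items is the constants $600$ and $2200$, and your plan does not establish them. As set up, it is also unlikely to. Counting monomials of Jones's degree-$4$ polynomial is the wrong measure. The expanded polynomial can have far more monomials than the number of operations needed to compute it. You give no bound showing that about $2(d-1)+1=7$ variables per monomial stays under $600$.

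The missing input is \cite[Thm.~5]{jonesuniversal}: the $(58,4)$ universal equation can be evaluated with at most $100$ additions, subtractions and multiplications. In the SMT encoding you never expand. Each operation gets a fresh variable, and each multiplication costs at most $5$ variables via $4xy=(x+y)^2-(x-y)^2$ and Rule~3. This gives the safe count $100+5\cdot 100=600$.

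Item~4 is worse under your plan. Substituting $x_i=\sum_j y_{ij}^2$ with the $y_{ij}$ square roots doubles the degree. After expansion a single degree-$4$ monomial can become up to $4^4$ monomials of degree~$8$, so ``about $2200$'' has no support. The paper instead quantifies over the squares themselves. It replaces $\exists x>0$ by $\exists y_1,\dots,y_4\,.\,\bigwedge_j\mathbf{Z}^2(y_j)\wedge\bigvee_j y_j\neq 0$ and substitutes the \emph{linear} term $y_1+\dots+y_4$ for $x$. This adds no arithmetic operations, only four variables per unknown, giving the conservative total $600+4\cdot 4\cdot 100=2200$.
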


\begin{proof}
Item (1) follows from Matiyasevich's construction (see \cite[Sec.~3]{jonesuniversal}), which when given $x \in \mathbb{N}$ and a recursively enumerable set $W$, produces a Diophantine equation with $9$ positive-integer unknowns that has a solution if and only if $x \in W$. Item (2) follows from the analogue due to Sun \cite[Thm.~1.1(ii)]{sun-hilbert-Z}, where the constructed equation has $9$ integer unknowns and $1$ nonzero-integer unknown. 

Item (3) follows from Jones's concrete example of a universal Diophantine equation of degree $4$ with $58$ positive-integer unknowns, which can be implemented with at most $100$ arithmetic operations \cite[Thm.~5]{jonesuniversal}. Note that each multiplication would introduce at most $5$ fresh variables to be encoded in our SMT instance (as discussed in the proof of Lem.~\ref{lem:encode-dio}). We take $600 = 100 + 100\cdot 5$ as a conservative estimate for the total number of variables. 

Item (4) follows from converting each positive-integer unknown into a regular integer unknown by introducing fresh variables and using the Lagrange four-squares theorem, i.e., $\exists x > 0$ replaced by $\exists y_1, \ldots, y_4 \, . \bigwedge_{i=1}^4 \mathbf{Z}^2(y_i) \land \bigvee_{i=1}^4 y_i \ne 0$, and every occurrence of $x$ is replaced by $(y_1 + \cdots + y_4)$. A conservative upper bound on the number of variables introduced in this manner is $4\cdot 4\cdot 100$, and combining them with the original variables gives a sound estimate of $2200$.
\end{proof}
\bibliographystyle{plain}
\bibliography{main}

\end{document}